\renewcommand{\thefootnote}{}
\newtheorem{remark}{Remark}
\newtheorem{theorem}{Theorem}
\newtheorem{proposition}{Proposition}
\def\be{\begin{equation}}
\def\ee{\end{equation}}
\def\bee{\begin{eqnarray}}
\def\ene{\end{eqnarray}}
\def\bes{\begin{subequations}}
\def\ees{\end{subequations}}
\def\d{\displaystyle}
\def\v{\vspace{0.05in}}
\def\no{{\nonumber}}
\newtheorem{RH}{Riemann-Hilbert Problem}
\newcommand{\x}{\hat{x}}
\begin{document}

\baselineskip=13pt
\renewcommand {\thefootnote}{\dag}
\renewcommand {\thefootnote}{\ddag}
\renewcommand {\thefootnote}{ }

\pagestyle{plain}

\begin{center}
\baselineskip=16pt \leftline{} \vspace{-.3in} {\Large \bf Rigorous analysis of large-space and long-time asymptotics for the short-pulse soliton gases} \\[0.2in]
\end{center}

\begin{center} \small
{\bf Guoqiang Zhang$^{1}$, Weifang Weng$^2$, and Zhenya Yan}$^{1,3,*}$\footnote{$^{*}${\it Email address}: zyyan@mmrc.iss.ac.cn (Corresponding author)}  \\[0.1in]
{$^1${\footnotesize KLMM,  Academy of Mathematics and Systems Science,  Chinese Academy of Sciences, Beijing 100190, China}}\\
{$^2${\footnotesize School of Mathematical Sciences,University of Electronic Science and Technology of China, Chengdu 611731, China}}\\
{$^3${\footnotesize School of Mathematical Sciences, University of Chinese Academy of Sciences, Beijing 100049, China}} \\[0.18in]
\end{center}


\noindent {\bf Abstract:}\, {\small
We rigorously analyze the asymptotics of soliton gases to the short-pulse equation. The soliton gas is formulated in terms of a Riemann-Hilbert problem, which is derived from the Riemann-Hilbert problems of the $N$-soliton solutions in the limit as $N \to \infty$. Building on prior work in the study of the KdV soliton gas and orthogonal polynomials with Jacobi-type weights, we extend the reflection coefficient to two generalized forms on the interval $\left[\eta_1, \eta_2\right]$:
\begin{itemize}
    \item $r_0(\lambda) = \left(\lambda - \eta_1\right)^{\beta_1}\left(\eta_2 - \lambda\right)^{\beta_2}|\lambda - \eta_0|^{\beta_0}\gamma(\lambda)$,
    \item $r_c(\lambda) = \left(\lambda - \eta_1\right)^{\beta_1}\left(\eta_2 - \lambda\right)^{\beta_2}\chi_c(\lambda)\gamma(\lambda)$,
\end{itemize}
where $0 < \eta_1 < \eta_0 < \eta_2$ and $\beta_j > -1$ for $j = 0, 1, 2$. The function $\gamma(\lambda)$ is continuous and positive on $\left[\eta_1, \eta_2\right]$, with an analytic extension to a neighborhood of this interval. The step function $\chi_c(\lambda)$ is defined as $\chi_c(\lambda) = 1$ for $\lambda \in \left[\eta_1, \eta_0\right)$ and $\chi_c(\lambda) = c^2$ for $\lambda \in \left(\eta_0, \eta_2\right]$, where $c$ is a positive constant with $c \neq 1$.
The asymptotic analysis is performed using the steepest descent method introduced by Deift and Zhou. A key aspect of the analysis is the construction of the $g$-function, which plays a critical role in the conjugation procedure. Notably, as a negative flow of the WKI system, the short pulse equation requires a distinct $g$-function construction compared to other integrable systems. To address the singularity at the origin, we introduce an innovative piecewise definition of the $g$-function.
In order to establish the order of the error term, we construct local parametrices near the endpoints $\eta_j$ for $j = 1, 2$, and the singularity at $\eta_0$. At the endpoints, we employ the Airy parametrix and the first type of modified Bessel parametrix. At the singularity $\eta_0$, we use the second type of modified Bessel parametrix for $r_0$ and the confluent hypergeometric parametrix for $r_c(\lambda)$.}



\vspace{0.1in}
{\small \noindent {\it MSC 2020:} 35Q15; 35Q51; 35P25; 37K40

\vspace{0.05in}
 \noindent {\it Keyword:} Short-pulse equation, Riemann-Hilbert problems, Deift-Zhou steepest descent method, g-function, Soliton gas, asymptotics }






\vspace{0.2in}

\begin{spacing}{1.05}
\tableofcontents
\end{spacing}


\baselineskip=15pt


\section{Introduction and statement of results}
\label{}

\subsection{Introduction}

Since the concept of `soliton' was coined by Zabusky and Kruskal~\cite{soliton} in 1965, multi-solitons has found been been found in many integrable nonlinear partial differential equations via some powerful approaches, such as the inverse scattering transform (IST)~\cite{GGKM}, Hirota bilinear method~\cite{35}, Darboux transform~\cite{DT, Gu}, and etc.. In particular, the study of interactions of multi-soliton solutions play a significant role in the filed of soliton theory and mathematical physics~\cite{soliton-book91}. In 1971, Zakharov~\cite{1} first proposed the concept of  {\it soliton gas} (i.e., the $N \to \infty$ limit behavior of the $N$-soliton solutions), which pertains to an infinite collection of weakly interacting solitons within the framework of the Korteweg-de Vries (KdV) equation, modeled through a kinetic equation governing the spectral distribution functions of this ensemble. Zakharov's initial formulation, based on a rarefied soliton gas, was later expanded to account for denser soliton gases in the KdV framework \cite{2}. This generalization involved developing a spectral theory based on the thermodynamic limit of finite-gap solutions and extended to models of soliton gas behavior in both the KdV equation \cite{3} and breather gases for the focusing nonlinear Schr\"odinger (NLS) equation \cite{4}. More recent work also applied these methods to bidirectional dispersive hydrodynamics within the defocusing NLS framework \cite{5}. Now soliton gas becomes a highly significant field of mathematical physics and integrable systems.

In addition to spectral theory, the study of soliton gases has spurred significant interest in other integrable nonlinear wave systems, including their integrable reductions \cite{6}, hydrodynamic reductions \cite{7}, and minimal energy configurations \cite{8}, as well as investigations into classical integrability in hydrodynamic models \cite{9}. Numerical studies have further suggested that the nonlinear phase of spontaneous modulational instability (MI)~\cite{10} and the emergence of rogue waves \cite{11,12} can be fundamentally attributed to soliton gas dynamics.

In recent years, asymptotic analyses of soliton gases have gained increasing momentum, particularly through the use of Riemann-Hilbert (RH) technique. The RH problem was originally introduced by Zakharov and Shabat~\cite{z1973} in the study of solitons of the NLS equation. The dressing method with a corresponding RH problem has been employed to describe the KdV soliton gas, notably involving two reflection coefficients in the jump conditions~\cite{13}. The asymptotics of this model was thoroughly examined when only one reflection coefficient was considered via the Deift-Zhou steepest descent method~\cite{14,15,16,17}. This idea was extended to study the interaction between a soliton gas and a large soliton in the modified KdV equation \cite{18}, yielding key insights into local wave properties, including phase shifts, soliton peak locations, and the soliton’s average velocity as governed by a kinetic equation. Moreover, soliton gases have been extended beyond real spectral domains to the focusing NLS equation in bounded domains, uncovering the intriguing phenomenon of soliton shielding \cite{19}. Despite this progress, there remain significant open problems in the field, as highlighted by a recent review paper \cite{20}. One such problem is the rigorous asymptotic analysis of soliton gases in other integrable systems, such as Camassa-Holm type equations, which serves as the primary motivation for this paper.

Inspired by the advancements in \cite{18,21}, this paper aims to tackle this problem.
The focus of our study is the nonlinear short pulse (SP) equation, given by
\begin{gather} \label{short-pulse}
u_{xt}=u+\frac{1}{6}\left(u^3\right)_{xx},\qquad u=u(x,t)\in\mathbb{R}[x,t],
\end{gather}
which was presented to describe the wave propagation of ultra-short  light pulses in silica optical fibers~\cite{sp}, and differs from the usual NLS model modelling the pulse propagation in optical fibers~\cite{Hase}.
The SP equation (\ref{short-pulse}) can be regarded as the short-wave approximation $\epsilon\to 0$ with the scaling transforms ($x\to \epsilon x,\, t\to \epsilon^{-1}t,\, u\to \epsilon u,$) of the modified Camassa-Holm equation~\cite{mch-1,mch-2,mch-3}
\begin{equation}
m_t+u_x+\left((u^2-u^2_x)m \right) _x=0,\hspace{0.5cm} m=u-u_{xx}.
\end{equation}
Many properties of the SP equation (\ref{short-pulse}) have been studied. For example, the SP equation (\ref{short-pulse}) was shown to possess a Wadati-Konno-Ichikawa (WKI) type Lax pair~\cite{lax-SS}:
\bee \label{lax}
\left\{\begin{array}{l}
\Psi_x=U^X\Psi, \qquad U^X=\lambda \left(\sigma_3+u_x\sigma_1\right), \v\\
\Psi_t=U^T\Psi, \qquad U^T=\dfrac{\sigma_3}{4\lambda}+\dfrac{1}{2}\left(u^2U^X+u\sigma_1\sigma_3\right).
 \end{array}\right.
 \ene
where $\Psi=\Psi(x,t;\lambda)$ is a $2\times 2$ matrix-valued function, $\lambda\in\mathbb{C}$ is an iso-spectral parameter, and the three Pauli matrices are defined as
\begin{gather}\label{sigma}
\sigma_1 =
\begin{pmatrix}
0 & 1 \\
1 & 0
\end{pmatrix}, \quad
\sigma_2 =
\begin{pmatrix}
0 & -\mathrm{i} \\
\mathrm{i} & 0
\end{pmatrix}, \quad
\sigma_3 =
\begin{pmatrix}
1 & 0 \\
0 & -1
\end{pmatrix}.
\end{gather}
The bi-Hamiltonian structures and local and nonlocal conversation laws were found for a hierarchy of the SP equation (\ref{short-pulse})~\cite{Brun1,Brun2}. The SP equation (\ref{short-pulse}) was shown to be changed into the sine-Gordon equation
\bee
 u_{\xi\tau}=\sin u,
\ene
 and the coupled integrable dispersionless equation~\cite{ko}
\bee
 u_{ys}=2\rho u,\quad \rho_y+2uu_y=0
\ene
via the hodograph transform and other transforms~\cite{lax-SS,SS-1}. Moreover, some types of exact soliton solutions were found for the SP equation (\ref{short-pulse}), including multi-soliton, multi-loop, multi-breather, and periodic solutions~\cite{SS-1, Mat-1,Mat-2,tang24}. The well-posedness of the Cauchy problem and wave breaking were investigated for the SP equation (\ref{short-pulse})~\cite{cau-1,cau-2,wb}.  The RH problem was presented for the SP equation (\ref{short-pulse}) to study the
long-time asymptotic behaviors of its initial value problem (the initial value decays sufficiently fast as $|x|\to \infty$)~\cite{22} via the modified Deift-Zhou nonlinear steepest descent method~\cite{lenells17,14}. Recently, the long-time asymptotic behaviors of the SP equation (\ref{short-pulse}) with other types of initial values belonging to the Schwartz space~\cite{xu18} or the weighted Sobolev space~\cite{fan21,li23} via the Deift-Zhou method~\cite{14} or the $\bar\partial$ steepest descent method~\cite{dbar1,dbar2}. The B\"acklund transform was presented for the  SP equation (\ref{short-pulse})~\cite{BT}.
Some integrable extensions of the SP equation (\ref{short-pulse}) were also presented, such as the complex SP equation, positive flow SP equation, nonlocal SP equation, multi-component SP equations, and discretization version~\cite{gsp,gsp1,gsp2,gsp3,gsp4,gsp5,gsp6,gsp7,gsp8,dis}.

Using the WKI-type Lax pair~\eqref{lax}, a sequence of $N$-soliton solutions of the  SP equation (\ref{short-pulse}) can be derived via the inverse scattering transform, linked to a RH problem~\cite{22}.

\begin{RH}\label{RH-1} Let $M^N=M^N(\lambda; \hat{x}, t)$ denotes the solution to the following RH problem:

\begin{itemize}

\item{} The function $M^N$ is analytic except for discrete spectral points $\pm \lambda_j$, for $j=1, 2, \cdots, N$, where $\lambda_j\in\left(\eta_1, \eta_2\right)$.

\item{} The normalization at infinity is given by $M^N\to \mathbb{I}_2$ as $\lambda\to\infty$.

 \item{} The discrete spectra $\pm \lambda_j$ are simple poles, and the following residue conditions hold
\begin{gather}
\begin{aligned}
\mathop\mathrm{Res}\limits_{\lambda=\lambda_j}M^N(\lambda; \hat{x}, t)&=\lim_{\lambda\to\lambda_j}M^N(\lambda; \hat{x}, t)\mathcal{L}^{t\theta}\left[c_j\right],\\
\mathop\mathrm{Res}\limits_{\lambda=-\lambda_j}M^N(\lambda; \hat{x}, t)&=\lim_{\lambda\to -\lambda_j}M^N(\lambda; \hat{x}, t)\mathcal{U}^{t\theta}\left[c_j\right],
\end{aligned}
\end{gather}
where $c_j$ is a negative real number and the terms $\mathcal{L}^{t\theta}\left[c_j\right]$ and $\mathcal{U}^{t\theta}\left[c_j\right] $ are defined in \eqref{LU}, and the phase $\theta=\theta(\lambda; \hat{x}, t)$ is given by
\begin{gather}
\theta=\frac{\xi\lambda+\lambda^{-1}}{4}, \quad \xi=\frac{4\hat{x}}{t}
\end{gather}
with $\hat{x}$ being related to $x$ through the reciprocal transformation
\begin{gather}
\hat{x}=x-\int_x^{+\infty} \left(\sqrt{u_y^2(y, t)+1}-1\right)\mathrm{d} y.
\end{gather}
\end{itemize}
\end{RH}

Therefore, the $N$-soliton solution sequence $u_N(x, t)$ of the  SP equation (\ref{short-pulse}) can be recovered from $M^N$ by the following relations
\bee\label{uN}
\begin{array}{rl}
u_N(x, t)= &\lim\limits_{\lambda\to 0}\lambda^{-1}\left(M^N(0; \x, t)^{-1}M^N(\lambda; \x, t)\right)_{1, 2}, \v\\
x=& \x+\lim\limits_{\lambda\to 0}\lambda^{-1}\left(\left(M^N(0; \x, t)^{-1}M^N(\lambda; \x, t)\right)_{1, 1}-1\right).
\end{array}
\ene

In the work of Girotti {\it et al}~\cite{21}, certain assumptions were imposed on the reflection coefficient to guarantee well-defined asymptotic behavior in the solution of the associated RH problem related to the KdV soliton gases. Specifically, the reflection coefficient was assumed to: 
\begin{itemize}

\item [(i)] be continuous and strictly positive on the interval $\lambda \in [\eta_1, \eta_2]$; 

\item[(ii)] admit an analytic extension to a neighborhood of this interval; 

\item[(iii)] exhibit symmetry, taking the same values over $[\eta_1, \eta_2]$ and $[-\eta_2, -\eta_1]$.
 \end{itemize}
 These conditions lead to a solution
$Y$ for the RH problem, characterized by logarithmic singularities at the endpoints $\eta_1$ and $\eta_2$, allowing the construction of local parametrices near these points in terms of modified Bessel functions of index 0.

A different scenario was explored for the mKdV soliton gases in \cite{18}, where the reflection coefficient near the endpoints behaves as $\left|\lambda\mp\eta_j\right|^{\pm1/2}$. In this case, no local parametrix constructions are required near the endpoints $\lambda=\pm\eta_j$, as the outer parametrix alone suffices to capture the solution's behavior. The outer parametrix, originally developed to analyze long-time asymptotics for the KdV equation with step-like initial data \cite{23}, was first introduced in \cite{24} for the asymptotic study of orthogonal polynomials associated with Hermitian matrix models. This approach has since been adapted to various integrable systems, such as the long-time asymptotics of the NLS shock problem \cite{25}, the modified KdV equation \cite{26}, and the NLS equation with nonzero boundary conditions \cite{27,28}.

Note that the reflection coefficients play an important role in the RH problems related to the integrable systems, and are similar to the Jacobi weights of orthogonal polynomials~\cite{op1,op2,op3}. In this paper, based on these ideas, we would like to consider two generalized reflection coefficients for the SP soliton gases, and analyze their associated asymptotic behaviors. In the first case, the reflection coefficient is generalized as
\begin{gather}\label{r0}
r_0(\lambda)=\left(\lambda-\eta_1\right)^{\beta_1}\left(\eta_2-\lambda\right)^{\beta_2}\left|\lambda-\eta_0\right|^{\beta_0}\gamma\left(\lambda\right),
\end{gather}
where $\gamma(\lambda)$, consistent with the setting in \cite{21}, is continuous and strictly positive for $\left[\eta_1, \eta_2\right]$ and analytic in a neighborhood of this interval. Unlike the reflection coefficient in \cite{21}, which remains positive at the endpoints, the generalized coefficient $r_0$ exhibits zeros and singularities at both $\eta_1$ and $\eta_2$,  and also at an interior point $\eta_0\in\left(\eta_1, \eta_2\right)$. This modification introduces more intricate behaviors due to the singularity at $\eta_0$, which is absent in the original formulation. The absolute value ensures positivity in the subintervals $\left(\eta_1, \eta_0\right)$ and $\left(\eta_0, \eta_2\right)$.
The second generalized reflection coefficient is defined as
\begin{gather}\label{rc}
r_c(\lambda)=\left(\lambda-\eta_1\right)^{\beta_1}\left(\eta_2-\lambda\right)^{\beta_2}\chi_c\left(\lambda\right)\gamma\left(\lambda\right),
\end{gather}
where $r_c$  retains the singular behaviors at $\eta_1$, $\eta_2$, and $\eta_0$, but introduces a jump discontinuity at $\eta_0$ via the step function  $\chi_c(\lambda)$. Specifically, $\chi_c\left(\lambda\right)=1$ for $\lambda\in\left[\eta_1, \eta_0\right)$ and $\chi_c\left(\lambda\right)=c^2$ for $\lambda\in\left(\eta_0, \eta_2\right]$, where $c$ is a positive constant with $c\ne1$. The exponents $\beta_1, \beta_2$ and $\beta_0$  are constrained by
 $\beta_1, \beta_2, \beta_0>-1$, a condition that ensures the convergence of certain integrals, as discussed later.

By employing an adaptation of the interpolation method that transforms poles into jump conditions, as introduced by Deift {\it et al}~\cite{29} in the context of the Toda rarefaction problem, we construct a sequence of $N$-soliton solutions, denoted by $u_N$, with $N=N_1+N_2$.

\begin{RH} \label{RH-2}
For simplicity, we use $M^N$ to represent the solution to the following deformed Riemann-Hilbert problem:

\begin{itemize}

\item{} The matrix-valued function $M^N$is analytic in $\lambda$ except on two non-intersecting closed contours, $\Gamma_+$ and $\Gamma_-$, which encircle the intervals $\left[\eta_1, \eta_2\right]$ and $\left[-\eta_2, -\eta_1\right]$ respectively, in a counterclockwise direction.

\item{} At infinity, $M^N_\pm$ normalizes to the identity matrix: $M^N\to\mathbb{I}_2$ as $\lambda\to \infty$.

 \item{} The jump conditions on the contours $\Gamma_\pm$ are given by
\begin{gather}\label{MN}
M^N_+=M^N_-
\begin{cases}
\mathcal{L}^{t\theta}\left[-\d\sum_{j=1}^{N_1}\dfrac{\left(\eta_0-\eta_1\right)\,r\left(\lambda_{1, j}\right)}{2N_1\pi\,\left(\lambda-\lambda_{1, j}\right)}
-\d\sum_{j=1}^{N_2}\dfrac{\left(\eta_2-\eta_0\right)\,r\left(\lambda_{2, j}\right)}{2N_2\pi\,\left(\lambda-\lambda_{2, j}\right)}
\right], & \lambda\in\Gamma_+, \\[2em]
\mathcal{U}^{t\theta}\left[-\d\sum_{j=1}^{N_1}\dfrac{\left(\eta_0-\eta_1\right)\,r\left(-\lambda_{1, j}\right)}{2N_1\pi\,\left(\lambda+\lambda_{1, j}\right)}
-\d\sum_{j=1}^{N_2}\dfrac{\left(\eta_2-\eta_0\right)\,r\left(-\lambda_{2, j}\right)}{2N_2\pi\,\left(\lambda+\lambda_{2, j}\right)}\right],  & \lambda\in\Gamma_-.
\end{cases}
\end{gather}
\end{itemize}
\end{RH}
Here, the discrete spectral points $\lambda_{1, j}\in\left(\eta_1, \eta_0\right)$ and $\lambda_{2, j}\in\left(\eta_0, \eta_2\right)$ are equally spaced as  $\lambda_{1, j}=\eta_1+j(\eta_0-\eta_1)/(N_1+1)$ for $j=1, 2, \cdots N_1$ and $\lambda_{2, j}=\eta_0+j(\eta_2-\eta_0)/(N_2+1)$ for $j=1, 2, \cdots N_2$.
This arrangement leads to a distribution of discrete spectral points within the interval $\left(\eta_1, \eta_2\right)$, where the reflection coefficient may exhibit singularities of order $\beta_j$ at the endpoints.
In contrast to \cite{21}, where the singularity at $\eta_0$ was absent, the modified setup here allows more complex local behaviors near $\eta_0$, including the order $\beta_0$ singularity in $r_0$ and the jump discontinuity in $r_c$. For both generalized reflection coefficients $r_0$ and $r_c$, the values on the interval $\left(-\eta_2, -\eta_1\right)$ are determined by the symmetry $r(\lambda)=r(-\lambda)$.

The limit process has proven to be a highly effective method for deriving novel nonlinear wave solutions that are otherwise inaccessible through conventional direct approaches within the theory of integrable systems. For instance, an infinite-order rogue wave solution was constructed in \cite{30} by applying the limit to a sequence of Riemann-Hilbert problems corresponding to $N$th-order rogue waves, utilizing the inverse scattering transform framework \cite{31}. This technique succeeds in situations where methods such as the Darboux transformation \cite{32,33,34}, the Hirota bilinear method \cite{35,36}, and the inverse scattering transform itself \cite{37,38} fail to yield explicit solutions. A similar limit procedure has been employed to derive infinite-order soliton solutions for the focusing NLS equation through the RH analysis \cite{39}. In this context, a suitable rescaling transform plays a pivotal role in ensuring the convergence of the jump matrices sequence, and it also allows the study of large-$N$ asymptotics of $N$-soliton solutions with initial data of the form $N\mathrm{sech}(x)$ to be recast as a semiclassical limit problem \cite{40}. By contrast, in \cite{21}, the key step involves carefully selecting norming constants, enabling the limit to converge to a well-defined Riemann integral.

In the present paper, we adopt a similar approach, where the norming constants are taken as discretizations of generalized reflection coefficients $r$. The limit, in this case, converges not only to definite integrals but also to improper integrals. Alternatively, this limit procedure may be viewed as replacing the reflection coefficient with its semiclassical approximation, a perspective aligned with the results in \cite{41,42}.
By taking the limits $N_1\to\infty$ and $N_2\to\infty$, we derive a RH problem characterizing the soliton gas.

\begin{RH} \label{RH-3} The resulting $2\times 2$ matrix solution, denoted by $M^\infty=M^\infty(\lambda; \x, t)$, satisfies the following properties:

\begin{itemize}

\item{} $M^\infty$ is analytic in $\lambda$ for $\lambda\in\mathbb{C}\setminus\left(\Gamma_+\cup\Gamma_-\right)$;

 \item{} $M^\infty$ normalizes as $M^\infty\to \mathbb{I}2$ as $\lambda\to\infty$;

 \item{} The continuous boundary values of $M^\infty$ are related by the jump conditions
\begin{gather}\label{Minfty}
M^\infty_+=M^\infty_-
\begin{cases}
\mathcal{L}^{t\theta}\left[\mathrm{i}\left(\mathcal{P}_1+\mathcal{P}_2\right)
\right], &\mathrm{for}\,\,\, \lambda\in\Gamma_+,\\[0.5em]
\mathcal{U}^{t\theta}\left[\mathrm{i}\left(\mathcal{P}_{-1}+\mathcal{P}_{-2}\right)\right],  &\mathrm{for}\,\,\, \lambda\in\Gamma_-,
\end{cases}
\end{gather}
where the terms $\mathcal{P}_1, \mathcal{P}_2, \mathcal{P}_{-1}, \mathcal{P}_{-2}$ are defined as
$\mathcal{P}_1=\int_{\eta_1}^{\eta_0}\frac{r(s)}{s-\lambda}\mathrm{d}s$, $\mathcal{P}_2=\int_{\eta_0}^{\eta_2}\frac{r(s)}{s-\lambda}\mathrm{d}s$, $\mathcal{P}_{-1}=\int_{-\eta_0}^{-\eta_1}\frac{r(s)}{s-\lambda}\mathrm{d}s$, and $\mathcal{P}_{-2}=\int_{-\eta_2}^{-\eta_0}\frac{r(s)}{s-\lambda}\mathrm{d}s$.
\end{itemize}
\end{RH}

Following the strategy of Zhou's vanishing lemma, as implemented in \cite{18} for RH problem 7, the existence and uniqueness of the solution $M^\infty$ can be similarly established. Moreover, the soliton gas solution $u(x, t)$ of the SP equation can be recovered from $M^\infty$ through the formulas:
\begin{gather}
\begin{array}{rl}
u(x, t)=& \lim\limits_{\lambda\to 0}\lambda^{-1}\left(M^\infty(0; \x, t)^{-1}M^\infty(\lambda; \x, t)\right)_{1, 2},\\[1em]
x=& \x+\lim\limits_{\lambda\to 0}\lambda^{-1}\left(\left(M^\infty(0; \x, t)^{-1}M^\infty(\lambda; \x, t)\right)_{1, 1}-1\right).
\end{array}
\end{gather}
In this limit process, it is crucial to ensure the validity of the jump matrices as we transition from \eqref{MN} to \eqref{Minfty}. This holds if the parameters $\beta_1$, $\beta_2$, and $\beta_0$ associated with the reflection coefficient satisfy the conditions $\beta_1, \beta_2, \beta_0 > -1$. To clarify this, two cases are considered. First, when $\beta_1, \beta_2, \beta_0 \geq 0$, the validity follows immediately from the definition of the definite Riemann integral. Second, if any of $\beta_1$, $\beta_2$, or $\beta_0$ lies in the interval $(-1, 0)$, improper integrals of the second kind arise due to the presence of singularities, and hence the convergence cannot be deduced from the standard theory of definite integrals. Nevertheless, it can still be shown that the Riemann sum converges to the corresponding improper integrals. This proof relies on basic calculus and can be carried out using the principles of monotonicity and uniform continuity.

\subsection{Statement  of main results}

A series of transformations, denoted as $Y \mapsto T \mapsto S \mapsto E$, are performed to ensure that the matrix $E$ is normalized to the identity matrix $\mathbb{I}_2$ at infinity, and that its jump matrices exhibit uniform exponential decay towards the identity. These transformations are based on the approach developed by Deift, Kriecherbauer, McLaughlin, Venakides, and Zhou~\cite{24,43}, who employed this technique to study the asymptotics of orthogonal polynomials with exponential weights, following the RH formulation introduced by Fokas, Its, and Kitaev \cite{44,45}.
The Deift-Zhou steepest descent method for orthogonal polynomials has been applied to a wide range of weight functions, including logarithmic weights \cite{46,47}, Freud weights \cite{48}, and Laguerre polynomials \cite{49,50,op3,52}. Applications have also extended to measures supported on the plane \cite{53}, Jacobi weights \cite{54}, modified Jacobi weights \cite{op2,56}, and discontinuous Gaussian weights \cite{57}, among others. Beyond orthogonal polynomials, this technique has found utility in various related problems, such as analyzing the distribution of the length of the longest increasing subsequence in random permutations \cite{15}, and investigating the asymptotics of discrete holomorphic maps \cite{58}.

In this paper, we would like to use the Deift-Zhou nonlinear steepest descent method to derive the rigorous asymptotics for soliton gases in the SP equation, using two types of generalized reflection coefficients, $r_0$ and $r_c$, across different regions of interest: $\left(-\infty, \xi_\mathrm{crit}\right)$, $\left(\xi_\mathrm{crit}, \xi_0\right)$, $\left(\xi_0, -\eta_2^{-2}\right)$, and $\left(-\eta_2^{-2}, +\infty\right)$. The critical values $\xi_\mathrm{crit}$ and $\xi_0$ are uniquely determined by the following expressions:
\begin{gather}
\xi_\mathrm{crit} = -\eta_2^{-2} \, W\left(\frac{\eta_1}{\eta_2}\right), \quad \xi_0 = -\eta_2^{-2} \, W\left(\frac{\eta_0}{\eta_2}\right),
\end{gather}
where $W: m \mapsto W(m)$ is defined as:
\begin{gather}
W(m) = \frac{eE(m) / eK(m)}{m \left(m^2 - 1 + eE(m) / eK(m)\right)},
\end{gather}
with $eE$ and $eK$ representing the complete elliptic integrals, defined in \eqref{elliptic12}.

The function $W$ plays a key role in the construction of the so-called $g$-function, which is used in the conjugation process to ensure exponential decay along the lenses. As a result, the Airy parametrix provides an accurate local approximation around $\lambda = \alpha$. The value of $\alpha$ is determined through the Whitham evolution equation \cite{59}, which is given by:
\begin{gather}\label{Whitham}
\xi = -\eta_2^{-2} \, W\left(\frac{\alpha}{\eta_2}\right).
\end{gather}

The main results of this paper are summarized as follows.

\begin{theorem}[Large-$\x$ asymptotics for the initial value $u(x, 0)$ of the soliton gas]\label{large-x}
For these two types of generalized reflection coefficients $r=r_0$ and $r=r_c$, the large-$\x$ asymptotics for the initial value
$u(x, 0)$ of  the soliton gas are established as follows.

\begin{enumerate}[0.]
\item[\rm{\textbullet}]  For $\beta_2, \beta_1, \beta_0\ge 0$, in the limit of $\x\to+\infty$, there exists a positive constant $\mu_0$ such that
\begin{gather}\label{large-x-right}
\begin{array}{rl}
u(x, 0)=&\mathcal{O}\left(\mathrm{e}^{-\mu_0 \x}\right),\\[0.5em]
x=&\x+\mathcal{O}\left(\mathrm{e}^{-\mu_0 \x}\right).
\end{array}
\end{gather}
\item[\rm{\textbullet}] For $\beta_2, \beta_1, \beta_0>-1$, in the limit of $\x\to-\infty$,  we have
\bee\label{large-x-left}
\begin{array}{rl}
u(x, 0)=&\mathrm{e}^{\Delta_1^0}\left(\Psi_1^0\tilde{\Phi}_1^0+\Phi_1^0\tilde{\Psi}_1^0\right)\Theta_1^0
+\mathcal{O}\left(\dfrac{1}{\left|\x\right|}\right),        \\[1em]
x=&\x+\left(\Psi_1^0\tilde{\Psi}_1^0-\Phi_1^0\tilde{\Phi}_1^0\right)\Theta_1^0+\tilde{\mathcal{X}}_1-\mathcal{X}_1^0
+\mathcal{O}\left(\dfrac{1}{\left|\x\right|}\right),
\end{array}
\ene
where
\begin{gather}
\Psi_1^0=\frac{\vartheta_3\left(\frac{\tau_1}{2}-\frac{\Delta_1^0}{2\pi \mathrm{i}}; \tau_1\right)}{\vartheta_3\left(\frac{\tau_1}{2}; \tau_1\right)}, \quad
\Theta_1^0=\frac{\vartheta_3^2\left(0; \tau_1\right)}{\vartheta_3^2\left(\frac{\Delta_1^0}{2\pi\mathrm{i}}; \tau_1\right)},
\\[0.5em]
\tilde{\Psi}_1^0=\frac{1}{4\eta_1 eK(m_1)}\left(
\frac{\vartheta_3\left(\frac{\tau_1}{2}-\frac{\Delta_1^0}{2\pi \mathrm{i}}; \tau_1\right)\vartheta_3'\left(\frac{\tau_1}{2}; \tau_1\right)}{\vartheta_3^2\left(\frac{\tau_1}{2}; \tau_1\right)}-\frac{\vartheta_3'\left(\frac{\tau_1}{2}-\frac{\Delta_1^0}{2\pi \mathrm{i}}; \tau_1\right)}{\vartheta_3\left(\frac{\tau_1}{2}; \tau_1\right)}
\right),   \\[0.5em]
\Phi_1^0=2\left(m_1-1\right)eK(m_1)\frac{\vartheta_3\left(\frac{1+\tau_1}{2}+\frac{\Delta_1^0}{2\pi\mathrm{i}}; \tau_1\right)}{\vartheta_3'\left(\frac{1+\tau_1}{2}; \tau_1\right)},        \\[0.5em]
\tilde{\Phi}_1^0=\frac{m_1-1}{2\eta_1}\left(
\frac{\vartheta_3''\left(\frac{1+\tau_1}{2}; \tau_1\right)\vartheta_3\left(\frac{1+\tau_1}{2}+\frac{\Delta_1^0}{2\pi\mathrm{i}}; \tau_1\right)}{\vartheta_3'\left(\frac{1+\tau_1}{2}; \tau_1\right)^2}-\frac{\vartheta_3'\left(\frac{1+\tau_1}{2}+\frac{\Delta_1^0}{2\pi\mathrm{i}}; \tau_1\right)}{\vartheta_3'\left(\frac{1+\tau_1}{2}; \tau_1\right)}
\right),                            \\[0.5em]
\mathcal{X}_1^0=\left(1+\frac{1}{m_1}\left(\frac{eE(m_1)}{eK(m_1)}-1\right)\right) \x, \\[0.5em]
\tilde{\mathcal{X}}_1=\frac{\Omega_1\phi_1}{\pi \mathrm{i}\eta_1}\left(eE(m_1)-eK(m_1)\right)+\frac{\eta_1\eta_2}{\pi}\int_{\eta_1}^{\eta_2}\frac{\log r(s)}{s^2\sqrt{\left(\eta_2^2-s^2\right)\left(s^2-\eta_1^2\right)}}\mathrm{d}s,
\end{gather}
with
$m_1=\eta_1/\eta_2$,
$\tau_1=\mathrm{i}\,eK\left(\sqrt{1-m_1^2}\right)\left/\right.2eK\left(m_1\right)$,
$\Omega_1=-\pi \mathrm{i}\eta_2\left/\right.eK\left(m_1\right)$,
\begin{gather}
\Delta^0_1=\Omega_1(\x+\phi_1), \quad
\phi_1=-\int_{\eta_1}^{\eta_2}\frac{\log r(s)}{\sqrt{\left(s^2-\eta_1^2\right)\left(\eta_2^2-s^2\right)}}\frac{\mathrm{d}s}{\pi}.
\end{gather}

\end{enumerate}
\end{theorem}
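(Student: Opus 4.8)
The plan is to run the Deift--Zhou nonlinear steepest descent analysis of RH problem \ref{RH-3} for $M^\infty$ specialized to $t=0$, with $\hat{x}$ playing the role of the large parameter. At $t=0$ the phase collapses to $t\theta=\hat{x}\lambda$, so the oscillatory factor entering $\mathcal{L}^{t\theta}$ and $\mathcal{U}^{t\theta}$ on $\Gamma_\pm$ is governed entirely by $\hat{x}\lambda$, and the sign of $\hat{x}$ decides which regime of the statement occurs. The central object is the $g$-function. Because the SP equation is a negative flow of the WKI hierarchy, $g$ must be built from a differential carrying a pole at the origin; I would therefore adopt the piecewise definition announced in the introduction so that the would-be singularity of $g$ at $\lambda=0$ is cancelled while the correct normalization at infinity is preserved. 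In the general framework the band endpoint $\alpha$ is pinned by the Whitham relation \eqref{Whitham} and an Airy parametrix is seated at $\alpha$; for the initial-value problem this machinery specializes so that as $\hat{x}\to-\infty$ the band saturates the whole interval ($\alpha\to\eta_1$), giving a genus-one (elliptic) configuration on $[\eta_1,\eta_2]\cup[-\eta_2,-\eta_1]$, whereas as $\hat{x}\to+\infty$ the band is empty and the entire cut lies in a gap.

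Consider first $\hat{x}\to+\infty$. Here I would choose the $g$-function adapted to the empty-band configuration, perform the conjugation $Y\mapsto T$ by $e^{\mp\hat{x}g\sigma_3}$ (schematically) to normalize at infinity, and then open lenses $T\mapsto S$ around the cut. Because no band survives, the sign of the diagonal exponent makes every resulting jump exponentially close to $\mathbb{I}_2$ in $\hat{x}$, so the small-norm RH theory immediately yields $M^\infty=\mathbb{I}_2+\mathcal{O}(\mathrm{e}^{-\mu_0\hat{x}})$ for some $\mu_0>0$. Substituting this into the $\lambda\to0$ recovery formulas for $u(x,0)$ and $x$ produces \eqref{large-x-right} with no further work, the constant $\mu_0$ being dictated by the distance from the origin to the poles of the resolvent of the lens geometry.

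The substantive case is $\hat{x}\to-\infty$. After the full chain $Y\mapsto T\mapsto S\mapsto E$, the outer model problem is posed on the elliptic Riemann surface with branch points $\pm\eta_1,\pm\eta_2$; solving it in terms of Jacobi theta functions produces precisely the $\vartheta_3$ factors with modulus $\tau_1=\mathrm{i}\,eK(\sqrt{1-m_1^2})/2eK(m_1)$ and the elliptic data $m_1=\eta_1/\eta_2$, $eE(m_1)$, $eK(m_1)$ appearing in $\Psi_1^0,\Phi_1^0,\tilde\Psi_1^0,\tilde\Phi_1^0$. At the hard endpoints $\eta_1,\eta_2$ I would paste the first type of modified Bessel parametrix with indices $\beta_1,\beta_2$, and near the interior point $\eta_0$ the second modified Bessel parametrix (index $\beta_0$) for $r_0$, or the confluent hypergeometric parametrix for $r_c$ to accommodate the $c\neq1$ jump. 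Setting $E=SP^{-1}$ with $P$ the combined outer/local parametrix, I would check the matching on the parametrix boundaries and conclude that $E$ solves a small-norm problem with jump $\mathbb{I}_2+\mathcal{O}(1/|\hat{x}|)$, hence $E=\mathbb{I}_2+\mathcal{O}(1/|\hat{x}|)$ uniformly.

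Finally I would invert the transformations and expand $M^\infty(\lambda;\hat{x},0)$ as $\lambda\to0$, inserting the result into the recovery formulas. The piecewise choice of $g$ near the origin is decisive here: the value $M^\infty(0)$ and the leading $\lambda$-coefficient acquire exactly the phase $\Delta_1^0=\Omega_1(\hat{x}+\phi_1)$ and the terms $\mathcal{X}_1^0$, $\tilde{\mathcal{X}}_1$, while assembling the outer-parametrix entries at $\lambda=0$ gives the bilinear $\vartheta_3$ combination $\mathrm{e}^{\Delta_1^0}(\Psi_1^0\tilde\Phi_1^0+\Phi_1^0\tilde\Psi_1^0)\Theta_1^0$ of \eqref{large-x-left}. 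I expect the main obstacle to be twofold: first, engineering a $g$-function that is simultaneously pole-free at the origin, correctly normalized at infinity, and compatible with the band structure, which is exactly the novelty forced by the negative-flow/WKI nature of the problem; and second, carrying the Bessel and confluent-hypergeometric local contributions cleanly through the $\lambda\to0$ limit so that the error is genuinely $\mathcal{O}(1/|\hat{x}|)$ and the constant $\tilde{\mathcal{X}}_1$ emerges with its stated singular-integral representation over $[\eta_1,\eta_2]$.
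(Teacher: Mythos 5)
Your overall architecture for the $\x\to-\infty$ regime --- conjugation by a $g$-function and a scalar $f$-function, lens opening, a genus-one outer parametrix in Jacobi theta functions on $[\eta_1,\eta_2]\cup[-\eta_2,-\eta_1]$, first-type modified Bessel parametrices at the endpoints, the second-type modified Bessel (for $r_0$) or confluent hypergeometric (for $r_c$) parametrix at $\pm\eta_0$, a small-norm estimate $E=\mathbb{I}_2+\mathcal{O}(|\x|^{-1})$, and finally the $\lambda\to0$ recovery formulas --- is exactly the paper's route, and the $\x\to+\infty$ case is indeed an immediate small-norm argument (the paper does not even open lenses there: on $\Gamma_+$ the factor $\mathrm{e}^{-2\x\lambda}$ already decays since $\Re\lambda$ is bounded below by a positive constant, and $\beta_j\ge0$ keeps the Cauchy integrals bounded).

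However, there is a concrete misconception in your setup of the $g$-function for this theorem. At $t=0$ the phase is $t\theta=\x\lambda$, which is entire in $\lambda$; there is no pole at the origin to cancel, and the piecewise construction of $p(\lambda;\xi)$ on $\{\Re\lambda\ge0\}$ and $\{\Re\lambda\le0\}$ advertised in the introduction is needed only for the long-time analysis, where $\theta=(\xi\lambda+\lambda^{-1})/4$ genuinely carries a $\lambda^{-1}$ singularity. For the initial-value asymptotics the paper uses the single-formula KdV-type function \eqref{g0}, $g_0=\lambda-\int_{\eta_2}^{\lambda}\frac{\zeta^2-\rho}{R_0(\zeta)}\,\mathrm{d}\zeta$, with fixed band $[\eta_1,\eta_2]$. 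For the same reason your invocation of the Whitham relation \eqref{Whitham}, a moving band edge $\alpha$ with ``$\alpha\to\eta_1$ as $\x\to-\infty$,'' and an Airy parametrix seated at $\alpha$ imports machinery from the $t\to+\infty$ sections that plays no role here: at $t=0$ the branch points are $\pm\eta_1,\pm\eta_2$ from the outset, and only Bessel-type local parametrices appear. None of this is fatal --- the correct $g_0$ is simpler than what you propose --- but as written your justification for the $g$-function is wrong for this statement, and if you actually tried to build the piecewise, origin-regularized $g$ at $t=0$ you would find nothing to regularize. The fix is simply to replace that paragraph by the construction \eqref{g0} and its jump relations $g_{0+}+g_{0-}=2\lambda$ on the bands and $g_{0+}-g_{0-}=\Omega_1$ on the gap.
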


\begin{theorem}[Long-time asymptotics for the soliton gas $u(x, t)$]
For these two types of generalized reflection coefficients $r=r_0$ and $r=r_c$, the long-time asymptotics for the short pulse soliton gas $u(x, t)$ are  established as follows.
\begin{enumerate}[0.]
\item[\rm{\textbullet}] For $\xi>-\eta_2^{-2}$ with $\beta_1, \beta_2, \beta_0\ge 0$, there exists a positive constant $\mu=\mu(\xi)$ such that
\begin{gather}\label{large-right}
\begin{array}{rl}
u(x, t)=&\mathcal{O}\left(\mathrm{e}^{-\mu t}\right),\\[0.5em]
x=& \x+\mathcal{O}\left(\mathrm{e}^{-\mu t}\right).
\end{array}
\end{gather}

\item[\rm{\textbullet}] For $\xi\in\left(\xi_0, -\eta_2^{-2}\right)$ with  $\beta_0\ge 0, \beta_1\ge 0, \beta_2>-1$,
and for  $\xi\in\left(\xi_{\mathrm{crit}}, \xi_0\right)$ with $\beta_0>-1, \beta_1\ge 0, \beta_2>-1$,
\begin{gather}\label{large-middle}
\begin{array}{rl}
u(x, t)=&\mathrm{e}^{\Delta^\alpha}\left(\Psi^\alpha \tilde{\Phi}^\alpha+\Phi^\alpha \tilde{\Psi}^\alpha \right)\Theta^\alpha
+\mathcal{O}\left(\dfrac{1}{t}\right),                                                                 \\[1em]
x=&\x+\left(\Psi^\alpha \tilde{\Psi}^\alpha-\Phi^\alpha \tilde{\Phi}^\alpha \right)\Theta^\alpha+\tilde{\mathcal{X}}^\alpha-\mathcal{X}^\alpha
+\mathcal{O}\left(\dfrac{1}{t}\right),
\end{array}
\end{gather}
where
\begin{gather}
\Psi^\alpha=\frac{\vartheta_3\left(\frac{\tau_\alpha}{2}-\frac{\Delta^\alpha}{2\pi \mathrm{i}}; \tau_\alpha \right)}{\vartheta_3\left(\frac{\tau_\alpha}{2}; \tau_\alpha \right)}, \quad
\Theta^\alpha=\frac{\vartheta_3^2\left(0; \tau_\alpha \right)}{\vartheta_3^2\left(\frac{\Delta^\alpha }{2\pi\mathrm{i}}; \tau_\alpha \right)}, \\[0.5em]
\tilde{\Psi}^\alpha=\frac{1}{4\alpha eK(m_\alpha)}\left(
\frac{\vartheta_3\left(\frac{\tau_\alpha}{2}-\frac{\Delta^\alpha}{2\pi \mathrm{i}}; \tau_\alpha \right)\vartheta_3'\left(\frac{\tau_\alpha}{2}; \tau_\alpha \right)}{\vartheta_3^2\left(\frac{\tau_\alpha}{2}; \tau_\alpha \right)}-\frac{\vartheta_3'\left(\frac{\tau_\alpha}{2}-\frac{\Delta^\alpha}{2\pi \mathrm{i}}; \tau_\alpha \right)}{\vartheta_3\left(\frac{\tau_\alpha}{2}; \tau_\alpha \right)}
\right),   \\[0.5em]
\Phi_1=2\left(m_\alpha-1\right)eK(m_\alpha)\frac{\vartheta_3\left(\frac{1+\tau_\alpha}{2}+\frac{\Delta^\alpha}{2\pi\mathrm{i}}; \tau_\alpha \right)}{\vartheta_3'\left(\frac{1+\tau_\alpha}{2}; \tau_\alpha \right)},        \\[0.5em]
\tilde{\Phi}_1=\frac{m_\alpha-1}{2\alpha}\left(
\frac{\vartheta_3''\left(\frac{1+\tau_\alpha}{2}; \tau_\alpha \right)\vartheta_3\left(\frac{1+\tau_\alpha}{2}+\frac{\Delta^\alpha}{2\pi\mathrm{i}}; \tau_\alpha \right)}{\vartheta_3'\left(\frac{1+\tau_\alpha}{2}; \tau_\alpha \right)^2}-\frac{\vartheta_3'\left(\frac{1+\tau_\alpha}{2}+\frac{\Delta^\alpha}{2\pi\mathrm{i}}; \tau_\alpha \right)}{\vartheta_3'\left(\frac{1+\tau_\alpha}{2}; \tau_\alpha \right)}
\right),                            \\[0.5em]
\mathcal{X}^\alpha=\left(1+\frac{1}{m_\alpha}\left(\frac{eE(m_\alpha)}{eK(m_\alpha)}-1\right)\right)\x
+\left(\frac{eE(m_\alpha)}{4\alpha^2 eK(m_\alpha)}+\frac{1}{8\eta_2^2}-\frac{1}{8\alpha^2}\right)t,   \\[0.5em]
\tilde{\mathcal{X}}^\alpha=\frac{\Omega^\alpha \phi^\alpha}{\pi \mathrm{i} \alpha}\left(eE(m_\alpha)-eK(m_\alpha)\right)+\frac{\alpha \eta_2}{\pi}\int_{\alpha}^{\eta_2}\frac{\log r(s)}{s^2\sqrt{\left(\eta_2^2-s^2\right)\left(s^2-\alpha^2\right)}}\mathrm{d}s,
\end{gather}
where
$m_\alpha=\alpha/\eta_2$,
$\tau_\alpha=\mathrm{i}\,eK\left(\sqrt{1-m_\alpha^2}\right)\left/\right.2eK\left(m_\alpha\right)$,
$\Omega^\alpha=-\pi \mathrm{i}\eta_2\left/\right.eK\left(m_\alpha\right)$,
\begin{gather}
\Delta^\alpha=\Omega^\alpha\left(\x+\frac{t}{4\alpha \eta_2}+\phi^\alpha \right),  \quad
\phi^\alpha=-\int_{\alpha}^{\eta_2}\frac{\log r(s)}{\sqrt{\left(s^2-\alpha^2\right)\left(\eta_2^2-s^2\right)}}\frac{\mathrm{d}s}{\pi}.
\end{gather}
Specially, in the case of $r=r_0$ with $\beta_0=0$, \eqref{large-middle} holds for $\xi\in\left(\xi_\mathrm{crit}, -\eta_2^{-2}\right)$.

\item[\rm{\textbullet}] For $\xi<\xi_\mathrm{crit}$  with $\beta_1, \beta_2, \beta_0>-1$,
\begin{gather}\label{large-left}
\begin{aligned}
&u(x, t)=\mathrm{e}^{\Delta_1}\left(\Psi_1\tilde{\Phi}_1+\Phi_1\tilde{\Psi}_1\right)\Theta_1
+\mathcal{O}\left(\frac{1}{t}\right),                                                                 \\
&x=\x+\left(\Psi_1\tilde{\Psi}_1-\Phi_1\tilde{\Phi}_1\right)\Theta_1+\tilde{\mathcal{X}}_1-\mathcal{X}_1
+\mathcal{O}\left(\frac{1}{t}\right),
\end{aligned}
\end{gather}
where
\begin{gather}
\Psi_1=\frac{\vartheta_3\left(\frac{\tau_1}{2}-\frac{\Delta_1}{2\pi \mathrm{i}}; \tau_1\right)}{\vartheta_3\left(\frac{\tau_1}{2}; \tau_1\right)}, \quad
\Theta_1=\frac{\vartheta_3^2\left(0; \tau_1\right)}{\vartheta_3^2\left(\frac{\Delta_1}{2\pi\mathrm{i}}; \tau_1\right)},
\\[0.5em]
\tilde{\Psi}_1=\frac{1}{4\eta_1 eK(m_1)}\left(
\frac{\vartheta_3\left(\frac{\tau_1}{2}-\frac{\Delta_1}{2\pi \mathrm{i}}; \tau_1\right)\vartheta_3'\left(\frac{\tau_1}{2}; \tau_1\right)}{\vartheta_3^2\left(\frac{\tau_1}{2}; \tau_1\right)}-\frac{\vartheta_3'\left(\frac{\tau_1}{2}-\frac{\Delta_1}{2\pi \mathrm{i}}; \tau_1\right)}{\vartheta_3\left(\frac{\tau_1}{2}; \tau_1\right)}
\right),   \\[0.5em]
\Phi_1=2\left(m_1-1\right)eK(m_1)\frac{\vartheta_3\left(\frac{1+\tau_1}{2}+\frac{\Delta_1}{2\pi\mathrm{i}}; \tau_1\right)}{\vartheta_3'\left(\frac{1+\tau_1}{2}; \tau_1\right)},        \\[0.5em]
\tilde{\Phi}_1=\frac{m_1-1}{2\eta_1}\left(
\frac{\vartheta_3''\left(\frac{1+\tau_1}{2}; \tau_1\right)\vartheta_3\left(\frac{1+\tau_1}{2}+\frac{\Delta_1}{2\pi\mathrm{i}}; \tau_1\right)}{\vartheta_3'\left(\frac{1+\tau_1}{2}; \tau_1\right)^2}-\frac{\vartheta_3'\left(\frac{1+\tau_1}{2}+\frac{\Delta_1}{2\pi\mathrm{i}}; \tau_1\right)}{\vartheta_3'\left(\frac{1+\tau_1}{2}; \tau_1\right)}
\right),                            \\[0.5em]
\mathcal{X}_1=\left(1+\frac{1}{m_1}\left(\frac{eE(m_1)}{eK(m_1)}-1\right)\right)\x
+\left(\frac{eE(m_1)}{4\eta_1^2 eK(m_1)}+\frac{1}{8\eta_2^2}-\frac{1}{8\eta_1^2}\right)t,
\end{gather}
where $\Delta_1$ is denoted as
\begin{gather}
\Delta_1=\Omega_1\left(\x+\frac{t}{4\eta_1\eta_2}+\phi_1\right).
\end{gather}

\end{enumerate}

\end{theorem}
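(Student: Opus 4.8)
The plan is to carry out the Deift--Zhou nonlinear steepest descent analysis of RH problem~\ref{RH-3} for $M^\infty$, along the transformation chain $Y\mapsto T\mapsto S\mapsto E$ announced in the statement of results, and then to read off $u(x,t)$ and $x$ from the $\lambda\to 0$ expansion through the reconstruction formulas. Writing the phase as $t\theta=\hat x\lambda+\tfrac{t}{4\lambda}$, the entire asymptotic picture is governed by the sign of $\operatorname{Re}(2t\theta)$ along $\Gamma_\pm$, which is controlled only after a $g$-function is subtracted. First I would construct the $g$-function together with a scalar Szeg\H{o}-type factor: the former must match $\theta$ both at $\lambda=\infty$ (to preserve the normalization $M\to\mathbb I_2$) and, crucially, near $\lambda=0$ (to tame the $\lambda^{-1}$ pole of $\theta$), while the latter absorbs the jump generated by the reflection coefficient $r$ and is responsible for the terms $\phi^\alpha$ and $\tilde{\mathcal X}^\alpha$ containing $\int_\alpha^{\eta_2}\log r(s)(\cdots)\mathrm ds$. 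The position of the active band endpoint $\alpha$ is then fixed by requiring $\alpha$ to be a soft edge, which is exactly the Whitham condition $\xi=-\eta_2^{-2}W(\alpha/\eta_2)$ of \eqref{Whitham}; solving this selects $\alpha\in(\eta_1,\eta_2)$ throughout the oscillatory regime.

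With $g$ and the Szeg\H{o} factor in hand, the transformation $T$ conjugates $Y$ by exponential factors built from them, engineered so that the jumps on the gaps and on the (to-be-opened) lens boundaries become $\mathbb I+{}$(exponentially small) while the band $[\alpha,\eta_2]$ carries a constant oscillatory jump; the transformation $S$ then opens lenses around the band so that the lens-boundary jumps decay like $\mathrm e^{-ct}$. The global parametrix is built on the genus-one surface $y^2=(\lambda^2-\alpha^2)(\lambda^2-\eta_2^2)$ (respecting the $\lambda\mapsto-\lambda$ symmetry), and its entries are precisely the Riemann $\vartheta_3$-quotients with modulus $m_\alpha=\alpha/\eta_2$, modular parameter $\tau_\alpha$, and elliptic integrals $eK(m_\alpha),eE(m_\alpha)$; this produces $\Psi^\alpha,\Phi^\alpha,\tilde\Psi^\alpha,\tilde\Phi^\alpha,\Theta^\alpha$ and the linear phase $\Delta^\alpha=\Omega^\alpha(\hat x+\tfrac{t}{4\alpha\eta_2}+\phi^\alpha)$. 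Local parametrices are then installed: Airy parametrices at the two soft edges $\alpha$ and $\eta_2$; and, when $\eta_0$ lies inside the band (the regime $\xi\in(\xi_{\mathrm{crit}},\xi_0)$, i.e.\ $\alpha\in(\eta_1,\eta_0)$), the second modified-Bessel parametrix at $\eta_0$ for $r_0$ or the confluent hypergeometric parametrix at $\eta_0$ for $r_c$, which is exactly why $\beta_0>-1$ is permitted there whereas $\beta_0\ge 0$ suffices once $\eta_0$ sits in the gap ($\xi\in(\xi_0,-\eta_2^{-2})$).

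Defining the error matrix $E=S\,P^{-1}$, with $P$ the global parametrix outside the small disks and the local parametrices inside them, I would verify a small-norm RH problem: on the lens boundaries the jump is exponentially close to $\mathbb I$, while on the circles around $\alpha,\eta_2$ (and $\eta_0$) the Airy/Bessel/CHF matchings give a jump of the form $\mathbb I+\mathcal O(t^{-1})$, whence $E=\mathbb I+\mathcal O(t^{-1})$ uniformly on compact $\xi$-subintervals. Tracing the transformations back to $\lambda=0$ (which lies in a gap, to the left of $\eta_1$, where the global parametrix is analytic) and inserting the expansion $M^\infty(\lambda)=M^\infty(0)(\mathbb I+\lambda M_1+\cdots)$ into the reconstruction formulas yields $u=(M_1)_{1,2}$ and $x-\hat x=(M_1)_{1,1}$; the leading contribution is the $\vartheta_3$-expression in \eqref{large-middle}, and the $\mathcal O(t^{-1})$ remainder is inherited from $E$. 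The two boundary regimes follow by specialization: for $\xi>-\eta_2^{-2}$ the band is empty, $\operatorname{Re}(2t\theta)$ keeps a definite sign along all of $\Gamma_\pm$, and no local analysis is needed, giving $u=\mathcal O(\mathrm e^{-\mu t})$; for $\xi<\xi_{\mathrm{crit}}$ the band saturates to $[\eta_1,\eta_2]$ (so $\alpha=\eta_1$, $m_\alpha=m_1$), and the soft edge at $\alpha$ is replaced by a hard edge at $\eta_1$ carrying the $(\lambda-\eta_1)^{\beta_1}$ singularity, where I would use the first modified-Bessel parametrix, reproducing \eqref{large-left}.

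I expect the main obstacle to be the construction of the $g$-function at the origin. Because $\theta$ carries the $\lambda^{-1}$ pole and $u$ is recovered precisely from the $\lambda\to 0$ behavior of $M^\infty$, a single analytic $g$ cannot simultaneously match $\theta$ at infinity, cancel the pole at $0$, and maintain the correct sign structure on bands and gaps; the remedy is the piecewise definition announced in the introduction, with one branch adapted to the origin and another to infinity, glued consistently across the band. The delicate points are to check that this piecewise $g$ is single-valued (its periods over the $A$- and $B$-cycles must be consistent), that it produces a genuine soft edge at $\alpha$ (so that the Airy parametrix is admissible and the Whitham equation is forced), and that the conjugation it induces near $\lambda=0$ is exactly what generates the $\tfrac{t}{4\alpha\eta_2}$ shift in $\Delta^\alpha$ and the $t$-linear term of $\mathcal X^\alpha$; verifying these consistency conditions, rather than any single parametrix computation, is the crux of the proof.
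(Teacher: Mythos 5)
Your overall architecture coincides with the paper's: the chain $Y\mapsto T\mapsto S\mapsto E$, a $g$-function $g=\theta-p$ defined piecewise to kill the $\lambda^{-1}$ pole of $\theta$ at the origin while preserving normalization at infinity, a scalar Szeg\H{o}-type $f$-function carrying the $\int\log r$ integrals, the Whitham equation \eqref{Whitham} fixing $\alpha$, a genus-one theta-function outer parametrix, the second modified Bessel (for $r_0$) or confluent hypergeometric (for $r_c$) parametrix at $\eta_0$, and a small-norm argument giving the $\mathcal{O}(t^{-1})$ error. The exponential-decay regime $\xi>-\eta_2^{-2}$ and the saturated regime $\xi<\xi_{\mathrm{crit}}$ with $\alpha=\eta_1$ and a Bessel parametrix of index $\beta_1$ at $\eta_1$ are also handled as in the paper.

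There is, however, one concrete error: you install an Airy parametrix at $\eta_2$, calling it a soft edge. It is not. First, the reflection coefficient carries the factor $(\eta_2-\lambda)^{\beta_2}$ with $\beta_2>-1$ arbitrary, so the off-diagonal entries of the jump $\mathcal{U}^{tp}_{f}\left[-\mathrm{i}r^{-1}\right]$ behave like $(\eta_2-\lambda)^{\mp\beta_2}$ near $\eta_2$; the Airy model problem has only the constant jumps $\mathcal{L}[1]$, $\mathcal{U}[1]$, $\mathrm{i}\sigma_2$ and cannot absorb the monodromy factors $\mathrm{e}^{\pm\beta_2\pi\mathrm{i}}$ produced by continuing $(\eta_2-\lambda)^{\beta_2/2}$ around $\eta_2$. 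Second, the phase itself has the wrong vanishing order there: since $Q(\eta_2;\xi)\neq 0$ while $R(y;\xi)\sim(y-\eta_2)^{1/2}$, one gets $p\sim(\lambda-\eta_2)^{1/2}$ rather than $(\lambda-\eta_2)^{3/2}$, so the natural local variable is $\zeta_{\eta_2}=t^2p^2$ (Bessel scaling), not $t^{2/3}(\cdot)^{2/3}$ (Airy scaling). With an Airy parametrix the matching $P^{\eta_2}(P^\infty)^{-1}=\mathbb{I}_2+\mathcal{O}(t^{-1})$ on $\partial B(\eta_2)$ fails, and the small-norm step collapses. The paper instead uses the first type of modified Bessel parametrix $M^{\mathrm{mB}}(\zeta_{\eta_2};\beta_2)$ at $\eta_2$ (and symmetrically at $-\eta_2$), whose jumps $\mathcal{L}\left[\mathrm{e}^{\pm\beta_2\pi\mathrm{i}}\right]$ exactly match the local monodromy; the Airy parametrix is reserved for $\alpha$ alone, where the factor $(y^2-\alpha^2)$ in $Q$ forces $p\sim(\lambda-\alpha)^{3/2}$. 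Correcting this single assignment brings your proposal into line with the paper's proof.
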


\begin{remark} In the above-mentioned study, we focus on a single singularity $\eta_0$ associated with the generalized reflection coefficients $r_0$ and $r_c$, but it is feasible to extend the analysis to accommodate an arbitrary number \(n\) of singularities. For the first generalized reflection coefficient, we can express it as
\begin{gather} \label{general-0}
r_0 = (\lambda - \eta_1)^{\beta_1}(\eta_2 - \lambda)^{\beta_2}\left(\prod_{j=1}^n\left|\lambda - \eta_{0, j}\right|^{\beta_{0, j}}\right)\gamma(\lambda),
\end{gather}
where the conditions \(\eta_1 < \eta_{0, 1} < \eta_{0, 2} < \cdots < \eta_{0, n} < \eta_2\) and \(\beta_{0, j} \in (-1, 0) \cup (0, +\infty)\) hold true. For the second reflection coefficient, we consider
\begin{gather} \label{general-1}
r_c = (\lambda - \eta_1)^{\beta_1}(\eta_2 - \lambda)^{\beta_2}\left(\prod_{j=1}^n\chi_j(\lambda)\right)\gamma(\lambda),
\end{gather}
where \(\chi_j\) is defined as a piecewise function given by \(\chi_j(\lambda) = 1\) for \(\lambda \in [\eta_{0, j-1}, \eta_{0, j})\) and \(\chi_j(\lambda) = c_j^2\) for \(\lambda \in (\eta_{0, j}, \eta_{0, j+1}]\), with \(c_j \neq 0\). Here, we have \(\eta_1 = \eta_{0, 0} < \eta_{0, 1} < \eta_{0, 2} < \cdots < \eta_{0, n} < \eta_{0, n+1} = \eta_2\).

The long-time asymptotic behavior of SP soliton gases defined by these extended forms in Theorems 1 and can be determined using the methods outlined in this paper, with the added construction of local parametrices around each $\eta_{0, j}$. For the first case \eqref{general-0}, the local parametrix around $\eta_{0, j}$ can be derived using the second type of modified Bessel parametrix, while for the second case \eqref{general-1}, confluent hypergeometric parametrix are employed for the construction of these local solutions.

\end{remark}

\vspace{1em}
\noindent \textbf{Notations.}\,\,
We conclude this introduction by outlining the notational conventions used throughout the paper. Subscripts \( + \) and \( - \) represent non-tangential boundary values taken from the left and right, respectively, along a jump contour in the context of a Riemann-Hilbert problem. For brevity in expressing jump matrices, we introduce the notation \( \mathcal{L}^{\lambda_1}_{\lambda_2}[\lambda_0] \) and \( \mathcal{U}^{\lambda_1}_{\lambda_2}[\lambda_0] \), defined as follows:
\begin{gather}\label{LU}
\begin{array}{l}
\mathcal{L}^{\lambda_1}_{\lambda_2}[\lambda_0] = \mathrm{e}^{\lambda_1 \sigma_3} \lambda_2^{\sigma_3} \mathcal{L}[\lambda_0] \lambda_2^{-\sigma_3} \mathrm{e}^{-\lambda_1 \sigma_3}, \\[1em]
\mathcal{U}^{\lambda_1}_{\lambda_2}[\lambda_0] = \mathrm{e}^{\lambda_1 \sigma_3} \lambda_2^{\sigma_3} \mathcal{U}[\lambda_0] \lambda_2^{-\sigma_3} \mathrm{e}^{-\lambda_1 \sigma_3},
\end{array}
\end{gather}
where \( \mathcal{L}[\lambda_0] \) and \( \mathcal{U}[\lambda_0] \) are lower and upper triangular matrices, respectively, with diagonal entries all equal to 1:
\begin{gather}\label{LU0}
\mathcal{L}[\lambda_0] =
\begin{pmatrix}
1 & 0 \\
\lambda_0 & 1
\end{pmatrix}, \qquad
\mathcal{U}[\lambda_0] =
\begin{pmatrix}
1 & \lambda_0 \\
0 & 1
\end{pmatrix}.
\end{gather}
Two special cases of equation \eqref{LU} are worth noting. First, the notation \( \mathcal{L}^{\lambda_1} \) and \( \mathcal{U}^{\lambda_1} \) corresponds to \eqref{LU} with \( \lambda_2 = 1 \). Second, \( \mathcal{L}_{\lambda_2} \) and \( \mathcal{U}_{\lambda_2} \) represent the case where \( \lambda_1 = 0 \).
We also define the constant matrices \( C \), \( C_0 \), and \( C_1 \), which are used throughout the paper, as follows:
\begin{gather}
C = \frac{1}{\sqrt{2}}
\begin{pmatrix}
1 & \mathrm{i} \\
\mathrm{i} & 1
\end{pmatrix}, \qquad
C_0 = -\sqrt{2\pi}
\begin{pmatrix}
1 & 0 \\
0 & \mathrm{i}
\end{pmatrix}, \qquad
C_1 = \frac{1}{\sqrt{2}}
\begin{pmatrix}
1 & -1 \\
1 & 1
\end{pmatrix}.
\end{gather}
We denote by \( B(\lambda_0) \) a neighborhood centered at \( \lambda = \lambda_0 \) in the complex \( \lambda \)-plane, and by \( B^\zeta(\zeta_0) \) a neighborhood centered at \( \zeta = \zeta_0 \) in the \( \zeta \)-plane.
Finally, the symbols \( eK \) and \( eE \) are used to denote the complete elliptic integrals of the first and second kind, respectively, given by:
\begin{gather}\label{elliptic12}
eK(\lambda)=\int_0^{\pi/2}\frac{\mathrm{d}y}{\sqrt{1-\lambda^2\sin^2 y}}, \qquad  eE(\lambda)=\int_0^{\pi/2}\sqrt{1-\lambda^2\sin^2 y}\,\mathrm{d}y.
\end{gather}

\section{Some solvable Riemann-Hilbert models}

We here give some solvable Riemann-Hilbert models, which are used in the other sections.

\subsection{The Airy parametrix $M^{\mathrm{Ai}}(\zeta)$}

The matrix \( M^{\mathrm{Ai}}(\zeta) \) is constructed using the Airy function of the first kind, \( \mathrm{Ai}(\zeta) \), which satisfies Airy's ordinary differential equation:
\begin{gather}
\frac{\mathrm{d}^2y}{\mathrm{d}\zeta^2}-\zeta y=0.
\end{gather}
The Airy parametrix, originally introduced by Deift, Kriecherbauer, McLaughlin, Venakides, and Zhou to study the asymptotics of orthogonal polynomials with exponential weights \cite{43}, is constructed using the steepest descent method developed by Deift and Zhou \cite{45}. In this paper, we adapt the parametrix slightly, as shown in Figure \ref{BesselAiry}(Left), and formulate the matrix \( M^{\mathrm{Ai}}(\zeta) \) as follows:
\begin{gather}
M^{\mathrm{Ai}}\left(\zeta\right)=
\left\{
\begin{array}{ll}
\zeta_0^{-\sigma_3/4}C_0
\begin{pmatrix}
\mathrm{Ai}'\left(\zeta_0\zeta\right)  & -\mathrm{e}^{2\pi \mathrm{i}/3} \mathrm{Ai}'\left(\mathrm{e}^{-2\pi\mathrm{i}/3}\zeta_0\zeta\right) \\[0.5em]
\mathrm{Ai}\left(\zeta_0\zeta\right)  & -\mathrm{e}^{-2\pi\mathrm{i}/3} \mathrm{Ai}\left(\mathrm{e}^{-2\pi\mathrm{i}/3}\zeta_0\zeta\right)
\end{pmatrix}, & \zeta \in \mathrm{D}_1^\zeta;  \\[2em]
\zeta_0^{-\sigma_3/4}C_0
\begin{pmatrix}
-\mathrm{e}^{2\pi \mathrm{i}/3}\mathrm{Ai}'\left(\mathrm{e}^{-4\pi \mathrm{i}/3}\zeta_0\zeta\right)  & -\mathrm{e}^{-4\pi \mathrm{i}/3} \mathrm{Ai}'\left(\mathrm{e}^{2\pi \mathrm{i}/3}\zeta_0\zeta\right) \\[0.5em]
-\mathrm{e}^{-4\pi \mathrm{i}/3}\mathrm{Ai}\left(\mathrm{e}^{-4\pi \mathrm{i}/3}\zeta_0\zeta\right)  & -\mathrm{e}^{2\pi \mathrm{i}/3} \mathrm{Ai}\left(\mathrm{e}^{2\pi \mathrm{i}/3}\zeta_0\zeta\right)
\end{pmatrix},& \zeta \in \mathrm{D}_2^\zeta; \\[2em]
\zeta_0^{-\sigma_3/4}C_0
\begin{pmatrix}
-\mathrm{e}^{2\pi \mathrm{i}/3}\mathrm{Ai}'\left(\mathrm{e}^{4\pi \mathrm{i}/3}\zeta_0\zeta\right)  & \mathrm{e}^{4\pi \mathrm{i}/3} \mathrm{Ai}'\left(\mathrm{e}^{2\pi \mathrm{i}/3}\zeta_0\zeta\right) \\[0.5em]
-\mathrm{e}^{4\pi \mathrm{i}/3}\mathrm{Ai}\left(\mathrm{e}^{4\pi \mathrm{i}/3}\zeta_0\zeta\right)  & \mathrm{e}^{2\pi \mathrm{i}/3} \mathrm{Ai}\left(\mathrm{e}^{2\pi \mathrm{i}/3}\zeta_0\zeta\right)
\end{pmatrix}, & \zeta \in \mathrm{D}_3^\zeta; \\[2em]
\zeta_0^{-\sigma_3/4}C_0
\begin{pmatrix}
\mathrm{Ai}'\left(\zeta_0\zeta\right)  & \mathrm{e}^{4\pi \mathrm{i}/3} \mathrm{Ai}'\left(\mathrm{e}^{2\pi \mathrm{i}/3}\zeta_0\zeta\right) \\[0.5em]
\mathrm{Ai}\left(\zeta_0\zeta\right)  & \mathrm{e}^{2\pi \mathrm{i}/3} \mathrm{Ai}\left(\mathrm{e}^{2\pi \mathrm{i}/3}\zeta_0\zeta\right)
\end{pmatrix}, & \zeta \in \mathrm{D}_4^\zeta.
\end{array}\right.
\end{gather}
In all of the above expressions, the constant \( \zeta_0 = \left(2/3\right)^{-2/3} \). The matrix \( M^{\mathrm{Ai}}(\zeta) \) solves a Riemann-Hilbert problem with the following properties. The matrix \( M^{\mathrm{Ai}}(\zeta) \) is analytic for \( \zeta \in \mathbb{C} \setminus (\cup_{j=1}^4 \Sigma_j) \) and, as \( \zeta \to \infty \), it satisfies the asymptotic condition:
\begin{gather}
M^{\mathrm{Ai}}\left(\zeta\right)=\zeta^{\sigma_3/4}C^{-1}\left(\mathbb{I}_2 + \mathcal{O}\left(\frac{1}{\zeta^{3/2}}\right)\right)\mathrm{e}^{-\zeta^{3/2} \sigma_3}.
\end{gather}
Along the contour \( \Sigma_j^0 = \Sigma_j \setminus \{0\}, j=1, 2, 3, 4 \), the matrix \( M^{\mathrm{Ai}}(\zeta) \) admits continuous boundary values that satisfy the following jump conditions:
\begin{gather}
M^{\mathrm{Ai}}_+\left(\zeta\right)=M^{\mathrm{Ai}}_-\left(\zeta\right)
\begin{cases}
\mathcal{L}\left[1\right], &\mathrm{for}\,\,\,  \zeta \in \Sigma_1^0 \cup \Sigma_3^0, \\[0.5em]
\mathrm{i} \sigma_2, &\mathrm{for}\,\,\,\zeta \in \Sigma_2^0, \\[0.5em]
\mathcal{U}\left[1\right], &\mathrm{for}\,\,\,\zeta \in \Sigma_4^0,
\end{cases}
\end{gather}
Near the intersection point \( \zeta = 0 \), the local behavior of \( M^{\mathrm{Ai}}(\zeta) \) is characterized by:
\begin{gather}
M^{\mathrm{Ai}}\left(\zeta\right)=
\mathcal{O}
\begin{pmatrix}
1 & 1 \\
1 & 1
\end{pmatrix}, \quad \text{as}\,\,\, \zeta \to 0.
\end{gather}

\begin{figure}[!t]
\centering
\includegraphics[scale=0.26]{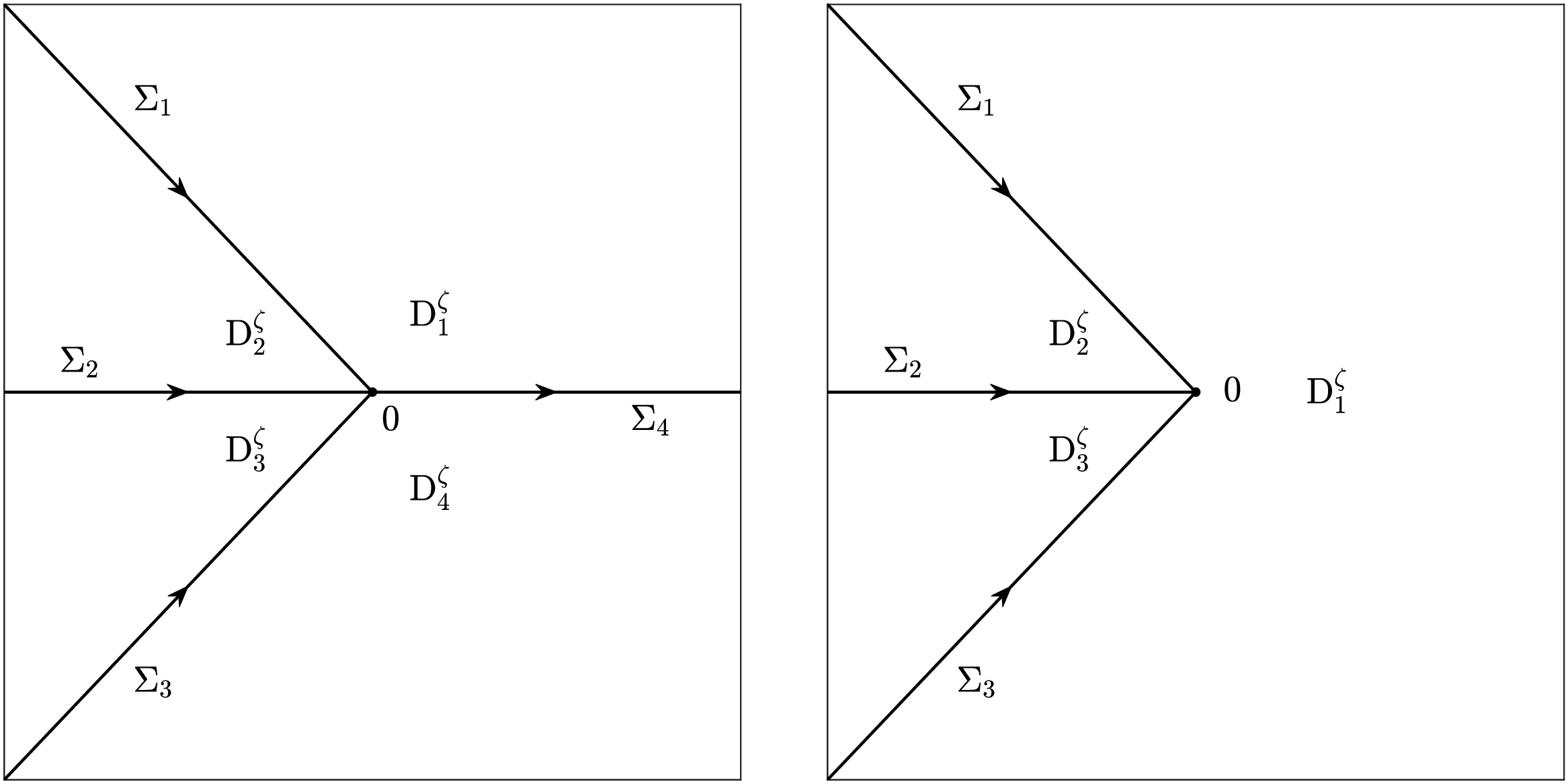}
\caption{Left: Jump contours for Airyparametrix $M^{\mathrm{mB}}$; Right: Jump contours for the first type of modified Bessel parametrix $M^{\mathrm{Ai}}$.}
\label{BesselAiry}
\end{figure}

\subsection{The first type of modified Bessel parametrix $M^{\mathrm{mB}}\left(\zeta; \beta\right)$}

The matrix \( M^{\mathrm{mB}}\left(\zeta; \beta\right) \) is constructed using the modified Bessel functions of the first and second kinds, \( I_{\beta}(\zeta) \) and \( K_{\beta}(\zeta) \), where the index \( \beta \in (-1, +\infty) \). These functions solve the modified Bessel's differential equation:
\begin{gather}
\zeta^2 \frac{\mathrm{d}^2 y}{\mathrm{d}\zeta^2} + \zeta \frac{\mathrm{d}y}{\mathrm{d}\zeta} - (\zeta^2 + \beta^2) y = 0.
\end{gather}
The function \( I_{\beta}(\zeta) \) is expressed as:
\[
I_{\beta}(\zeta) = \left(\zeta/2\right)^\beta \sum_{n=0}^\infty \frac{(\zeta/2)^{2n}}{n! \, \Gamma(\beta+n+1)},
\]
while the asymptotic behavior of \( K_{\beta}(\zeta) \) as \( \zeta \to \infty \) for \( \mathrm{arg}\zeta \in (-3\pi/2, 3\pi/2) \) is:
\[
K_{\beta}(\zeta) \sim \sqrt{\frac{\pi}{2\zeta}} \, \mathrm{e}^{-\zeta}.
\]
The modified Bessel parametrix was introduced by Kuijlaars, McLaughlin, Assche, and Vanlessen in their study of the strong asymptotics of orthogonal polynomials with modified Jacobi weights \cite{op2}. In this work, with a slight adjustment as shown in Figure \ref{BesselAiry}(Right), we define the matrix \( M^{\mathrm{mB}}(\zeta; \beta) \) as follows:
\begin{gather} \label{modified-Bessel-1}
M^{\mathrm{mB}}\left(\zeta; \beta\right) = -\mathrm{i} \sqrt{\pi}
\begin{pmatrix}
\mathrm{i}\sqrt{\zeta} & 0 \\[0.5em]
0 & 1
\end{pmatrix}
\begin{pmatrix}
I'_{\beta}\left(\sqrt{\zeta}\right) & \mathrm{i} K'_{\beta}\left(\sqrt{\zeta}\right) / \pi \\[0.5em]
I_{\beta}\left(\sqrt{\zeta}\right) & \mathrm{i} K_{\beta}\left(\sqrt{\zeta}\right) / \pi
\end{pmatrix},\quad \zeta \in \mathrm{D}_1^\zeta;
\end{gather}
\begin{gather}\label{modified-Bessel-2}
M^{\mathrm{mB}}\left(\zeta; \beta\right) = \frac{1}{\sqrt{\pi}}
\begin{pmatrix}
\mathrm{i}\sqrt{\zeta} & 0 \\[0.5em]
0 & 1
\end{pmatrix}
\begin{pmatrix}
K'_{\beta}\left(\sqrt{\zeta}\, \mathrm{e}^{-\pi \mathrm{i}}\right) & K'_{\beta}\left(\sqrt{\zeta}\right) \\[0.5em]
-K_{\beta}\left(\sqrt{\zeta}\, \mathrm{e}^{-\pi \mathrm{i}}\right) & K_{\beta}\left(\sqrt{\zeta}\right)
\end{pmatrix},\quad \zeta \in \mathrm{D}_2^\zeta;
\end{gather}
and
\begin{gather}\label{modified-Bessel-3}
M^{\mathrm{mB}}\left(\zeta; \beta\right) = \frac{1}{\sqrt{\pi}}
\begin{pmatrix}
\mathrm{i}\sqrt{\zeta} & 0 \\[0.5em]
0 & 1
\end{pmatrix}
\begin{pmatrix}
-K'_{\beta}\left(\sqrt{\zeta}\, \mathrm{e}^{\pi \mathrm{i}}\right) & K'_{\beta}\left(\sqrt{\zeta}\right) \\[0.5em]
K_{\beta}\left(\sqrt{\zeta}\, \mathrm{e}^{\pi \mathrm{i}}\right) & K_{\beta}\left(\sqrt{\zeta}\right)
\end{pmatrix},\quad \zeta \in \mathrm{D}_3^\zeta.
\end{gather}
The matrix \( M^{\mathrm{mB}}\left(\zeta; \beta\right) \) solves a Riemann-Hilbert problem with specific properties. It is analytic in \( \zeta \) for \( \zeta \in \mathbb{C} \setminus (\Sigma_1 \cup \Sigma_2 \cup \Sigma_3) \), and as \( \zeta \to \infty \), it satisfies the normalization condition:
\begin{gather}
M^{\mathrm{mB}}\left(\zeta; \beta\right) = \zeta^{\sigma_3/4} C^{-1} \left( \mathbb{I}_2 + \mathcal{O}\left(\frac{1}{\sqrt{\zeta}}\right) \right) \mathrm{e}^{\sqrt{\zeta} \, \sigma_3}.
\end{gather}
For \( \zeta \in \Sigma_1^0 \cup \Sigma_2^0 \cup \Sigma_3^0 \), the matrix \( M^{\mathrm{mB}}(\zeta; \beta) \) has continuous boundary values, with the following jump conditions:
\begin{gather}
M^{\mathrm{mB}}_+(\zeta; \beta) = M^{\mathrm{mB}}_-(\zeta; \beta)
\begin{cases}
\mathcal{L}\left[\mathrm{e}^{\beta \pi \mathrm{i}}\right], &\mathrm{for}\,\,\, \zeta \in \Sigma_1^0, \\[0.5em]
\mathrm{i} \sigma_2, &\mathrm{for}\,\,\, \zeta \in \Sigma_2^0, \\[0.5em]
\mathcal{L}\left[\mathrm{e}^{-\beta \pi \mathrm{i}}\right], &\mathrm{for}\,\,\, \zeta \in \Sigma_3^0,
\end{cases}
\end{gather}
where \( \Sigma_j^0 = \Sigma_j \setminus \{0\}, j=1, 2, 3 \).
The local behavior of \( M^{\mathrm{mB}}(\zeta; \beta) \) near the origin depends on the value of \( \beta \). When \( \beta > 0 \), the matrix behaves as:
\begin{gather}
M^{\mathrm{mB}}(\zeta; \beta) =
\begin{cases}
\mathcal{O}
\begin{pmatrix}
|\zeta|^{\beta/2} & |\zeta|^{-\beta/2}  \\[0.5em]
|\zeta|^{\beta/2} & |\zeta|^{-\beta/2}
\end{pmatrix}, &\mathrm{as}\,\,\, \zeta \in \mathrm{D}^\zeta_1 \to 0, \\[2em]
\mathcal{O}
\begin{pmatrix}
|\zeta|^{-\beta/2} & |\zeta|^{-\beta/2}  \\[0.5em]
|\zeta|^{-\beta/2} & |\zeta|^{-\beta/2}
\end{pmatrix}, &\mathrm{as}\,\,\, \zeta \in \mathrm{D}^\zeta_2 \cup \mathrm{D}^\zeta_3 \to 0.
\end{cases}
\end{gather}
For \( \beta = 0 \), the matrix exhibits logarithmic behavior near the origin:
\begin{gather}
M^{\mathrm{mB}}(\zeta; \beta) =
\begin{cases}
\mathcal{O}
\begin{pmatrix}
1 & \log|\zeta|  \\[0.5em]
1 & \log|\zeta|
\end{pmatrix}, &\mathrm{as}\,\,\, \zeta \in \mathrm{D}^\zeta_1 \to 0, \\[2em]
\mathcal{O}
\begin{pmatrix}
\log|\zeta| & \log|\zeta|  \\[0.5em]
\log|\zeta| & \log|\zeta|
\end{pmatrix}, &\mathrm{as}\,\,\, \zeta \in \mathrm{D}^\zeta_2 \cup \mathrm{D}^\zeta_3 \to 0.
\end{cases}
\end{gather}
Finally, for \( -1 < \beta < 0 \), the behavior is:
\begin{gather}
M^{\mathrm{mB}}(\zeta; \beta) = \mathcal{O}
\begin{pmatrix}
|\zeta|^{\beta/2} & |\zeta|^{\beta/2}  \\[0.5em]
|\zeta|^{\beta/2} & |\zeta|^{\beta/2}
\end{pmatrix}, \quad \mathrm{as}\,\,\,\zeta \in \mathrm{D}^\zeta_1 \cup \mathrm{D}^\zeta_2 \cup \mathrm{D}^\zeta_3 \to 0.
\end{gather}

\subsection{The second type of modified Bessel parametrix $M^{\mathrm{mb}}(\zeta; \beta)$}

The matrix \( M^{\mathrm{mb}}(\zeta; \beta) \) is constructed using modified Bessel functions of the first and second kinds, with indices \( (\beta \pm 1)/2 \) and \( \beta \in (-1, 0) \cup (0, +\infty) \). It is essential to note that \( M^{\mathrm{mb}} \) is distinct from \( M^{\mathrm{mB}} \), and the different notation is used to clearly differentiate the two. An earlier form of the modified Bessel parametrix, analogous to \( M^{\mathrm{mb}} \), was proposed in \cite{60} in the study of orthogonal polynomials with generalized Jacobi weights. In this paper, we adopt a modified version of this parametrix, as shown in Figure \ref{BesselCH}(Left), and define \( M^{\mathrm{mb}}(\zeta; \beta) \) as follows:
\begin{gather}
M^{\mathrm{mb}}(\zeta; \beta) = C_1
\begin{pmatrix}
-G^+(\zeta) & G^+(\mathrm{e}^{-\pi\mathrm{i}} \zeta) \\[0.5em]
-G^-(\zeta) & G^-(\mathrm{e}^{-\pi\mathrm{i}} \zeta)
\end{pmatrix}
\mathrm{e}^{-\beta \pi \mathrm{i} \sigma_3 / 4},\quad \zeta \in \mathrm{D}_1^\zeta;
\end{gather}
\begin{gather}
M^{\mathrm{mb}}(\zeta; \beta) = C_1
\begin{pmatrix}
-H^+(\mathrm{e}^{-\pi\mathrm{i}} \zeta) & G^+(\mathrm{e}^{-\pi\mathrm{i}} \zeta) \\[0.5em]
-H^-(\mathrm{e}^{-\pi\mathrm{i}} \zeta) & -G^-(\mathrm{e}^{-\pi\mathrm{i}} \zeta)
\end{pmatrix}
\mathrm{e}^{-\beta \pi \mathrm{i} \sigma_3 / 4},\quad \zeta \in \mathrm{D}_2^\zeta;
\end{gather}
\begin{gather}
M^{\mathrm{mb}}(\zeta; \beta) = C_1
\begin{pmatrix}
-H^+(\mathrm{e}^{-\pi\mathrm{i}} \zeta) & G^+(\mathrm{e}^{-\pi\mathrm{i}} \zeta) \\[0.5em]
-H^-(\mathrm{e}^{-\pi\mathrm{i}} \zeta) & -G^-(\mathrm{e}^{-\pi\mathrm{i}} \zeta)
\end{pmatrix}
\mathrm{e}^{\beta \pi \mathrm{i} \sigma_3 / 4},\quad \zeta \in \mathrm{D}_3^\zeta;
\end{gather}
\begin{gather}
M^{\mathrm{mb}}(\zeta; \beta) = C_1
\begin{pmatrix}
-G^+(\mathrm{e}^{-2\pi\mathrm{i}} \zeta) & G^+(\mathrm{e}^{-\pi\mathrm{i}} \zeta) \\[0.5em]
-G^-(\mathrm{e}^{-2\pi\mathrm{i}} \zeta) & -G^-(\mathrm{e}^{-\pi\mathrm{i}} \zeta)
\end{pmatrix}
\mathrm{e}^{\beta \pi \mathrm{i} \sigma_3 / 4},\quad \zeta \in \mathrm{D}_4^\zeta;
\end{gather}
\begin{gather}
M^{\mathrm{mb}}(\zeta; \beta) = C_1
\begin{pmatrix}
G^+(\mathrm{e}^{\pi\mathrm{i}} \zeta) & G^+(\zeta) \\[0.5em]
-G^-(\mathrm{e}^{\pi\mathrm{i}} \zeta) & G^-(\zeta)
\end{pmatrix}
\mathrm{e}^{-\beta \pi \mathrm{i} \sigma_3 / 4},\quad \zeta \in \mathrm{D}_5^\zeta;
\end{gather}
\begin{gather}
M^{\mathrm{mb}}(\zeta; \beta) = C_1
\begin{pmatrix}
H^+(\zeta) & G^+(\zeta) \\[0.5em]
-H^-(\zeta) & G^-(\zeta)
\end{pmatrix}
\mathrm{e}^{-\beta \pi \mathrm{i} \sigma_3 / 4},\quad \zeta \in \mathrm{D}_6^\zeta;
\end{gather}
\begin{gather}
M^{\mathrm{mb}}(\zeta; \beta) = C_1
\begin{pmatrix}
H^+(\zeta) & G^+(\zeta) \\[0.5em]
-H^-(\zeta) & G^-(\zeta)
\end{pmatrix}
\mathrm{e}^{\beta \pi \mathrm{i} \sigma_3 / 4},\quad \zeta \in \mathrm{D}_7^\zeta;
\end{gather}
and 
\begin{gather}
M^{\mathrm{mb}}(\zeta; \beta) = C_1
\begin{pmatrix}
G^+(\mathrm{e}^{-\pi\mathrm{i}} \zeta) & G^+(\zeta) \\[0.5em]
-G^-(\mathrm{e}^{-\pi\mathrm{i}} \zeta) & G^-(\zeta)
\end{pmatrix}
\mathrm{e}^{\beta \pi \mathrm{i} \sigma_3 / 4}, ,\quad \zeta \in \mathrm{D}_8^\zeta,
\end{gather}
where \( G^\pm(\zeta) = \sqrt{\zeta/\pi} \, K_{(\beta\pm1)/2}(\zeta) \) and \( H^\pm(\zeta) = \sqrt{\pi\zeta} \, I_{(\beta\pm1)/2}(\zeta) \), with the argument of \( \zeta \) constrained to \( (-\pi/2, 3\pi/2) \).
The matrix \( M^{\mathrm{mb}}(\zeta; \beta) \) solves a \( 2 \times 2 \) Riemann-Hilbert problem with the following properties. It is analytic in \( \zeta \) for \( \zeta \in \mathbb{C} \setminus (\cup_{j=1}^8 \Sigma_j) \) and satisfies the following asymptotic normalization conditions:
\begin{gather}
M^{\mathrm{mb}}\left(\zeta; \beta\right)=
\begin{cases}
\left(\mathbb{I}_2+\mathcal{O}\left(\zeta^{-1}\right)\right)\mathrm{i}\sigma_2\,\mathrm{e}^{-\beta\pi\mathrm{i}\sigma_3/4}\mathrm{e}^{-\zeta\sigma_3}, &\mathrm{as}\,\,\, \zeta\in\mathrm{D}^\zeta_1\cup\mathrm{D}^\zeta_2\to\infty, \\[0.5em]
\left(\mathbb{I}_2+\mathcal{O}\left(\zeta^{-1}\right)\right)\mathrm{i}\sigma_2\,\mathrm{e}^{\beta\pi\mathrm{i}\sigma_3/4}\mathrm{e}^{-\zeta\sigma_3}, & \mathrm{as}\,\,\, \zeta\in\mathrm{D}^\zeta_3\cup\mathrm{D}^\zeta_4\to\infty, \\[0.5em]
\left(\mathbb{I}_2+\mathcal{O}\left(\zeta^{-1}\right)\right)\mathrm{e}^{-\beta\pi\mathrm{i}\sigma_3/4}\mathrm{e}^{\zeta\sigma_3}, &\mathrm{as}\,\,\, \zeta\in\mathrm{D}^\zeta_5\cup\mathrm{D}^\zeta_6\to\infty, \\[0.5em]
\left(\mathbb{I}_2+\mathcal{O}\left(\zeta^{-1}\right)\right)\mathrm{e}^{\beta\pi\mathrm{i}\sigma_3/4}\mathrm{e}^{\zeta\sigma_3}, & \mathrm{as}\,\,\, \zeta\in\mathrm{D}^\zeta_7\cup\mathrm{D}^\zeta_8\to\infty.
\end{cases}
\end{gather}
For \( \zeta \in \cup_{j=1}^8 \Sigma_j^0 \), the matrix \( M^{\mathrm{mb}}(\zeta; \beta) \) has continuous boundary values, related through the following jump conditions:
\begin{gather}
M^{\mathrm{mb}}_+(\zeta; \beta) = M^{\mathrm{mb}}_-(\zeta; \beta)
\begin{cases}
\mathcal{L}\left[\mathrm{e}^{-\beta \pi \mathrm{i}}\right], &\mathrm{for}\,\,\, \zeta \in \Sigma_1^0 \cup \Sigma_5^0, \\[0.5em]
\mathrm{e}^{\beta \pi \mathrm{i} \sigma_3 / 2}, &\mathrm{for}\,\,\, \zeta \in \Sigma_2^0 \cup \Sigma_6^0, \\[0.5em]
\mathcal{L}\left[\mathrm{e}^{\beta \pi \mathrm{i}}\right], &\mathrm{for}\,\,\, \zeta \in \Sigma_3^0 \cup \Sigma_7^0, \\[0.5em]
\mathrm{i} \sigma_2, &\mathrm{for}\,\,\, \zeta \in \Sigma_4^0 \cup \Sigma_8^0,
\end{cases}
\end{gather}
where \( \Sigma_j^0 = \Sigma_j \setminus \{0\}, j = 1, 2, \ldots, 8 \).
Near the origin, for \( \beta > -1 \) and \( \beta \neq 0 \), the local behavior of \( M^{\mathrm{mb}}(\zeta; \beta) \) is as follows:
\begin{gather}
M^{\mathrm{mb}}\left(\zeta; \beta\right)=
\begin{cases}
\mathcal{O}
\begin{pmatrix}
\left|\zeta\right|^{\beta/2} & \left|\zeta\right|^{-\left|\beta\right|/2}  \\[0.5em]
\left|\zeta\right|^{\beta/2} & \left|\zeta\right|^{-\left|\beta\right|/2}
\end{pmatrix},
& \mathrm{as}\,\,\, \zeta\in \mathrm{D}^\zeta_2\cup\mathrm{D}^\zeta_3\cup\mathrm{D}^\zeta_6\cup\mathrm{D}^\zeta_7\to 0, \\[2em]
\mathcal{O}
\begin{pmatrix}
\left|\zeta\right|^{-\left|\beta\right|/2} & \left|\zeta\right|^{-\left|\beta\right|/2}  \\[0.5em]
\left|\zeta\right|^{-\left|\beta\right|/2} & \left|\zeta\right|^{-\left|\beta\right|/2}
\end{pmatrix},
& \mathrm{as}\,\,\, \zeta\in \mathrm{D}^\zeta_1\cup\mathrm{D}^\zeta_4\cup\mathrm{D}^\zeta_5\cup\mathrm{D}^\zeta_8\to 0.
\end{cases}
\end{gather}

\begin{figure}[!t]
\centering
\includegraphics[scale=0.28]{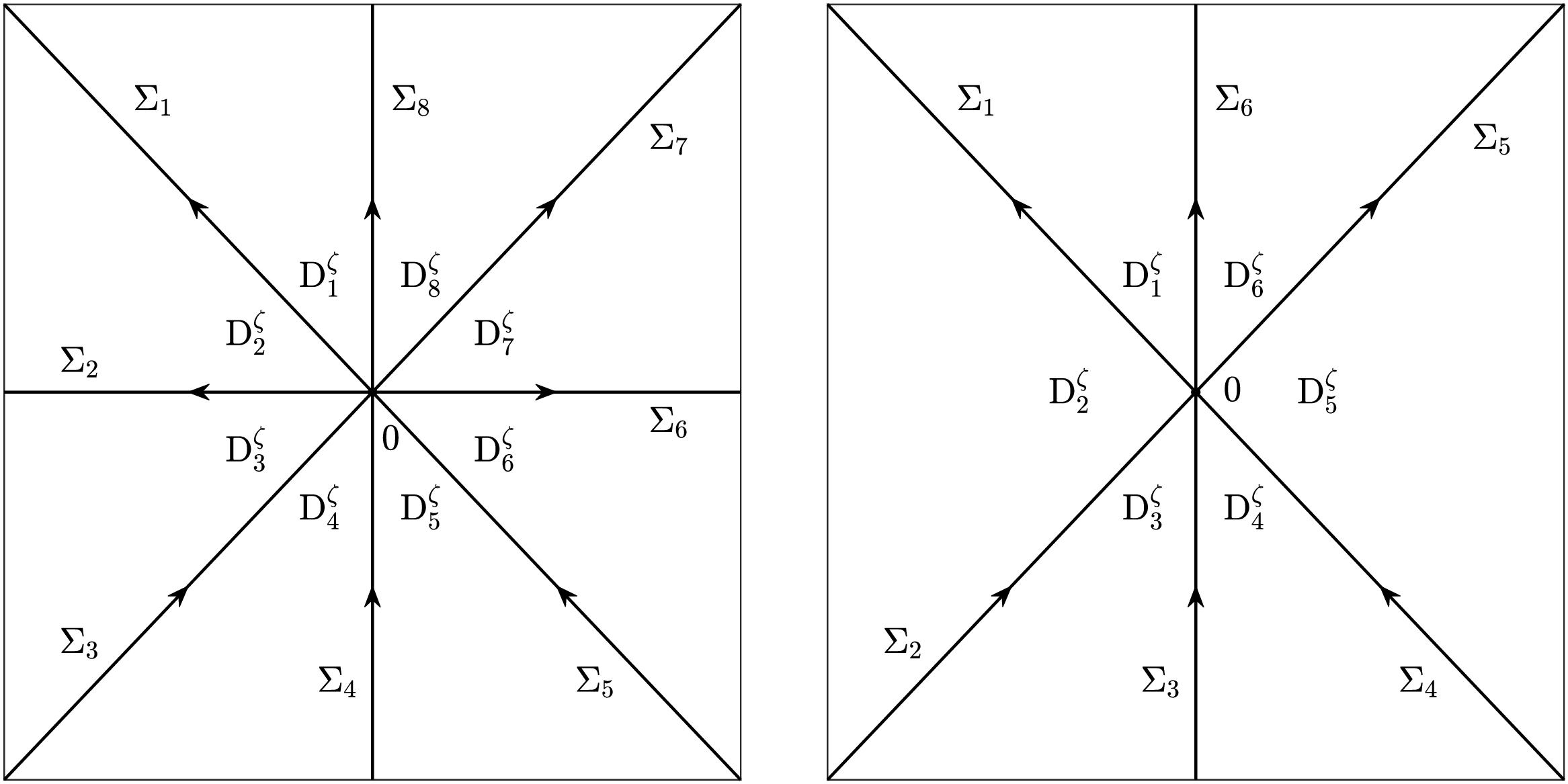}
\caption{Left: Jump contour for modified Bessel parametrix $M^{\mathrm{mb}}$; Right: Jump contour for Confluent Hypergeometric parametrix $M^{\mathrm{CH}}$.}
\label{BesselCH}
\end{figure}

\subsection{Confluent hypergeometric parametrix $M^{\mathrm{CH}}(\zeta; \kappa)$}

The confluent hypergeometric parametrix is built using the confluent hypergeometric functions \( M(\zeta; \kappa) \) and \( U(\zeta; \kappa) \), which are two standard solutions of Kummer’s equation:
\begin{gather}
\zeta \frac{\mathrm{d}^2 y}{\mathrm{d}\zeta^2} + (1 - \zeta) \frac{\mathrm{d}y}{\mathrm{d}\zeta} - \kappa y = 0,
\end{gather}
where \( M(\zeta; \kappa) \) is an entire function represented by the series:
\[
M(\zeta; \kappa) = \sum_{n=1}^\infty \frac{\Gamma(\kappa+n)}{\Gamma(\kappa) n!} \zeta^n,
\]
and \( U(\zeta; \kappa) \) has a branch point at \( \zeta = 0 \), with the following asymptotic behavior:
\[
U(\zeta; \kappa) \sim \zeta^{-\kappa},
\]
as \( \zeta \to \infty \) for \( \mathrm{arg} \, \zeta \in (-3\pi / 2, 3\pi / 2) \). The construction of the confluent hypergeometric parametrix was introduced in \cite{op1} for the analysis of asymptotic behavior of Hankel determinants and orthogonal polynomials with a jump in the Gaussian weight. In this paper, with slight modifications, we formulate the confluent hypergeometric parametrix \( M^{\mathrm{CH}}(\zeta; \kappa) \), where \( \kappa \in \mathrm{i} \mathbb{R} \), as depicted in Figure \ref{BesselCH}(Right). It is defined as follows:
\begin{gather}
M^{\mathrm{CH}}(\zeta; \kappa) =
\begin{pmatrix}
-\frac{\Gamma(1 + \kappa)}{\Gamma(-\kappa)} \mathrm{e}^{\kappa \pi \mathrm{i}} U(1 + \kappa, \zeta) & U(-\kappa, \mathrm{e}^{-\pi \mathrm{i}} \zeta) \\[0.5em]
-\mathrm{e}^{\kappa \pi \mathrm{i}} U(\kappa, \zeta) & \frac{\Gamma(1 - \kappa)}{\Gamma(\kappa)} U(1 - \kappa, \mathrm{e}^{-\pi \mathrm{i}} \zeta)
\end{pmatrix}
\mathrm{e}^{-\zeta \sigma_3 / 2},\quad \zeta \in \mathrm{D}_1^\zeta;
\end{gather}
\begin{gather}
M^{\mathrm{CH}}(\zeta; \kappa) =
\begin{pmatrix}
\Gamma(1 + \kappa) M(-\kappa, \mathrm{e}^{-\pi \mathrm{i}} \zeta) & U(-\kappa, \mathrm{e}^{-\pi \mathrm{i}} \zeta) \\[0.5em]
-\Gamma(1 - \kappa) M(1 - \kappa, \mathrm{e}^{-\pi \mathrm{i}} \zeta) & \frac{\Gamma(1 - \kappa)}{\Gamma(\kappa)} U(1 - \kappa, \mathrm{e}^{-\pi \mathrm{i}} \zeta)
\end{pmatrix}
\mathrm{e}^{\zeta / 2},\quad \zeta \in \mathrm{D}_2^\zeta;
\end{gather}
\begin{gather}
M^{\mathrm{CH}}(\zeta; \kappa) =
\begin{pmatrix}
-\frac{\Gamma(1 + \kappa)}{\Gamma(-\kappa)} \mathrm{e}^{-\kappa \pi \mathrm{i}} U(1 + \kappa, \mathrm{e}^{-2\pi \mathrm{i}} \zeta) & U(-\kappa, \mathrm{e}^{-\pi \mathrm{i}} \zeta) \\[0.5em]
-\mathrm{e}^{-\kappa \pi \mathrm{i}} U(\kappa, \mathrm{e}^{-2\pi \mathrm{i}} \zeta) & \frac{\Gamma(1 - \kappa)}{\Gamma(\kappa)} U(1 - \kappa, \mathrm{e}^{-\pi \mathrm{i}} \zeta)
\end{pmatrix}
\mathrm{e}^{-\zeta \sigma_3 / 2},\quad \zeta \in \mathrm{D}_3^\zeta;
\end{gather}
\begin{gather}
M^{\mathrm{CH}}(\zeta; \kappa) =
\begin{pmatrix}
\mathrm{e}^{-\kappa \pi \mathrm{i}} U(-\kappa, \mathrm{e}^{\pi \mathrm{i}} \zeta) & \frac{\Gamma(1 + \kappa)}{\Gamma(-\kappa)} U(1 + \kappa, \zeta) \\[0.5em]
\frac{\Gamma(1 - \kappa)}{\Gamma(\kappa)} \mathrm{e}^{-\kappa \pi \mathrm{i}} U(1 - \kappa, \mathrm{e}^{\pi \mathrm{i}} \zeta) & U(\kappa, \zeta)
\end{pmatrix}
\mathrm{e}^{\zeta \sigma_3 / 2},\quad \zeta \in \mathrm{D}_4^\zeta;
\end{gather}
\begin{gather}
M^{\mathrm{CH}}(\zeta; \kappa) =
\begin{pmatrix}
\Gamma(1 + \kappa) M(1 + \kappa, \zeta) & \frac{\Gamma(1 + \kappa)}{\Gamma(-\kappa)} U(1 + \kappa, \zeta) \\[0.5em]
-\Gamma(1 - \kappa) M(\kappa, \zeta) & U(\kappa, \zeta)
\end{pmatrix}
\mathrm{e}^{-\zeta / 2},\quad \zeta \in \mathrm{D}_5^\zeta;
\end{gather}
and 
\begin{gather}
M^{\mathrm{CH}}(\zeta; \kappa) =
\begin{pmatrix}
\mathrm{e}^{\kappa \pi \mathrm{i}} U(-\kappa, \mathrm{e}^{-\pi \mathrm{i}} \zeta) & \frac{\Gamma(1 + \kappa)}{\Gamma(-\kappa)} U(1 + \kappa, \zeta) \\[0.5em]
\frac{\Gamma(1 - \kappa)}{\Gamma(\kappa)} \mathrm{e}^{\kappa \pi \mathrm{i}} U(1 - \kappa, \mathrm{e}^{-\pi \mathrm{i}} \zeta) & U(\kappa, \zeta)
\end{pmatrix}
\mathrm{e}^{\zeta \sigma_3 / 2},\quad \zeta \in \mathrm{D}_6^\zeta.
\end{gather}
The matrix \( M^{\mathrm{CH}}(\zeta; \kappa) \) satisfies a Riemann-Hilbert problem with specific properties. It is analytic in \( \zeta \) for \( \zeta \in \mathbb{C} \setminus (\cup_{j=1}^6 \Sigma_j) \), and its asymptotic behavior is given by:
\begin{gather}
M^{\mathrm{CH}}(\zeta; \kappa) =
\begin{cases}
\left(\mathbb{I}_2 + \mathcal{O}(\zeta^{-1})\right) \zeta^{\kappa \sigma_3} \mathrm{e}^{\zeta \sigma_3 / 2}, & \mathrm{as}\,\,\,\zeta \in \mathrm{D}^\zeta_4 \cup \mathrm{D}^\zeta_5 \cup \mathrm{D}^\zeta_6 \to \infty, \\[0.5em]
\left(\mathbb{I}_2 + \mathcal{O}(\zeta^{-1})\right) \mathrm{i} \sigma_2 \mathrm{e}^{\kappa \pi \mathrm{i} \sigma_3} \zeta^{-\kappa \sigma_3} \mathrm{e}^{-\zeta \sigma_3}, &\mathrm{as}\,\,\, \zeta \in \mathrm{D}^\zeta_1 \cup \mathrm{D}^\zeta_2 \cup \mathrm{D}^\zeta_3 \to \infty.
\end{cases}
\end{gather}
For \( \zeta \in \cup_{j=1}^6 \Sigma_j^0 \), the matrix \( M^{\mathrm{CH}}(\zeta; \kappa) \) admits continuous boundary values, denoted by \( M^{\mathrm{CH}}_+(\zeta; \kappa) \) and \( M^{\mathrm{CH}}_-(\zeta; \kappa) \), which satisfy the following jump conditions:
\begin{gather}
M^{\mathrm{CH}}_+(\zeta; \kappa) = M^{\mathrm{CH}}_-(\zeta; \kappa)
\begin{cases}
\mathcal{L}\left[\mathrm{e}^{\kappa \pi \mathrm{i}}\right], &\mathrm{for}\,\,\, \zeta \in \Sigma_1^0 \cup \Sigma_5^0, \\[0.5em]
\mathcal{L}\left[\mathrm{e}^{-\kappa \pi \mathrm{i}}\right], &\mathrm{for}\,\,\, \zeta \in \Sigma_2^0 \cup \Sigma_4^0, \\[0.5em]
\mathrm{i} \sigma_2 \mathrm{e}^{-\kappa \pi \mathrm{i} \sigma_3}, &\mathrm{for}\,\,\, \zeta \in \Sigma_3^0, \\[0.5em]
\mathrm{i} \sigma_2 \mathrm{e}^{\kappa \pi \mathrm{i} \sigma_3}, &\mathrm{for}\,\,\, \zeta \in \Sigma_6^0,
\end{cases}
\end{gather}
where \( \Sigma_j^0 = \Sigma_j \setminus \{0\}, j=1, 2, \ldots, 6 \).
The local behavior of \( M^{\mathrm{CH}}(\zeta; \kappa) \) near the origin is as follows:
\begin{gather}
M^{\mathrm{CH}}(\zeta; \kappa) =
\begin{cases}
\mathcal{O}
\begin{pmatrix}
1 & \log|\zeta| \\[0.5em]
1 & \log|\zeta|
\end{pmatrix}, &\mathrm{as}\,\,\, \zeta \in \mathrm{D}^\zeta_2 \cup \mathrm{D}^\zeta_5 \to 0, \\[2em]
\mathcal{O}
\begin{pmatrix}
\log|\zeta| & \log|\zeta| \\[0.5em]
\log|\zeta| & \log|\zeta|
\end{pmatrix}, & \mathrm{as}\,\,\,\zeta \in \mathrm{D}^\zeta_1 \cup \mathrm{D}^\zeta_3 \cup \mathrm{D}^\zeta_4 \cup \mathrm{D}^\zeta_6 \to 0.
\end{cases}
\end{gather}

\section{A generalized $g$-function construction}

The so-called \( g \)-function technique was pioneered by Deift, Venakides, and Zhou during their analysis of the zero-dispersion limit of the KdV equation \cite{16}. One of the most prominent uses of the \( g \)-function is to achieve normalization at infinity in Riemann-Hilbert problems, as seen in the theory of orthogonal polynomials \cite{45}. In particular, this is accomplished by constructing the function from the equilibrium measure, a method originally applied in the study of strong asymptotics of orthogonal polynomials with exponential weights \cite{23,43}. Since its inception, the Deift-Zhou steepest descent method, which heavily relies on the \( g \)-function, has been employed to solve numerous asymptotic problems. For example, in the case of orthogonal polynomials with modified Jacobi weights, the \( g \)-function is obtained through a conformal mapping from \( \mathbb{C} \setminus [-1, 1] \) to the exterior of the unit circle \cite{op2}. Similarly, in the context of the large-time asymptotics of infinite-order rogue waves, the \( g \)-function is derived from the associated spectral curve \cite{30}.

In the case of soliton gas dynamics, for the large-\( \x \) asymptotics of the initial condition \( u(x, 0) \), it is necessary to construct a \( g_0 \)-function for \( \x < 0 \). This construction is analogous to that described in \cite{21}, where the \( g_0 \)-function, denoted \( g_0 = g_0(\lambda) \), is given by:
\begin{gather}\label{g0}
g_0 = \lambda - \int_{\eta_2}^{\lambda} \frac{\zeta^2 - \rho}{R_0(\zeta)} \, \mathrm{d}\zeta,
\end{gather}
where \( \rho = \eta_2^2 \left( 1 - \frac{eE(\eta_1 / \eta_2)}{eK(\eta_1 / \eta_2)} \right) \) and \( R_0(\zeta) \) is the branch of \( \sqrt{(\zeta^2 - \eta_2^2)(\zeta^2 - \eta_1^2)} \) with branch cuts along \( (\eta_1, \eta_2) \cup (-\eta_2, -\eta_1) \), behaving as \( R_0(\zeta) = \zeta^2 + \mathcal{O}(1) \) as \( \zeta \to \infty \). More details about this type of \( g \)-function can be found in works such as \cite{23,26,62,63}.

\begin{RH}\label{RH-g}
The \( g_0 \)-function defined in \eqref{g0} satisfies the  following RH problem:

 \begin{itemize}

 \item{} The \( g_0 \)-function is analytic in \( \lambda \) for \( \lambda \in \mathbb{C} \setminus [-\eta_2, \eta_2] \);

  \item{} It is normalized as \( g_0 = \mathcal{O}(\lambda^{-1}) \) at infinity;

  \item{} The \( g_0 \)-function satisfies the following jump conditions:
\begin{gather}
\begin{aligned}
g_{0+}+g_{0-}&=2\lambda,  && \mathrm{for}\,\,\, \lambda\in\left(\eta_1, \eta_2\right)\cup\left(-\eta_2, -\eta_1\right)\\
g_{0+}-g_{0-}&=\Omega_1,  &&\mathrm{for}\,\,\,\lambda\in\left(-\eta_1, \eta_1\right).
\end{aligned}
\end{gather}
\end{itemize}
\end{RH}

For the long-time asymptotics of the soliton gas \( u(x, t) \) of the SP equation, an essential step involves determining the sign of the real part of the phase \( \theta \):
\begin{gather}
\Re(\theta) = \frac{\Re(\lambda)}{4} \left( \xi + \frac{1}{|\lambda|^2} \right),
\end{gather}
with the corresponding sign charts shown in Figure \ref{Sign}. It is clear that when \( \xi > -\eta_2^{-2} \), a small-norm argument suffices. However, for \( \xi < -\eta_2^{-2} \), a \( g \)-function must be constructed to achieve exponential decay.

\begin{figure}[!t]
\centering
\includegraphics[scale=0.3]{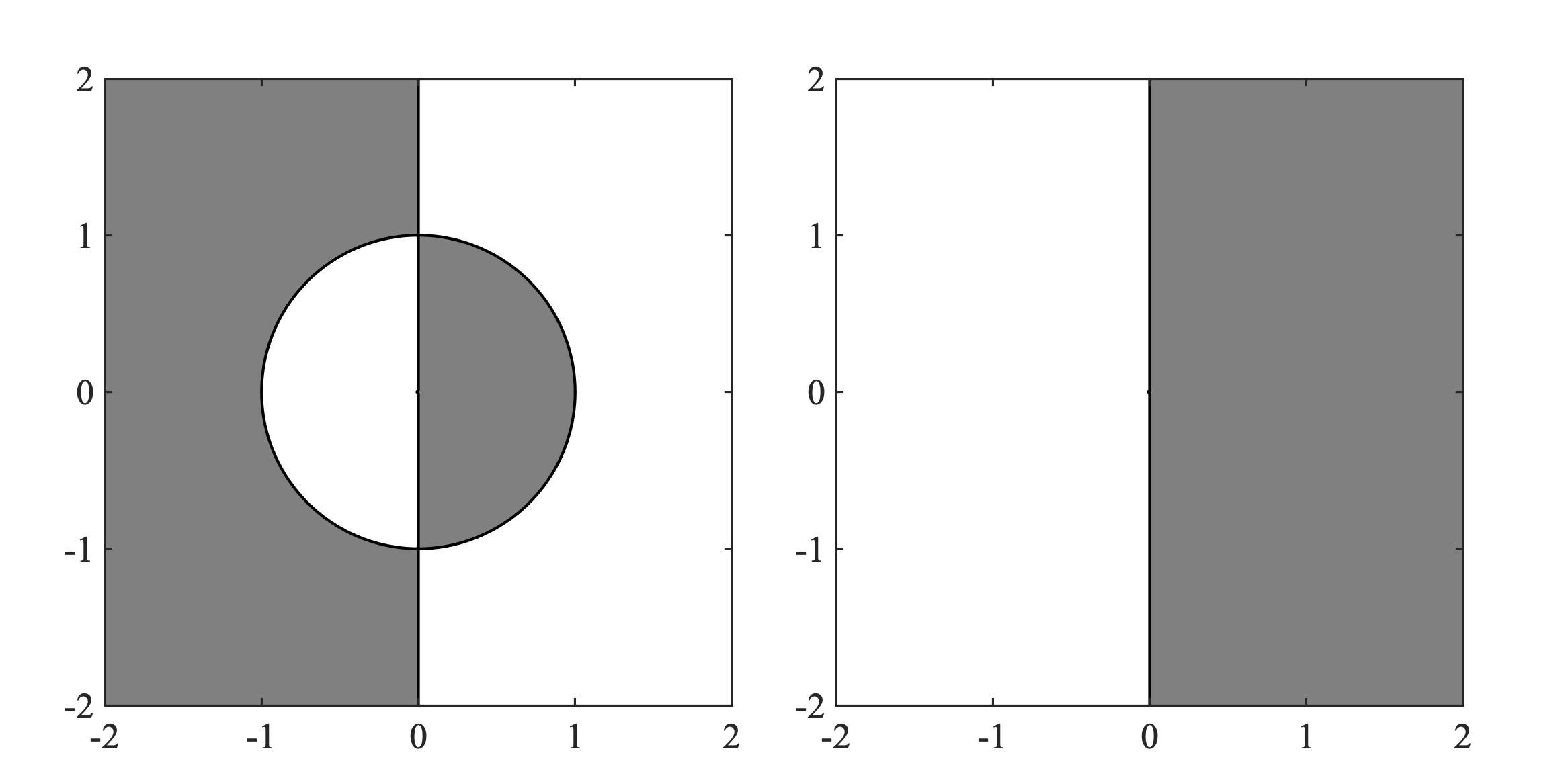}
\caption{Sign charts for \( \Re(\theta) \) with \( \xi = -1 \) (Left) and \( \xi = 0 \) (Right): \( \Re(\theta) > 0 \) in grey regions, and \( \Re(\theta) < 0 \) in white regions.}
\label{Sign}
\end{figure}

In this work, we construct a \( g \)-function \( g = g(\lambda; \xi) \) for the SP equation as follows:
\begin{gather}\label{g-function}
g = \theta - p,
\end{gather}
where the function \( p = p(\lambda; \xi) \) is defined as:
For \( \lambda \in \{ \lambda \mid \Re(\lambda) \ge 0 \} \setminus [0, \eta_2) \),
\begin{gather}\label{p-right}
p(\lambda; \xi) = \int_{\eta_2}^{\lambda} \frac{Q(y; \xi)}{4 y^2 R(y; \xi)} \, \mathrm{d}y,
\end{gather}
and for \( \lambda \in \{ \lambda \mid \Re(\lambda) \le 0 \} \setminus (-\eta_2, 0] \),
\begin{gather}
p(\lambda; \xi) = \int_{-\eta_2}^{\lambda} \frac{Q(y; \xi)}{4 y^2 R(y; \xi)} \, \mathrm{d}y,
\end{gather}
with the integration paths running from \( \pm \eta_2 \) to \( \lambda \). The functions \( R(y; \xi) \) and \( Q(y; \xi) \) depend on the parameter \( \xi \in (\xi_{\mathrm{crit}}, -\eta_2^{-2}) \cup (-\infty, \xi_{\mathrm{crit}}) \). Here, \( R(y; \xi) \) is the branch of \( \sqrt{(y^2 - \eta_2^2)(y^2 - \alpha^2)} \), with branch cuts on \( (\alpha, \eta_2) \cup (-\eta_2, -\alpha) \) and behavior \( R(y; \xi) = y^2 + \mathcal{O}(1) \) at infinity. The parameter \( \alpha \) is determined by the Whitham evolution equation \eqref{Whitham} for \( \xi \in (\xi_{\mathrm{crit}}, -\eta_2^{-2}) \), and \( \alpha = \eta_1 \) for \( \xi \in (-\infty, \xi_{\mathrm{crit}}) \).
For \( \xi \in (\xi_{\mathrm{crit}}, -\eta_2^{-2}) \), \( Q(y; \xi) \) is defined by:
\begin{gather}\label{Q1}
Q(y; \xi) = (y^2 - \alpha^2)\left( \xi y^2 - \frac{\eta_2}{\alpha} \right),
\end{gather}
and for \( \xi \in (-\infty, \xi_{\mathrm{crit}}) \), \( Q(y; \xi) \) is expressed as:
\begin{gather}\label{Q2}
Q(y; \xi) = \xi y^4 + \eta_2^2 \left( \xi + \frac{1}{\eta_1 \eta_2} \right) \left( \frac{eE(\eta_1 / \eta_2)}{eK(\eta_1 / \eta_2)} - 1 \right) y^2 + \eta_1 \eta_2.
\end{gather}

\begin{RH}\label{RH-g2}
The \( g \)-function defined in \eqref{g-function} satisfies the following RH problem:

\begin{itemize}

\item{} It is analytic for \( \lambda \in \mathbb{C} \setminus [-\eta_2, \eta_2] \);

 \item{} It normalizes to \( g = \mathcal{O}(\lambda^{-1}) \) as \( \lambda \to \infty \);

  \item{} For \( \lambda \in (-\eta_2, -\alpha) \cup (-\alpha, \alpha) \cup (\alpha, \eta_2) \), the \( g \)-function has continuous boundary values and satisfies the following jump conditions:
\begin{gather}
\begin{aligned}
g_{+}+g_{-}&=2\theta,  && \mathrm{for}\,\,\, \lambda\in\left(\alpha, \eta_2\right)\cup\left(-\eta_2, -\alpha\right),\\[0.5em]
g_{+}-g_{-}&=\frac{\Omega^\alpha}{4}\left(\xi+\frac{1}{\alpha \eta_2}\right),  &&\mathrm{for}\,\,\,\lambda\in\left(-\alpha, \alpha\right),
\end{aligned}
\end{gather}
where \( \Omega^\alpha = -\pi \mathrm{i} \eta_2 / eK(m_\alpha) \) for \( \xi \in (\xi_{\mathrm{crit}}, -\eta_2^{-2}) \), and \( \Omega^\alpha = \Omega_1 \) for \( \xi \in (-\infty, \xi_{\mathrm{crit}}) \). Near the points \( \pm \eta_2 \) and \( \pm \alpha \), the \( g \)-function exhibits the following local behavior:
\begin{gather}
g = \mathcal{O}(1), \quad \mathrm{as }\,\,\, \lambda \to \pm \eta_2, \pm \alpha.
\end{gather}

\end{itemize}
\end{RH}

\begin{proposition}\label{p-sign}
For $\xi\in\left(\xi_{\mathrm{crit}}, -\eta_2^{-2}\right)$, we have
\begin{gather}
\begin{aligned}
p_++p_->0, \quad &\mathrm{if}\,\,\, \lambda\in\left[\eta_1, \alpha \right);  \\[0.5em]
\Re\left(p_+\right)=0, \quad &\mathrm{if}\,\,\, \lambda\in\left[\alpha, \eta_2\right];  \\[0.5em]
\lim_{\Im(\lambda)\to 0^+}\frac{\partial \Re(p)}{\partial\Im(\lambda)}<0, \quad &\mathrm{if}\,\,\, \Re(\lambda)\in\left(\alpha, \eta_2\right).
\end{aligned}
\end{gather}
\end{proposition}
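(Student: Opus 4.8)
The plan is to exploit the identity $p=\theta-g$ from \eqref{g-function} and translate the jump data of the $g$-function recorded in RH Problem~\ref{RH-g2} into statements about the boundary values $p_\pm$, supplemented by two elementary inputs. The first is the Schwarz symmetry $p(\bar\lambda)=\overline{p(\lambda)}$, which holds because $Q(\cdot;\xi)$ in \eqref{Q1} has real coefficients, the branch $R(\cdot;\xi)$ satisfies $R(\bar y;\xi)=\overline{R(y;\xi)}$, and the base point $\eta_2$ in \eqref{p-right} is real. The second is the signs of $Q$ and $R$ on the relevant real subintervals. For $Q$, note that since $\xi<-\eta_2^{-2}<0$ the factor $\xi y^2-\eta_2/\alpha$ is strictly negative for every real $y$; hence on the gap $(\eta_1,\alpha)$ one has $y^2-\alpha^2<0$ and so $Q>0$, while on the band $(\alpha,\eta_2)$ one has $y^2-\alpha^2>0$ and so $Q<0$. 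For the branch $R$ I would track its argument starting from $R\sim y^2$ at infinity: following $R$ down the imaginary axis (which meets no cut) forces $R(0)=-\alpha\eta_2<0$, so $R<0$ throughout $(-\alpha,\alpha)\supset(\eta_1,\alpha)$; approaching the cut $(\alpha,\eta_2)$ from above, a local analysis near $\eta_2$, where the upper value of $\sqrt{y-\eta_2}$ equals $\mathrm{i}\sqrt{|y-\eta_2|}$, shows $R_+=\mathrm{i}\rho$ with $\rho>0$, and $\rho$ cannot vanish on the open band.

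With these preliminaries the middle statement is immediate. Since $\theta$ is analytic across $(\alpha,\eta_2)$ and $g_++g_-=2\theta$ there, we get $p_++p_-=2\theta-(g_++g_-)=0$ on the band. Combining this with $p_-=\overline{p_+}$ yields $2\Re(p_+)=0$, that is $\Re(p_+)=0$ on $(\alpha,\eta_2)$, and the endpoints follow by continuity, using $g=\mathcal{O}(1)$ near $\alpha$ and $\eta_2$.

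For the first statement, observe that on the gap $(\eta_1,\alpha)$ the interval is not a cut of $R$, so $R_+=R_-=R$ and the two derivatives $p'_\pm=Q/(4\lambda^2R)$ coincide; hence $\tfrac{d}{d\lambda}(p_++p_-)=Q/(2\lambda^2R)$, which is real. Continuity at $\lambda=\alpha$ together with $p_++p_-\equiv0$ on the band gives $(p_++p_-)(\alpha)=0$, so that $(p_++p_-)(\lambda)=\int_\alpha^\lambda \tfrac{Q(y)}{2y^2R(y)}\,\mathrm{d}y$ for $\lambda\in(\eta_1,\alpha)$. On this interval $Q>0$, $R<0$ and $y^2>0$, so the integrand is strictly negative; integrating from $\alpha$ down to $\lambda<\alpha$ flips the orientation and produces a strictly positive quantity. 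Thus $p_++p_->0$ on $(\eta_1,\alpha)$, and the value at $\lambda=\eta_1$ is recovered by the limit.

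For the last statement, writing $\lambda=x+\mathrm{i}y$ with $y=\Im(\lambda)$ and using analyticity, $\partial p/\partial y=\mathrm{i}\,p'(\lambda)$, whence $\partial\Re(p)/\partial\Im(\lambda)=-\Im(p')$; the claim is therefore equivalent to $\Im(p'_+)>0$ for $x\in(\alpha,\eta_2)$. On the band $p'_+=Q/(4x^2R_+)$ with $R_+=\mathrm{i}\rho$ and $\rho>0$, so $p'_+=-\mathrm{i}\,Q/(4x^2\rho)$ and $\Im(p'_+)=-Q/(4x^2\rho)$; since $Q<0$ and $\rho>0$ there, $\Im(p'_+)>0$, giving a strictly negative limit for $\partial\Re(p)/\partial\Im(\lambda)$ as $\Im(\lambda)\to0^+$. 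The main obstacle I anticipate is the careful, globally consistent determination of the branch of $R$, namely its real negative sign on the gap and the positivity of $\Im(R_+)$ on the band, since every inequality above hinges on it; the remaining ingredients are routine sign bookkeeping and one application of the fundamental theorem of calculus.
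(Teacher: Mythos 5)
Your proof is correct and follows essentially the same route as the paper's: the paper likewise reduces everything to the signs of $Q$ and $R$ on the gap and the band, writing $p_++p_-=\int_\lambda^\alpha Q(y;\xi)/(2y^2|R(y;\xi)|)\,\mathrm{d}y$ on $[\eta_1,\alpha)$ and $p_+=\mathrm{i}\int_\lambda^{\eta_2}Q(y;\xi)/(4y^2|R_+(y;\xi)|)\,\mathrm{d}y$ on the band, then reading off the signs. The only minor differences are that you derive $\Re(p_+)=0$ from the $g$-function jump $g_++g_-=2\theta$ together with Schwarz symmetry rather than by direct computation of $p_+$, and that you make explicit the branch determination ($R<0$ on the gap, $R_+\in\mathrm{i}\,\mathbb{R}_{>0}$ on the band) that the paper absorbs into the absolute values.
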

\begin{proof}
When \( \lambda \in [\eta_1, \alpha) \), we can express the sum of \( p_+ \) and \( p_- \) as:
\begin{gather*}
p_+ + p_- = \int_{\lambda}^{\alpha} \frac{Q(y; \xi)}{2 y^2 |R(y; \xi)|} \, \mathrm{d}y.
\end{gather*}
Given the definition of \( Q(y; \xi) \) in \eqref{Q1}, it follows that \( p_+ + p_- > 0 \).
For \( \lambda \in [\alpha, \eta_2] \), a straightforward calculation shows that:
\begin{gather*}
p_+ = \mathrm{i} \int_{\lambda}^{\eta_2} \frac{Q(y; \xi)}{4 y^2 |R_+(y; \xi)|} \, \mathrm{d}y \in \mathrm{i} \mathbb{R}.
\end{gather*}
Finally, for \( \Re(\lambda) \in (\alpha, \eta_2) \), from \eqref{p-right}, we obtain:
\begin{gather*}
\lim_{\Im(\lambda) \to 0^+} \frac{\partial \Re(p)}{\partial \Im(\lambda)} = \frac{Q(\Re(\lambda); \xi)}{4 \Re(\lambda)^2 |R_+(\Re(\lambda); \xi)|} < 0.
\end{gather*}
This completes the proof.
\end{proof}

\begin{proposition}\label{p-sign-1}
For $\xi\in\left(-\infty, \xi_{\mathrm{crit}}\right)$, we have
\begin{gather}
\begin{aligned}
\Re\left(p_+\right)=0, \quad &\mathrm{if}\,\,\, \lambda\in\left[\eta_1, \eta_2\right];  \\[0.5em]
\lim_{\Im(\lambda)\to 0^+}\frac{\partial \Re(p)}{\partial\Im(\lambda)}<0, \quad &\mathrm{if}\,\,\, \lambda\in\left(\eta_1, \eta_2\right).
\end{aligned}
\end{gather}
\end{proposition}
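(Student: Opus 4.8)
The plan is to mirror the proof of Proposition \ref{p-sign}, specializing to the regime $\xi\in(-\infty,\xi_{\mathrm{crit}})$, where $\alpha=\eta_1$. In this case $R(y;\xi)$ is the branch of $\sqrt{(y^2-\eta_2^2)(y^2-\eta_1^2)}$ with cuts along $(\eta_1,\eta_2)\cup(-\eta_2,-\eta_1)$, and $Q$ is given by \eqref{Q2}. Because $\alpha=\eta_1$, the interval $[\eta_1,\alpha)$ collapses to a single point, so only the analogues of the second and third statements of Proposition \ref{p-sign} survive, and these are exactly the two claims to be established.

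For the first claim I would observe that for $y\in(\eta_1,\eta_2)$ one has $(y^2-\eta_2^2)(y^2-\eta_1^2)<0$, so the boundary value $R_+(y;\xi)$ is purely imaginary. Taking the integration path in \eqref{p-right} along the upper edge of the cut and using that both $Q(y;\xi)$ and $y^2$ are real on $(\eta_1,\eta_2)$, the integrand $Q/(4y^2R_+)$ is purely imaginary, whence
\begin{gather*}
p_+=\mathrm{i}\int_{\lambda}^{\eta_2}\frac{Q(y;\xi)}{4y^2\left|R_+(y;\xi)\right|}\,\mathrm{d}y\in\mathrm{i}\mathbb{R},
\end{gather*}
giving $\Re(p_+)=0$ on $[\eta_1,\eta_2]$. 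For the second claim, differentiating the integral representation \eqref{p-right} (equivalently, invoking the Cauchy--Riemann equations for the analytic function $p$ together with $p'(\lambda)=Q(\lambda;\xi)/(4\lambda^2R(\lambda;\xi))$) yields
\begin{gather*}
\lim_{\Im(\lambda)\to 0^+}\frac{\partial\Re(p)}{\partial\Im(\lambda)}=\frac{Q(\Re(\lambda);\xi)}{4\Re(\lambda)^2\left|R_+(\Re(\lambda);\xi)\right|},
\end{gather*}
so the statement reduces to the sign condition $Q(y;\xi)<0$ for $y\in(\eta_1,\eta_2)$.

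This sign analysis is the main obstacle, and I would settle it by exploiting that $Q$ in \eqref{Q2} is \emph{affine} in $\xi$. Writing $m_1=\eta_1/\eta_2$ and $\rho=\eta_2^2\bigl(1-eE(m_1)/eK(m_1)\bigr)$, a regrouping of \eqref{Q2} gives $\partial_\xi Q=y^2(y^2-\rho)$, since the $\xi$-coefficient is $y^4+\eta_2^2(eE/eK-1)y^2=y^2(y^2-\rho)$. The desired positivity $\partial_\xi Q>0$ on $(\eta_1,\eta_2)$ then follows once $\rho<\eta_1^2$, i.e. $eE(m_1)/eK(m_1)>1-m_1^2$. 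This elliptic inequality I would prove by setting $f(m)=eE(m)-(1-m^2)eK(m)$, noting $f(0)=0$, and computing via the standard derivative formulas for $eK,eE$ that $f'(m)=m\,eK(m)>0$, so $f>0$ on $(0,1)$.

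To anchor the monotonicity, I would use the value $\xi=\xi_{\mathrm{crit}}$: by the Whitham relation \eqref{Whitham} at $\alpha=\eta_1$, the middle coefficient of \eqref{Q2} matches that of the factored form \eqref{Q1} with $\alpha=\eta_1$, while the leading and constant coefficients match automatically, so
\begin{gather*}
Q(y;\xi_{\mathrm{crit}})=(y^2-\eta_1^2)\left(\xi_{\mathrm{crit}}\,y^2-\frac{\eta_2}{\eta_1}\right)<0,\qquad y\in(\eta_1,\eta_2),
\end{gather*}
both factors having a definite sign because $\xi_{\mathrm{crit}}<0$. Combining this anchor with $\partial_\xi Q>0$, for any $\xi<\xi_{\mathrm{crit}}$ we obtain $Q(y;\xi)<Q(y;\xi_{\mathrm{crit}})<0$ on $(\eta_1,\eta_2)$, which completes the proof. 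The only genuinely nontrivial ingredient is the elliptic inequality $\rho<\eta_1^2$; everything else is bookkeeping that parallels Proposition \ref{p-sign}.
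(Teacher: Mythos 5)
Your proof is correct and follows essentially the same route as the paper: both arguments reduce the second claim to the sign condition $Q(y;\xi)<0$ on $(\eta_1,\eta_2)$, exploit that $Q$ is affine in $\xi$ with $\partial_\xi Q>0$, anchor at $\xi=\xi_{\mathrm{crit}}$ where $Q(\eta_1;\xi_{\mathrm{crit}})=0$, and rest on the same elliptic inequality $eE(m)/eK(m)>1-m^2$. The only differences are cosmetic but welcome: you verify the anchor via the factorization $Q(y;\xi_{\mathrm{crit}})=(y^2-\eta_1^2)(\xi_{\mathrm{crit}}y^2-\eta_2/\eta_1)$ and prove the elliptic inequality (via $f(m)=eE-(1-m^2)eK$, $f'(m)=m\,eK(m)$) rather than citing it, and by checking $\partial_\xi Q>0$ pointwise on all of $(\eta_1,\eta_2)$ you avoid the paper's appeal to the even-quartic structure of $Q$.
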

\begin{proof}
Using a similar approach as in Proposition \ref{p-sign}, we can show that \( p_+ \in \mathrm{i} \mathbb{R} \) for \( \lambda \in [\eta_1, \eta_2] \).
To demonstrate the second property, we examine the defining function \( Q \) from \eqref{Q2} and claim that it satisfies:
\begin{gather*}
Q(\lambda; \xi) < 0, \quad \text{for all } \lambda \in (\eta_1, \eta_2).
\end{gather*}
We observe that \( Q(0; \xi) = \eta_1 \eta_2 > 0 \) and note that \( Q(\lambda; \xi) \) is a quartic, even polynomial. Therefore, to establish \( Q(\lambda; \xi) < 0 \), we need to check that \( Q(\eta_1; \xi) < 0 \).
Differentiating \( Q(\eta_1; \xi) \) with respect to \( \xi \) gives:
\begin{gather*}
\frac{\partial Q(\eta_1; \xi)}{\partial \xi} = \eta_1^2 \eta_2^2 \left( \frac{eE(\eta_1 / \eta_2)}{eK(\eta_1 / \eta_2)} + \frac{\eta_1^2}{\eta_2^2} - 1 \right).
\end{gather*}
Recalling the inequality:
\begin{gather*}
1 - \zeta^2 < \frac{eE(\zeta)}{eK(\zeta)} < 1, \quad \text{for all } \zeta \in (0, 1),
\end{gather*}
we deduce that \( \frac{\partial Q(\eta_1; \xi)}{\partial \xi} > 0 \). Hence, it follows that \( Q(\eta_1; \xi) < 0 \), as \( Q(\eta_1; \xi_{\mathrm{crit}}) = 0 \).
This completes the proof.
\end{proof}

\section{Riemann-Hilbert problem for the short-pulse soliton gas}

The Riemann-Hilbert problem for the matrix \( M^\infty \) indicates that the associated jump matrices, as described in \eqref{Minfty}, consist of Cauchy integrals. To facilitate further analysis, we perform a contour deformation and introduce a new \( 2 \times 2 \) matrix-valued function, denoted as \( Y = Y(\lambda; \x, t) \), which is defined as follows:
\begin{gather}
Y(\lambda; \x, t)=\begin{cases}
M^\infty(\lambda; \x, t) \mathcal{L}^{t\theta}\left[\mathrm{i}\left(\mathcal{P}_1+\mathcal{P}_2\right)
\right], &\mathrm{for}\,\,\, \lambda\,\,\, \text{inside} \,\,\,\Gamma_+,  \\[0.5em]
M^\infty(\lambda; \x, t) \mathcal{U}^{t\theta}\left[\mathrm{i}\left(\mathcal{P}_{-1}+\mathcal{P}_{-2}\right)
\right], &\mathrm{for}\,\,\, \lambda \,\,\, \text{inside} \,\,\, \Gamma_-, \\[0.5em]
M^\infty(\lambda; \x, t), & \mathrm{for}\,\,\,\lambda\,\,\, \text{outside}\,\,\, \Gamma_+\cup\Gamma_-.
\end{cases}
\end{gather}

\begin{RH} \label{RH-7} $Y(\lambda; \x, t)$ satisfies the following RH problem:

\begin{itemize}

\item{} \( Y \) is analytic for \( \lambda \in \mathbb{C} \setminus \left( [-\eta_2, -\eta_1] \cup [\eta_1, \eta_2] \right) \);

\item{} $Y$ normalizes to the identity matrix \( \mathbb{I}_2 \) as \( \lambda \to \infty \);

\item{} When \( \lambda \in (\eta_1, \eta_0) \cup (\eta_0, \eta_2) \cup (-\eta_2, -\eta_0) \cup (-\eta_0, -\eta_1) \), the matrix \( Y \) admits continuous boundary values, denoted by \( Y_+ \) and \( Y_- \). These boundary values are related through the following jump conditions, derived using the Sokhotski-Plemelj formula:
\begin{gather}\label{Y-Jump}
Y_+(\lambda; \x, t)=Y_-(\lambda; \x, t)
\begin{cases}
\mathcal{L}^{t\theta}\left[\mathrm{i}r\right], &\mathrm{for}\,\,\, \lambda\in\left(\eta_1, \eta_0\right)\cup\left(\eta_0, \eta_2\right),\\[0.5em]
\mathcal{U}^{t\theta}\left[\mathrm{i}r\right],  &\mathrm{for}\,\,\, \lambda\in\left(-\eta_2, -\eta_0\right)\cup\left(-\eta_0, -\eta_1\right),
\end{cases}
\end{gather}
where the values of \( r \) on the interval \( (-\eta_2, -\eta_0) \cup (-\eta_0, -\eta_1) \) are determined by the symmetry \( r(-\lambda) = r(\lambda) \).
\end{itemize}
\end{RH}
For the case of the first type of generalized reflection coefficient, \( r = r_0 \), when \( \beta_0 = 0 \), the matrix \( Y \) has continuous boundary values on \( (\eta_1, \eta_2) \cup (-\eta_2, -\eta_1) \), which satisfy:
\begin{gather}\label{Y-Jump-Special}
Y_+(\lambda; \x, t)=Y_-(\lambda; \x, t)
\begin{cases}
\mathcal{L}^{t\theta}\left[\mathrm{i}r\right], &\mathrm{for}\,\,\, \lambda\in\left(\eta_1, \eta_2\right),\\[0.5em]
\mathcal{U}^{t\theta}\left[\mathrm{i}r\right],  &\mathrm{for}\,\,\, \lambda\in\left(-\eta_2, -\eta_1\right),
\end{cases}
\end{gather}
However, at the endpoints \( \pm \eta_1 \) and \( \pm \eta_2 \), the boundary values \( Y_\pm \) are not well-defined due to the singularities present in the reflection coefficients \( r_0 \) and \( r_c \) at \( \lambda = \pm \eta_0 \). As a result, the limits of \( Y \) do not exist at these points. To establish the uniqueness of the solution for \( Y \), it is essential to account for the local behavior of \( Y \) near both the endpoints and the singularities at \( \pm \eta_0 \).

Near each endpoint \( \eta_j \) (for \( j = 1, 2 \)), as \( \lambda \to \eta_j \), the local behavior of \( Y \) is given by:
\begin{gather}\label{local-Y-eta-1}
Y(\lambda; \x, t)=
\begin{cases}
\mathcal{O}
\begin{pmatrix}
1 &1  \\
1 &1
\end{pmatrix}
, & \mathrm{if} \,\,\, \beta_j\in\left(0, +\infty\right),  \\[1.5em]
\mathcal{O}
\begin{pmatrix}
\log \left|\lambda-\eta_j\right| &1  \\
\log \left|\lambda-\eta_j\right|  &1
\end{pmatrix}
, & \mathrm{if} \,\,\, \beta_j=0,  \\[1.5em]
\mathcal{O}
\begin{pmatrix}
\left|\lambda-\eta_j\right|^{\beta_j} &1\\
\left|\lambda-\eta_j\right|^{\beta_j} &1
\end{pmatrix}
,  & \mathrm{if} \,\,\, \beta_j\in\left(-1, 0\right),
\end{cases}
\end{gather}
with the \( \mathcal{O} \)-terms understood element-wise. Similarly, as \( \lambda \to -\eta_j \), the local behavior of \( Y \) near the endpoints \( -\eta_j \) (for \( j = 1, 2 \)) is:
\begin{gather}\label{local-Y-eta-2}
Y(\lambda; \x, t)=
\begin{cases}
\mathcal{O}
\begin{pmatrix}
1 &1  \\
1 &1
\end{pmatrix}
, & \mathrm{if} \,\,\, \beta_j\in\left(0, +\infty\right),  \\[1.5em]
\mathcal{O}
\begin{pmatrix}
1 & \log\left|\lambda+\eta_j\right|  \\
1 & \log\left|\lambda+\eta_j\right|
\end{pmatrix}
, & \mathrm{if} \,\,\, \beta_j=0,  \\[1.5em]
\mathcal{O}
\begin{pmatrix}
1& \left|\lambda+\eta_j\right|^{\beta_j}\\
1& \left|\lambda+\eta_j\right|^{\beta_j}
\end{pmatrix}
,  & \mathrm{if} \,\,\, \beta_j\in\left(-1, 0\right).
\end{cases}
\end{gather}

The local behavior of \( Y \) near the endpoints \( \pm\eta_j \) for \( j = 1, 2 \) remains consistent for both reflection coefficients \( r_0 \) and \( r_c \). However, near the singularities \( \pm\eta_0 \), the behavior of \( Y \) differs between the two cases.
For the first type of generalized reflection coefficient \( r_0 \), we focus on the case where \( \beta_0 \neq 0 \) due to the jump conditions in \eqref{Y-Jump-Special}. The local behavior of \( Y \) near \( \pm\eta_0 \) depends on the value of \( \beta_0 \):
When \( \beta_0 > 0 \),
\begin{gather}
Y=\mathcal{O}
\begin{pmatrix}
1 &1  \\
1 & 1
\end{pmatrix}, \quad
\mathrm{as}\,\,\, \lambda\to\pm\eta_0;
\end{gather}
When \( \beta_0 \in (-1, 0) \),
\begin{gather}\label{local-Y-eta-01}
Y(\lambda; \x, t)=
\begin{cases}
\mathcal{O}
\begin{pmatrix}
\left|\lambda-\eta_0\right|^{\beta_0} &1 \\
\left|\lambda-\eta_0\right|^{\beta_0} &1
\end{pmatrix}
, & \text{as} \,\,\, \lambda\to\eta_0,  \\[1.5em]
\mathcal{O}
\begin{pmatrix}
1& \left|\lambda+\eta_0\right|^{\beta_0} \\
1& \left|\lambda+\eta_0\right|^{\beta_0}
\end{pmatrix}
,  & \text{as} \,\,\, \lambda\to-\eta_0.
\end{cases}
\end{gather}
For the second type of reflection coefficient \( r_c \), the local behavior of \( Y \) near \( \pm\eta_0 \) is as follows:
\begin{gather}\label{local-Y-eta-02}
Y(\lambda; \x, t)=
\begin{cases}
\mathcal{O}
\begin{pmatrix}
\log\left|\lambda-\eta_0\right| &1\\
\log\left|\lambda-\eta_0\right| &1
\end{pmatrix}
, & \mathrm{as} \,\,\,\lambda\to\eta_0,\\[1.5em]
\mathcal{O}
\begin{pmatrix}
1 & \log\left|\lambda+\eta_0\right| \\
1 & \log\left|\lambda+\eta_0\right|
\end{pmatrix}
, & \mathrm{as} \,\,\,\lambda\to-\eta_0.
\end{cases}
\end{gather}
The solution for the soliton gas, denoted by \( u = u(x, t) \), can be reconstructed from the matrix \( Y(\lambda; \x, t) \) through the following relations:
\begin{gather}\label{potential-formula}
\begin{array}{rl}
u(x, t)=&\lim_{\lambda\to 0}\lambda^{-1}\left(Y(0; \x, t)^{-1}Y(\lambda; \x, t)\right)_{1, 2},\\[0.5em]
x=&\x+\lim_{\lambda\to 0}\lambda^{-1}\left(\left(Y(0; \x, t)^{-1}Y(\lambda; \x, t)\right)_{1, 1}-1\right).
\end{array}
\end{gather}
At \( t = 0 \), the matrix \( Y(\lambda; \x, 0) \) is directly related to the initial condition \( u(x, 0) \) for the soliton gas. The jump conditions for \( Y \) in this case are expressed as:
\begin{gather}\label{Y-Jump}
Y_+(\lambda; \x, 0)=Y_-(\lambda; \x, 0)
\begin{cases}
\mathcal{L}^{\lambda \x}\left[\mathrm{i}r\right], &\mathrm{for}\,\,\, \lambda\in\left(\eta_1, \eta_0\right)\cup\left(\eta_0, \eta_2\right),\\[0.5em]
\mathcal{U}^{\lambda \x}\left[\mathrm{i}r\right],  &\mathrm{for}\,\,\, \lambda\in\left(-\eta_2, -\eta_0\right)\cup\left(-\eta_0, -\eta_1\right),
\end{cases}
\end{gather}
where \( r = r_0 \) or \( r_c \) with \( \beta_0 \neq 0 \).

For the first type of reflection coefficient \( r_0 \) in the special case where \( \beta_0 = 0 \), the jump conditions simplify to:
\begin{gather}\label{Y-Jump-Special-0}
Y_+(\lambda; \x, 0)=Y_-(\lambda; \x, 0)
\begin{cases}
\mathcal{L}^{\lambda \x}\left[\mathrm{i}r\right], &\mathrm{for}\,\,\, \lambda\in\left(\eta_1, \eta_2\right),\\[0.5em]
\mathcal{U}^{\lambda \x}\left[\mathrm{i}r\right],  &\mathrm{for}\,\,\, \lambda\in\left(-\eta_2, -\eta_1\right),
\end{cases}
\end{gather}

\section{Large-$\x$ asymptotic for the initial value $u(x, 0)$ of the soliton gas}
For \( \x \to +\infty \), the standard small-norm method leads directly to \eqref{large-x-right}. In this section, we derive the asymptotic behavior for the regime where \( \x \to -\infty \).
Several transformations are performed on \( Y(\lambda; \x, 0) \), applying the sequence \( Y \to T \to S \to E \), in order to ensure that \( E(\lambda; \x, 0) \) normalizes to the identity matrix \( \mathbb{I}_2 \) as \( \lambda \to \infty \), and that the jump matrices decay uniformly and exponentially to \( \mathbb{I}_2 \). Following the Deift-Zhou steepest descent method, the next step involves triangular decomposition, facilitating contour deformation:
\begin{gather}
\begin{aligned}
\mathcal{L}^{\lambda \x}\left[\mathrm{i}r\right] &= \mathcal{U}^{\lambda \x}\left[-\mathrm{i}r^{-1}\right] \left(\mathcal{L}^{\lambda \x}\left[\mathrm{i}r\right] + \mathcal{U}^{\lambda \x}\left[\mathrm{i}r^{-1}\right] - 2\, \mathbb{I}_2\right) \mathcal{U}^{\lambda \x}\left[-\mathrm{i}r^{-1}\right], \\
\mathcal{U}^{\lambda \x}\left[\mathrm{i}r\right] &= \mathcal{L}^{\lambda \x}\left[-\mathrm{i}r^{-1}\right] \left(\mathcal{L}^{\lambda \x}\left[\mathrm{i}r^{-1}\right] + \mathcal{U}^{\lambda \x}\left[\mathrm{i}r\right] - 2\, \mathbb{I}_2\right) \mathcal{L}^{\lambda \x}\left[-\mathrm{i}r^{-1}\right].
\end{aligned}
\end{gather}
However, based on the sign of \( \Re(\lambda) \), exponential decay fails on the corresponding lenses. To address this, we apply a conjugation step before contour deformation by introducing an appropriate \( g_0 \)-function. In addition to the \( g_0 \)-function, a scalar \( f_0 \)-function is introduced to yield constant jump matrices.

\subsection{Riemann-Hilbert problem for $T(\lambda; \x, 0)$}
The \( f_0 \)-function is analytic in \( \lambda \) for \( \lambda \in \mathbb{C} \setminus [-\eta_2, \eta_2] \) and is normalized to 1 as \( \lambda \to \infty \). The continuous boundary values \( f_{0\pm} \) are related by the following conditions:
\begin{gather}
\begin{aligned}
&f_{0+}f_{0-} =r, &&\mathrm{for} \,\,\,\lambda\in\left(\eta_1, \eta_0\right)\cup\left(\eta_0, \eta_2\right);    \\[0.5em]
&f_{0+}f_{0-} =r^{-1},   && \mathrm{for} \,\,\,\lambda\in\left(-\eta_2, -\eta_0\right)\cup\left(-\eta_0, -\eta_1\right); \\[0.5em]
&f_{0+}^{-1}f_{0-}=\mathrm{e}^{\Omega_1\phi_1},  && \mathrm{for} \,\,\,\lambda\in\left(-\eta_1, \eta_1\right).
\end{aligned}
\end{gather}
For the first type of generalized reflection coefficient, \( r = r_0 \), in the case of \( \beta_0 = 0 \), the jump condition changes slightly to:
\begin{gather}
\begin{aligned}
&f_{0+}f_{0-} =r, &&\mathrm{for} \,\,\,\lambda\in\left(\eta_1, \eta_2\right);    \\[0.5em]
&f_{0+}f_{0-} =r^{-1},   && \mathrm{for} \,\,\,\lambda\in\left(-\eta_2,  -\eta_1\right); \\[0.5em]
&f_{0+}^{-1}f_{0-}=\mathrm{e}^{\Omega_1\phi_1},  && \mathrm{for} \,\,\,\lambda\in\left(-\eta_1, \eta_1\right).
\end{aligned}
\end{gather}
Using Plemelj's formula, the \( f_0 \)-function is derived by taking the logarithm, leading to:
\begin{gather}
f_0 = \exp\left\{\frac{R_0}{\pi \mathrm{i}} \left(\int_{\eta_1}^{\eta_2} \frac{\log r(s)}{R_{0+}(s)} \frac{\lambda}{s^2 - \lambda^2} \mathrm{d}s - \int_{0}^{\eta_1} \frac{\Omega_1 \phi_1}{R_0(s)} \frac{\lambda}{s^2 - \lambda^2} \mathrm{d}s \right)\right\}.
\end{gather}

With the \( g_0 \)- and \( f_0 \)-functions defined, the following conjugation is applied:
\begin{gather}\label{Conjugation-0}
T(\lambda; \x, 0) = Y(\lambda; \x, 0) \mathrm{e}^{\x g_0 \sigma_3} f_0^{-\sigma_3},
\end{gather}

\begin{RH}
This determines a \( 2 \times 2 \) matrix-valued function \( T(\lambda; \x, 0) \), which satisfies the following Riemann-Hilbert problem:

\begin{itemize}
\item{} \( T \) is analytic in \( \lambda \) for \( \lambda \in \mathbb{C} \setminus [-\eta_2, \eta_2] \);

 \item{} It normalizes to \( \mathbb{I}_2 \) as \( \lambda \to \infty \);

  \item{} For \( \lambda \in (-\eta_2, \eta_2) \setminus \{ \pm\eta_1, \pm\eta_0 \} \), the continuous boundary values of \( T \) are related by the jump conditions:
\begin{gather}
T_+(\lambda; \x, 0)=T_-(\lambda; \x, 0)\!\!\!
\begin{cases}
\mathcal{U}^{\x p_{0-}}_{f_{0-}}\left[-\mathrm{i}r^{-1}\right]\left(\mathrm{i}\sigma_1\right)\mathcal{U}^{\x p_{0+}}_{f_{0+}}\left[-\mathrm{i}r^{-1}\right], &\mathrm{for}\, \lambda\in\left(\eta_1, \eta_0\right)\cup\left(\eta_0, \eta_2\right),\\[0.5em]
\mathcal{L}^{\x p_{0-}}_{f_{0-}}\left[-\mathrm{i}r^{-1}\right]\left(\mathrm{i}\sigma_1\right)\mathcal{L}^{\x p_{0+}}_{f_{0+}}\left[-\mathrm{i}r^{-1}\right], &\mathrm{for}\, \lambda\in\left(-\eta_2, -\eta_0\right)\cup\left(-\eta_0, -\eta_1\right),\\[0.5em]
\mathrm{e}^{\Delta_1^0\sigma_3}, &\mathrm{for}\, \lambda\in\left(-\eta_1, \eta_1\right),
\end{cases}
\end{gather}
Here, \( p_0 \) is defined by \( p_0 = \lambda - g_0 \). For the reflection coefficient \( r_0 \) with \( \beta_0 = 0 \), the jump conditions slightly differ:
\begin{gather}
T_+(\lambda; \x, 0)=T_-(\lambda; \x, 0)
\begin{cases}
\mathcal{U}^{\x p_{0-}}_{f_{0-}}\left[-\mathrm{i}r^{-1}\right]\left(\mathrm{i}\sigma_1\right)\mathcal{U}^{\x p_{0+}}_{f_{0+}}\left[-\mathrm{i}r^{-1}\right], &\mathrm{for}\,\,\, \lambda\in\left(\eta_1, \eta_2\right),\\[0.5em]
\mathcal{L}^{\x p_{0-}}_{f_{0-}}\left[-\mathrm{i}r^{-1}\right]\left(\mathrm{i}\sigma_1\right)\mathcal{L}^{\x p_{0+}}_{f_{0+}}\left[-\mathrm{i}r^{-1}\right], &\mathrm{for}\,\,\, \lambda\in\left(-\eta_2, -\eta_1\right),\\[0.5em]
\mathrm{e}^{\Delta_1^0\sigma_3}, &\mathrm{for}\,\,\, \lambda\in\left(-\eta_1, \eta_1\right).
\end{cases}
\end{gather}
\end{itemize}
\end{RH}

Near each endpoint \( \eta_j \) for \( j = 1, 2 \), the matrix \( T(\lambda; \x, 0) \) behaves as follows:
\begin{gather}
T(\lambda; \x, 0)=
\begin{cases}
\mathcal{O}
\begin{pmatrix}
\left|\lambda-\eta_j\right|^{-\beta_j/2} &\left|\lambda-\eta_j\right|^{\beta_j/2}  \\[0.5em]
\left|\lambda-\eta_j\right|^{-\beta_j/2} &\left|\lambda-\eta_j\right|^{\beta_j/2}
\end{pmatrix}
, & \mathrm{if} \,\,\, \beta_j\in\left(0, +\infty\right),  \\[1.5em]
\mathcal{O}
\begin{pmatrix}
\log \left|\lambda-\eta_j\right| &1  \\[0.5em]
\log \left|\lambda-\eta_j\right|  &1
\end{pmatrix}
, & \mathrm{if} \,\,\, \beta_j=0,  \\[1.5em]
\mathcal{O}
\begin{pmatrix}
\left|\lambda-\eta_j\right|^{\beta_j/2} &\left|\lambda-\eta_j\right|^{\beta_j/2}\\[0.5em]
\left|\lambda-\eta_j\right|^{\beta_j/2} &\left|\lambda-\eta_j\right|^{\beta_j/2}
\end{pmatrix}
,  & \mathrm{if} \,\,\, \beta_j\in\left(-1, 0\right),
\end{cases}
\end{gather}
Similarly, near \( -\eta_j \), for \( j = 1, 2 \), the local behavior is:
\begin{gather}
T(\lambda; \x, 0)=
\begin{cases}
\mathcal{O}
\begin{pmatrix}
\left|\lambda+\eta_j\right|^{\beta_j/2} &\left|\lambda+\eta_j\right|^{-\beta_j/2}  \\[0.5em]
\left|\lambda+\eta_j\right|^{\beta_j/2} &\left|\lambda+\eta_j\right|^{-\beta_j/2}
\end{pmatrix}
, & \mathrm{if} \,\,\, \beta_j\in\left(0, +\infty\right),  \\[1.5em]
\mathcal{O}
\begin{pmatrix}
1& \log \left|\lambda+\eta_j\right|   \\[0.5em]
1& \log \left|\lambda+\eta_j\right|
\end{pmatrix}
, & \mathrm{if} \,\,\, \beta_j=0,  \\[1.5em]
\mathcal{O}
\begin{pmatrix}
\left|\lambda+\eta_j\right|^{\beta_j/2} &\left|\lambda+\eta_j\right|^{\beta_j/2}\\[0.5em]
\left|\lambda+\eta_j\right|^{\beta_j/2} &\left|\lambda+\eta_j\right|^{\beta_j/2}
\end{pmatrix}
,  & \mathrm{if} \,\,\, \beta_j\in\left(-1, 0\right),
\end{cases}
\end{gather}
The local behavior near the singularities \( \pm\eta_0 \) differs for the two types of reflection coefficients. For the first type, \( r = r_0 \) with \( \beta_0 \neq 0 \), the behavior is:
\begin{gather}
T(\lambda; \x, 0)=
\begin{cases}
\mathcal{O}
\begin{pmatrix}
 \left|\lambda\mp\eta_0\right|^{\mp\beta_0/2}&\left|\lambda\mp\eta_0\right|^{\pm\beta_0/2}\\[0.5em]
  \left|\lambda\mp\eta_0\right|^{\mp\beta_0/2}&\left|\lambda\mp\eta_0\right|^{\pm\beta_0/2}
\end{pmatrix}
, & \mathrm{if} \,\,\,\beta_0\in\left(0, +\infty\right), \\[1.5em]
\mathcal{O}
\begin{pmatrix}
 \left|\lambda\mp\eta_0\right|^{\beta_0/2}&\left|\lambda\mp\eta_0\right|^{\beta_0/2}\\[0.5em]
  \left|\lambda\mp\eta_0\right|^{\beta_0/2}&\left|\lambda\mp\eta_0\right|^{\beta_0/2}
\end{pmatrix}
, & \mathrm{if} \,\,\,\beta_0\in\left(-1, 0\right),
\end{cases}
\end{gather}
For the second type, \( r = r_c \), the local behavior near \( \pm\eta_0 \) is:
\begin{gather}
T(\lambda; \x, 0)=
\begin{cases}
\mathcal{O}
\begin{pmatrix}
\log\left|\lambda-\eta_0\right| &1\\[0.5em]
\log\left|\lambda-\eta_0\right| &1
\end{pmatrix}
, & \mathrm{as} \,\,\,\lambda\to\eta_0,\\[1.5em]
\mathcal{O}
\begin{pmatrix}
1 & \log\left|\lambda+\eta_0\right| \\[0.5em]
1 & \log\left|\lambda+\eta_0\right|
\end{pmatrix}
, & \mathrm{as} \,\,\,\lambda\to-\eta_0.
\end{cases}
\end{gather}

\begin{figure}[!t]
\centering
\includegraphics[scale=0.32]{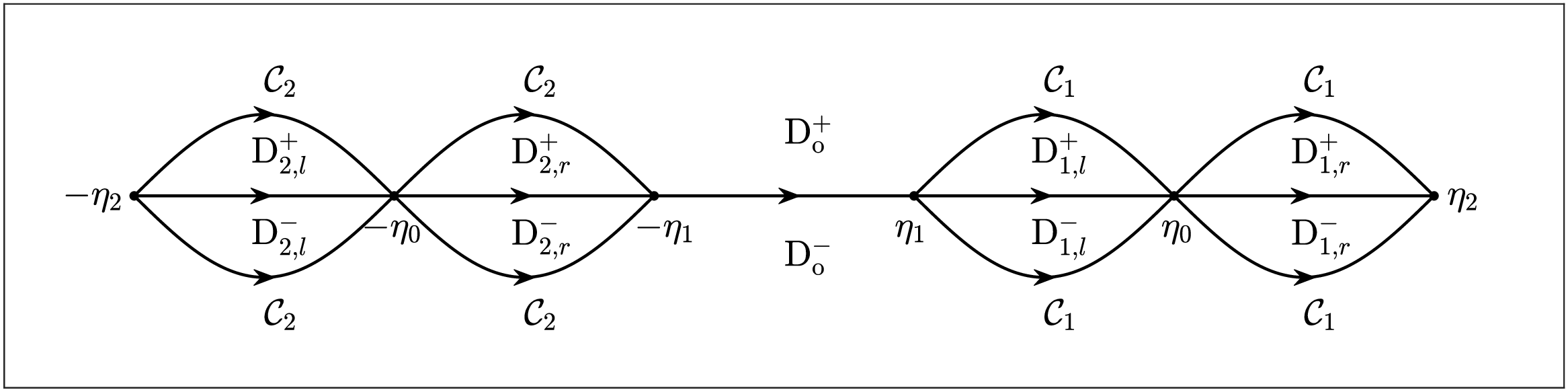}
\caption{Contour deformation by opening lenses for $r=r_0, r_c$ with $\beta_0\ne 0$}
\label{Deformation-0}
\end{figure}

\subsection{Riemann-Hilbert problem for $S(\lambda; \x, 0)$}

The next step is to deform the Riemann-Hilbert problem \( Y(\lambda; \x, 0) \) to a new Riemann-Hilbert problem \( S(\lambda; \x, 0) \). Initially, we handle the general reflection coefficient \( r = r_0, r_c \) with \( \beta_0 \neq 0 \). The contour deformation for this Riemann-Hilbert problem, accomplished by opening lenses, is illustrated in Figure \ref{Deformation-0}.

 The domains are defined as follows:
\( \mathrm{D}^+_{1, l} \) and \( \mathrm{D}^-_{1, l} \) are lenses above and below the interval \( (\eta_1, \eta_0) \), respectively.
 \( \mathrm{D}^+_{1, r} \) and \( \mathrm{D}^-_{1, r} \) are lenses above and below \( (\eta_0, \eta_2) \), respectively.
\( \mathrm{D}^+_{2, l} \) and \( \mathrm{D}^-_{2, l} \) are lenses above and below \( (-\eta_2, -\eta_0) \), respectively.
\( \mathrm{D}^+_{2, r} \) and \( \mathrm{D}^-_{2, r} \) are lenses above and below \( (-\eta_0, -\eta_1) \), respectively.
We define the composite regions:
\bee
 \mathrm{D}_{j, l} = \mathrm{D}_{j, l}^+ \cup \mathrm{D}_{j, l}^-,\qquad
\mathrm{D}_{j, r} = \mathrm{D}_{j, r}^+ \cup \mathrm{D}_{j, r}^-,\qquad
\mathrm{D}_j^{\pm} = \mathrm{D}_{j, l}^{\pm} \cup \mathrm{D}_{1, r}^{\pm},\quad j=1,2,
 \ene
Thus, the regions \( \mathrm{D}_1 \) and \( \mathrm{D}_2 \) are given by \( \mathrm{D}_1 = \mathrm{D}_{1, l} \cup \mathrm{D}_{1, r} \) and \( \mathrm{D}_2 = \mathrm{D}_{2, l} \cup \mathrm{D}_{2, r} \).

For the special case where the generalized reflection coefficient is \( r = r_0 \) and \( \beta_0 = 0 \), the contour deformation is shown in Figure \ref{Deformation-00}. The regions \( \mathrm{D}_1^+ \) and \( \mathrm{D}_1^- \) are lenses above and below \( (\eta_1, \eta_2) \), respectively, while \( \mathrm{D}_2^+ \) and \( \mathrm{D}_2^- \) are lenses above and below \( (-\eta_2, -\eta_1) \), respectively. The regions \( \mathrm{D}_1 \) and \( \mathrm{D}_2 \) are given by \( \mathrm{D}_1 = \mathrm{D}_1^+ \cup \mathrm{D}_1^- \) and \( \mathrm{D}_2 = \mathrm{D}_2^+ \cup \mathrm{D}_2^- \).

The exterior region \( \mathrm{D}_\mathrm{o} \) is defined as:
\bee
\mathrm{D}_\mathrm{o} = \mathbb{C} \setminus \overline{\mathrm{D}_1 \cup \mathrm{D}_2 \cup (-\eta_1, \eta_1)},\quad
 \mathrm{D}_\mathrm{o} = \mathrm{D}_\mathrm{o}^+ \cup \mathrm{D}_\mathrm{o}^- \cup (\eta_2, +\infty) \cup (-\infty, -\eta_2),
 \ene
 where \( \mathrm{D}_\mathrm{o}^+ \) and \( \mathrm{D}_\mathrm{o}^- \) refer to the portions of the exterior region in the upper and lower half-planes, respectively.

The \( 2 \times 2 \)-matrix-valued function \( S(\lambda; \x, 0) \) is defined as follows:
\begin{gather}\label{T-To-S}
S(\lambda; \x, 0)=\begin{cases}
T(\lambda; \x, 0)\mathcal{U}^{\x p_0}_{f_0}\left[-\mathrm{i}r^{-1}\right]^{-1}, & \mathrm{for}\,\,\, \lambda\in\mathrm{D}_1^+,  \\[0.5em]
T(\lambda; \x, 0)\mathcal{U}^{\x p_0}_{f_0}\left[-\mathrm{i}r^{-1}\right], & \mathrm{for}\,\,\, \lambda\in\mathrm{D}_1^-,   \\[0.5em]
T(\lambda; \x, 0)\mathcal{L}^{\x p_0}_{f_0}\left[-\mathrm{i}r^{-1}\right]^{-1}, & \mathrm{for}\,\,\,  \lambda\in\mathrm{D}_2^+,  \\[0.5em]
T(\lambda; \x, 0)\mathcal{L}^{\x p_0}_{f_0}\left[-\mathrm{i}r^{-1}\right], & \mathrm{for}\,\,\,  \lambda\in\mathrm{D}_2^-,  \\[0.5em]
T(\lambda; \x, 0), & \mathrm{for}\,\,\,  \lambda\in\mathrm{D}_\mathrm{o}.
\end{cases}
\end{gather}

\begin{RH}
The function \( S(\lambda; \x, 0) \) satisfies the following Riemann-Hilbert problem:
\begin{itemize}
\item{} \( S(\lambda; \x, 0) \) is analytic in \( \lambda \in \mathbb{C} \setminus \left( [-\eta_2, \eta_2] \cup \mathcal{C}_1 \cup \mathcal{C}_2 \right) \);

\item{} \( S(\lambda; \x, 0) \) normalizes to the identity matrix \( \mathbb{I}_2 \) as \( \lambda \to \infty \);

\item{} For \( \lambda \in (-\eta_2, \eta_2) \cup \mathcal{C}_1 \cup \mathcal{C}_2 \setminus \{ \pm \eta_1, \pm \eta_0 \} \), \( S(\lambda; \x, 0) \) admits continuous boundary values \( S_+ \) and \( S_- \), which are related by the following jump conditions:
For \( r = r_0, r_c \) with \( \beta_0 \neq 0 \):
\begin{gather}
S_+(\lambda; \x, 0)=S_-(\lambda; \x, 0)
\begin{cases}
\mathcal{U}^{\x p_0}_{f_0}\left[-\mathrm{i}r^{-1}\right], &\mathrm{for}\,\,\, \lambda\in\mathcal{C}_1\setminus\left\{\eta_1, \eta_2, \eta_0\right\},\\[0.5em]
\mathcal{L}^{\x p_0}_{f_0}\left[-\mathrm{i}r^{-1}\right], &\mathrm{for}\,\,\, \lambda\in\mathcal{C}_2\setminus\left\{-\eta_1, -\eta_2, -\eta_0\right\},\\[0.5em]
\mathrm{i}\sigma_1, &\mathrm{for}\,\,\, \lambda\in\left(\eta_1, \eta_2\right)\cup\left(-\eta_2, -\eta_1\right)\setminus \{\pm\eta_0\},\\[0.5em]
\mathrm{e}^{\Delta_1^0\sigma_3}, &\mathrm{for}\,\,\, \lambda\in\left(-\eta_1, \eta_1\right);
\end{cases}
\end{gather}
For the case \( r = r_0 \) with \( \beta_0 = 0 \):
\begin{gather}
S_+(\lambda; \x, 0)=S_-(\lambda; \x, 0)
\begin{cases}
\mathcal{U}^{\x p_0}_{f_0}\left[-\mathrm{i}r^{-1}\right], &\mathrm{for}\,\,\, \lambda\in\mathcal{C}_1\setminus\left\{\eta_1, \eta_2\right\},\\[0.5em]
\mathcal{L}^{\x p_0}_{f_0}\left[-\mathrm{i}r^{-1}\right], &\mathrm{for}\,\,\, \lambda\in\mathcal{C}_2\setminus\left\{-\eta_1, -\eta_2\right\},\\[0.5em]
\mathrm{i}\sigma_1, &\mathrm{for}\,\,\, \lambda\in\left(\eta_1, \eta_2\right)\cup\left(-\eta_2, -\eta_1\right),\\[0.5em]
\mathrm{e}^{\Delta_1^0\sigma_3}, &\mathrm{for}\,\,\, \lambda\in\left(-\eta_1, \eta_1\right).
\end{cases}
\end{gather}
\end{itemize}
\end{RH}

For the endpoints \( \pm \eta_j \) with \( j = 1, 2 \), the local behavior of \( S(\lambda; \x, 0) \) differs between the regions \( \mathrm{D}_\mathrm{o} \) and \( \mathrm{D}_1 \cup \mathrm{D}_2 \).
If \( \beta_j \in (0, +\infty) \):
\begin{gather}
S(\lambda; \x, 0)=
\begin{cases}
\mathcal{O}
\begin{pmatrix}
 \left|\lambda\mp\eta_j\right|^{\mp\beta_j/2}&\left|\lambda\mp\eta_j\right|^{\pm\beta_j/2}\\
  \left|\lambda\mp\eta_j\right|^{\mp\beta_j/2}&\left|\lambda\mp\eta_j\right|^{\pm\beta_j/2}
\end{pmatrix}
, & \mathrm{as} \,\,\,\lambda\in\mathrm{D}_{\mathrm{o}}\to\pm\eta_j, \\[1.5em]
\mathcal{O}
\begin{pmatrix}
 \left|\lambda\mp\eta_j\right|^{-\beta_j/2}&\left|\lambda\mp\eta_j\right|^{-\beta_j/2}\\
 \left|\lambda\mp\eta_j\right|^{-\beta_j/2}&\left|\lambda\mp\eta_j\right|^{-\beta_j/2}
\end{pmatrix}
, & \mathrm{as} \,\,\,\lambda\in\mathrm{D}_1\cup\mathrm{D}_2\to\pm\eta_j.
\end{cases}
\end{gather}
If \( \beta_j = 0 \):
\begin{gather}
S(\lambda; \x, 0)=
\begin{cases}
\mathcal{O}
\begin{pmatrix}
\log\left|\lambda-\eta_j\right| &1\\
\log\left|\lambda-\eta_j\right| &1
\end{pmatrix}
, & \mathrm{as} \,\,\,\lambda\in\mathrm{D}_{\mathrm{o}}\to\eta_j,\\[1.5em]
\mathcal{O}
\begin{pmatrix}
1 & \log\left|\lambda+\eta_j\right| \\
1 & \log\left|\lambda+\eta_j\right|
\end{pmatrix}
, & \mathrm{as} \,\,\,\lambda\in\mathrm{D}_{\mathrm{o}}\to-\eta_j, \\[1.5em]
\mathcal{O}
\begin{pmatrix}
\log\left|\lambda\mp\eta_j\right| & \log\left|\lambda\mp\eta_j\right| \\
\log\left|\lambda\mp\eta_j\right| & \log\left|\lambda\mp\eta_j\right|
\end{pmatrix}
, & \mathrm{as} \,\,\,\lambda\in\mathrm{D}_1\cup\mathrm{D}_2\to \pm\eta_j.
\end{cases}
\end{gather}
If \( \beta_j \in (-1, 0) \):
\begin{gather}
S(\lambda; \x, 0)=
\mathcal{O}
\begin{pmatrix}
 \left|\lambda\mp\eta_j\right|^{\beta_j/2}&\left|\lambda\mp\eta_j\right|^{\beta_j/2}  \\
 \left|\lambda\mp\eta_j\right|^{\beta_j/2}&\left|\lambda\mp\eta_j\right|^{\beta_j/2}
\end{pmatrix}, \quad
\mathrm{as} \,\,\,\lambda\in\mathrm{D}_1\cup\mathrm{D}_2\cup \mathrm{D}_{\mathrm{o}}\to\pm\eta_j.
\end{gather}

\begin{figure}[!t]
\centering
\includegraphics[scale=0.32]{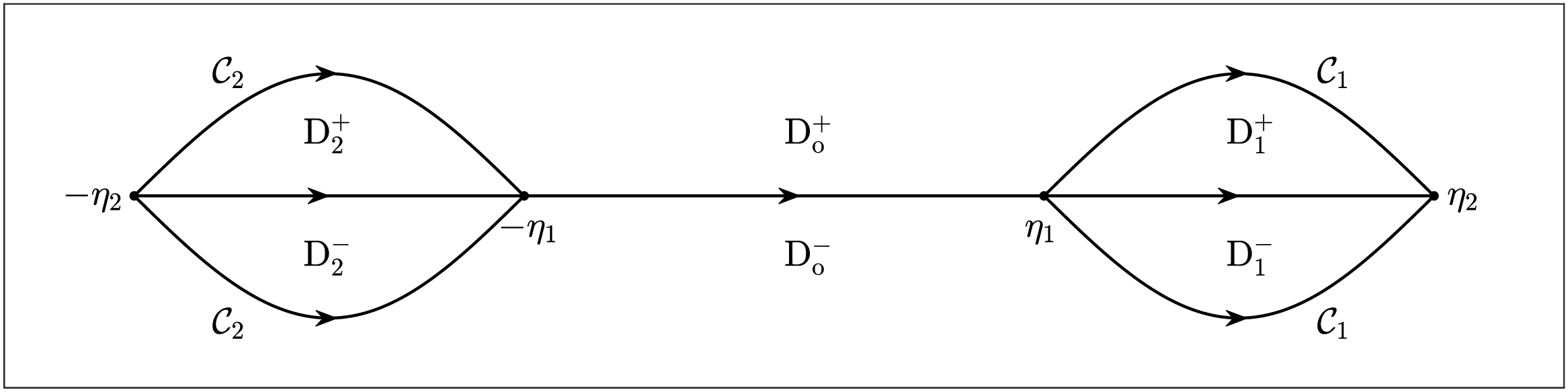}
\caption{Contour deformation by opening lenses for $r=r_0$ with $\beta_0=0$. }
\label{Deformation-00}
\end{figure}

For the singularities \( \pm \eta_0 \), the local behavior of \( S(\lambda; \x, 0) \) depends on the reflection coefficient.
For \( r = r_0 \) with \( \beta_0 \neq 0 \):
If \( \beta_0 \in (0, +\infty) \):
\begin{gather}
S(\lambda; \x, 0)=
\begin{cases}
\mathcal{O}
\begin{pmatrix}
 \left|\lambda\mp\eta_0\right|^{\mp\beta_0/2}&\left|\lambda\mp\eta_0\right|^{\pm\beta_0/2}\\
  \left|\lambda\mp\eta_0\right|^{\mp\beta_0/2}&\left|\lambda\mp\eta_0\right|^{\pm\beta_0/2}
\end{pmatrix}
, & \mathrm{as} \,\,\,\lambda\in\mathrm{D}_{\mathrm{o}}\to\pm\eta_0, \\[1.5em]
\mathcal{O}
\begin{pmatrix}
 \left|\lambda\mp\eta_0\right|^{-\beta_0/2}&\left|\lambda\mp\eta_0\right|^{-\beta_0/2}\\
 \left|\lambda\mp\eta_0\right|^{-\beta_0/2}&\left|\lambda\mp\eta_0\right|^{-\beta_0/2}
\end{pmatrix}
, & \mathrm{as} \,\,\,\lambda\in\mathrm{D}_1\cup\mathrm{D}_2\to\pm\eta_0,
\end{cases}
\end{gather}
and if $\beta_0\in\left(-1, 0\right)$:
\begin{gather}
S(\lambda; \x, 0)=
\mathcal{O}
\begin{pmatrix}
 \left|\lambda\mp\eta_0\right|^{\beta_0/2}&\left|\lambda\mp\eta_0\right|^{\beta_0/2}  \\
 \left|\lambda\mp\eta_0\right|^{\beta_0/2}&\left|\lambda\mp\eta_0\right|^{\beta_0/2}
\end{pmatrix}, \quad
\mathrm{as} \,\,\,\lambda\in\mathrm{D}_1\cup\mathrm{D}_2\cup \mathrm{D}_{\mathrm{o}}\to\pm\eta_0.
\end{gather}
For \( r = r_c \), the local behavior is:
 \begin{gather}
S(\lambda; \x, 0)=
\begin{cases}
\mathcal{O}
\begin{pmatrix}
\log\left|\lambda-\eta_0\right| &1\\
\log\left|\lambda-\eta_0\right| &1
\end{pmatrix}
, & \mathrm{as} \,\,\,\lambda\in\mathrm{D}_{\mathrm{o}}\to\eta_0,\\[1.5em]
\mathcal{O}
\begin{pmatrix}
1 & \log\left|\lambda+\eta_0\right| \\
1 & \log\left|\lambda+\eta_0\right|
\end{pmatrix}
, & \mathrm{as} \,\,\,\lambda\in\mathrm{D}_{\mathrm{o}}\to-\eta_0, \\[1.5em]
\mathcal{O}
\begin{pmatrix}
\log\left|\lambda\mp\eta_0\right| & \log\left|\lambda\mp\eta_0\right| \\
\log\left|\lambda\mp\eta_0\right| & \log\left|\lambda\mp\eta_0\right|
\end{pmatrix}
, & \mathrm{as} \,\,\,\lambda\in\mathrm{D}_1\cup\mathrm{D}_2\to \pm\eta_0
\end{cases}
\end{gather}

\subsection{Riemann-Hilbert problem for $E(\lambda; \x, 0)$}
We define the error matrix \( E(\lambda; \x, 0) \) as follows:
\begin{gather}\label{E0}
E(\lambda; \x, 0) = S(\lambda; \x, 0) P(\lambda; \x, 0)^{-1},
\end{gather}
where \( P(\lambda; \x, 0) \) denotes the global parametrix. This parametrix is structured according to the nature of the reflection coefficient \( r = r_0, r_c \) with \( \beta_0 \neq 0 \):
\begin{gather}
P(\lambda; \x, 0) =
\begin{cases}
P^\infty(\lambda; \x, 0), & \mathrm{if }\,\,\, \lambda \in \mathbb{C} \setminus \overline{B(\pm \eta_2, \pm \eta_0, \pm \eta_1)}, \\[1em]
P^{\eta_j}(\lambda; \x, 0), & \mathrm{if }\,\,\, \lambda \in B(\eta_j),\,\, j=0,1,2, \\[1em]
\sigma_2 P^{\eta_j}(-\lambda; \x, 0) \sigma_2, & \mathrm{if }\,\,\, \lambda \in B(-\eta_j),\,\, j=0,1,2.
\end{cases}
\end{gather}
Here, the set \( B(\pm \eta_2, \pm \eta_0, \pm \alpha) \) is defined as the union of disks:
\[
B(\pm \eta_2, \pm \eta_0, \pm \alpha) = B(\eta_2) \cup B(-\eta_2) \cup B(\eta_0) \cup B(-\eta_0) \cup B(\alpha) \cup B(-\alpha).
\]

In the specific scenario where \( r = r_0 \) and \( \beta_0 = 0 \), the parametrix takes the form:
\begin{gather}
P(\lambda; \x, 0) =
\begin{cases}
P^\infty(\lambda; \x, 0), & \mathrm{if }\,\,\, \lambda \in \mathbb{C} \setminus \overline{B(\pm \eta_2, \pm \eta_1)}, \\[1em]
P^{\eta_j}(\lambda; \x, 0), & \mathrm{if }\,\,\, \lambda \in B(\eta_2), \,\, j=1,2,\\[1em]
\sigma_2 P^{\eta_j}(-\lambda; \x, 0) \sigma_2, & \mathrm{if }\,\,\, \lambda \in B(-\eta_j),\,\, j=1,2.
\end{cases}
\end{gather}
The set \( B(\pm \eta_2, \pm \alpha) \) is similarly defined:
\[
B(\pm \eta_2, \pm \alpha) = B(\eta_2) \cup B(-\eta_2) \cup B(\alpha) \cup B(-\alpha).
\]

The outer parametrix \( P^\infty \) is expressed in terms of the theta functions as follows:
\begin{gather}
\begin{aligned}
P^\infty_{1, 1}(\lambda; \x, 0) &= \frac{\delta_1 + \delta_1^{-1}}{2} \frac{\vartheta_3\left(w_1 + \frac{1}{4} + \frac{\Delta_1^0}{2\pi \mathrm{i}}; \tau_1\right)}{\vartheta_3\left(w_1 + \frac{1}{4}; \tau_1\right)} \frac{\vartheta_3(0; \tau_1)}{\vartheta_3\left(\frac{\Delta_1^0}{2\pi \mathrm{i}}; \tau_1\right)}, \\[1em]
P^\infty_{1, 2}(\lambda; \x, 0) &= \frac{\delta_1 - \delta_1^{-1}}{2} \frac{\vartheta_3\left(w_1 - \frac{1}{4} - \frac{\Delta_1^0}{2\pi \mathrm{i}}; \tau_1\right)}{\vartheta_3\left(w_1 - \frac{1}{4}; \tau_1\right)} \frac{\vartheta_3(0; \tau_1)}{\vartheta_3\left(\frac{\Delta_1^0}{2\pi \mathrm{i}}; \tau_1\right)}, \\[1em]
P^\infty_{2, 1}(\lambda; \x, 0) &= \frac{\delta_1 - \delta_1^{-1}}{2} \frac{\vartheta_3\left(w_1 - \frac{1}{4} + \frac{\Delta_1^0}{2\pi \mathrm{i}}; \tau_1\right)}{\vartheta_3\left(w_1 - \frac{1}{4}; \tau_1\right)} \frac{\vartheta_3(0; \tau_1)}{\vartheta_3\left(\frac{\Delta_1^0}{2\pi \mathrm{i}}; \tau_1\right)}, \\[1em]
P^\infty_{2, 2}(\lambda; \x, 0) &= \frac{\delta_1 + \delta_1^{-1}}{2} \frac{\vartheta_3\left(w_1 + \frac{1}{4} - \frac{\Delta_1^0}{2\pi \mathrm{i}}; \tau_1\right)}{\vartheta_3\left(w_1 + \frac{1}{4}; \tau_1\right)} \frac{\vartheta_3(0; \tau_1)}{\vartheta_3\left(\frac{\Delta_1^0}{2\pi \mathrm{i}}; \tau_1\right)}.
\end{aligned}
\end{gather}
In this formulation, \( \delta_1 = \delta_1(\lambda) \) represents a branch of the expression
\begin{gather}
 \left(\frac{(\lambda + \eta_1)(\lambda - \eta_2)}{(\lambda + \eta_2)(\lambda - \eta_1)}\right)^{1/4},
\end{gather}
which is defined with a branch cut along the intervals \( [\eta_1, \eta_2] \cup [-\eta_2, -\eta_1] \). This branch is normalized such that \( \delta_1 = 1 + \mathcal{O}(\lambda^{-1}) \) as \( \lambda \to \infty \). Furthermore, \( w_1 = w_1(\lambda) \) is defined by the integral:
\begin{gather}
w_1 = -\frac{\eta_2}{4eK(m_1)} \int_{\eta_2}^\lambda \frac{\mathrm{d}\zeta}{R_0(\zeta)}.
\end{gather}

The conformal mapping $\mathcal{F}_0^{\eta_2}=p_0^2$ is defined in the vicinity of $\lambda=\eta_2$, mapping the domains as follows: $\mathrm{D}_\mathrm{o}\cap B(\eta_2)$ to $\mathrm{D}_1^\zeta\cap B^\zeta(0)$, $\mathrm{D}_{1}^+\cap B(\eta_2)$ to $\mathrm{D}_2^\zeta\cap B^\zeta(0)$, and $\mathrm{D}_{1}^-\cap B(\eta_2)$ to $\mathrm{D}_3^\zeta\cap B^\zeta(0)$. The local parametrix $P^{\eta_2}(\lambda; \x, 0)$ for $\lambda\in\left(\mathrm{D}_1\cup\mathrm{D}_\mathrm{o}\right)\cap B(\eta_2)$ is given by
\begin{gather}
P^{\eta_2}(\lambda; \x, 0) = P^\infty(\lambda; \x, 0) A_0^{\eta_2} C \zeta_{\eta_2}^{-\sigma_3/4} M^{\mathrm{mB}}(\zeta_{\eta_2}; \beta_2) \mathrm{e}^{-\sqrt{\zeta_{\eta_2}}\sigma_3} \left(A_0^{\eta_2}\right)^{-1},
\end{gather}
where $\zeta_{\eta_2} = \x^2 \mathcal{F}_0^{\eta_2}$ and $A_0^{\eta_2} = \left(\mathrm{e}^{\pi \mathrm{i}/4}/f_0 d^{\eta_2}\right)^{\sigma_3} \sigma_2$. For the first generalized reflection coefficient, $r=r_0$, one has
\bee
d^{\eta_2} = (\lambda - \eta_1)^{\beta_1/2} (\lambda - \eta_2)^{\beta_2/2} \left|\lambda - \eta_0\right|^{\beta_0/2} \gamma(\lambda)^{1/2},
 \ene
 while for the second coefficient $r=r_c$, it is
 \bee d^{\eta_2} = (\lambda - \eta_1)^{\beta_1/2} (\lambda - \eta_2)^{\beta_2/2} \chi_c(\lambda)^{1/2} \gamma(\lambda)^{1/2}.
 \ene

In the neighborhood of $\lambda=\eta_1$, the conformal mapping $\mathcal{F}_0^{\eta_1}$ is defined as $\mathcal{F}_0^{\eta_1} = \left(p_0 \pm \Omega_1/2\right)^2$ for $\lambda \in B(\eta_1) \cap \mathbb{C}^\pm$. This mapping similarly organizes the domains: $\mathrm{D}_\mathrm{o} \cap B(\eta_1)$ to $\mathrm{D}_1^\zeta\cap B^\zeta(0)$, $\mathrm{D}_{1}^-\cap B(\eta_1)$ to $\mathrm{D}_2^\zeta\cap B^\zeta(0)$, and $\mathrm{D}_{1}^+\cap B(\eta_1)$ to $\mathrm{D}_3^\zeta\cap B^\zeta(0)$. The local parametrix in this context, $P^{\eta_1}(\lambda; \x, 0)$ for $\lambda\in\left(\mathrm{D}_{1}^\pm\cup \mathrm{D}_\mathrm{o}^\pm\right)\cap B(\eta_1)$, is given by
\begin{gather}
P^{\eta_1}(\lambda; \x, 0) = P^\infty(\lambda; \x, 0) A_{0\pm}^{\eta_1} C \zeta_{\eta_1}^{-\sigma_3/4} M^{\mathrm{mB}}(\zeta_{\eta_1}; \beta_1) \mathrm{e}^{-\sqrt{\zeta_{\eta_1}}\sigma_3} \left(A_{0\pm}^{\eta_1}\right)^{-1},
\end{gather}
where $\zeta_{\eta_1} = \x^2 \mathcal{F}_0^{\eta_1}$ and $A_{0\pm}^{\eta_1} = \left(\mathrm{e}^{\pi \mathrm{i}/4 \mp \x \Omega_1 /2}/f_0 d^{\eta_1}\right)^{\sigma_3} \sigma_1$. The function $d^{\eta_1}$ is characterized as
\bee
d^{\eta_1} = (\eta_1 - \lambda)^{\beta_1/2} (\eta_2 - \lambda)^{\beta_2/2} \left|\lambda - \eta_0\right|^{\beta_0/2} \gamma(\lambda)^{1/2}\ene for the first type,
and
\bee
d^{\eta_1} = (\eta_1 - \lambda)^{\beta_1/2} (\eta_2 - \lambda)^{\beta_2/2} \chi_c(\lambda)^{1/2} \gamma(\lambda)^{1/2}
 \ene
for the second type.

For the first type of  generalized reflection coefficient  $r=r_0$ with $\beta_0\ne 0$, the conformal mapping is defined as follows:
$\mathcal{F}_0^{\eta_0}=\mp\left(p_0-p_{0\pm}(\eta_0)\right)$ if $\lambda\in B(\eta_0)\cap\mathbb{C}^\pm$.
It maps $\mathrm{D}_{1, r}^+\cap B(\eta_0)$ to $\mathrm{D}_1^\zeta\cap B^\zeta(0)$,
$\mathrm{D}_{1, l}^+\cap B(\eta_0)$ to $\mathrm{D}_4^\zeta\cap B^\zeta(0)$,
 $\mathrm{D}_{1, l}^-\cap B(\eta_0)$ to $\mathrm{D}_5^\zeta\cap B^\zeta(0)$,
 and
 $\mathrm{D}_{1, r}^-\cap B(\eta_0)$ to $\mathrm{D}_8^\zeta\cap B^\zeta(0)$.
 Define the domains
 $\mathrm{D}_{\mathrm{o}, 1, r}^+=\left(\mathcal{F}_0^{\eta_0}\right)^{-1} \left(\mathrm{D}^\zeta_2\cap B^\zeta(0)\right)$,
 $\mathrm{D}_{\mathrm{o}, 1, l}^+=\left(\mathcal{F}_0^{\eta_0}\right)^{-1} \left(\mathrm{D}^\zeta_3\cap B^\zeta(0)\right)$,
 $\mathrm{D}_{\mathrm{o}, 1, l}^-=\left(\mathcal{F}_0^{\eta_0}\right)^{-1} \left(\mathrm{D}^\zeta_6\cap B^\zeta(0)\right)$,
and
 $\mathrm{D}_{\mathrm{o}, 1, r}^-=\left(\mathcal{F}_0^{\eta_0}\right)^{-1} \left(\mathrm{D}^\zeta_7\cap B^\zeta(0)\right)$.
The local parametrix $P^{\eta_0}(\lambda; \x, 0)$ in the neighborhood of $\lambda=\eta_0$ is constructed as follows:
 for $\lambda\in\left(\mathrm{D}_{1, r}^+\cup\mathrm{D}_{\mathrm{o}, 1, r}^+\right)\cap B(\eta_0)$,
\begin{gather}
P^{\eta_0}(\lambda; \x, 0)=P^\infty(\lambda; \x, 0) A_{0r+}^{\eta_0}\mathrm{e}^{\beta_0\pi\mathrm{i}\sigma_3/4}\left(-\mathrm{i}\sigma_2\right)M^{\mathrm{mb}}(\zeta_{\eta_0}; \beta_0)\mathrm{e}^{\zeta_{\eta_0}\sigma_3}\left(A_{0r+}^{\eta_0}\right)^{-1},
\end{gather}
with $A_{0r+}^{\eta_0}=\left(\mathrm{e}^{\pi\mathrm{i}/4+\x p_{0+}(\eta_0)}/f_0d_r^{\eta_0}\right)^{\sigma_3}\sigma_2$;
for $\lambda\in \left(\mathrm{D}_{1, l}^+\cup\mathrm{D}_{\mathrm{o}, 1, l}^+\right)\cap B(\eta_0)$, $P^{\eta_0}$ is formulated as
\begin{gather}
P^{\eta_0}(\lambda; \x, 0)=P^\infty (\lambda; \x, 0)A_{0l+}^{\eta_0}\mathrm{e}^{-\beta_0\pi\mathrm{i}\sigma_3/4}\left(-\mathrm{i}\sigma_2\right)M^{\mathrm{mb}}(\zeta_{\eta_0}; \beta_0)\mathrm{e}^{\zeta_{\eta_0}\sigma_3}\left(A_{0l+}^{\eta_0}\right)^{-1},
\end{gather}
with $A_{0l+}^{\eta_0}=\left(\mathrm{e}^{\pi\mathrm{i}/4+\x p_{0+}(\eta_0)}/f_0d_l^{\eta_0}\right)^{\sigma_3}\sigma_2$;
for $\lambda\in\left(\mathrm{D}_{1, l}^-\cup\mathrm{D}_{\mathrm{o}, 1, l}^-\right)\cap B(\eta_0)$,  $P^{\eta_0}$ is written as
\begin{gather}
P^{\eta_0}(\lambda; \x, 0)=P^\infty(\lambda; \x, 0) A_{0l-}^{\eta_0}\mathrm{e}^{\beta_0\pi\mathrm{i}\sigma_3/4}M^{\mathrm{mb}}(\zeta_{\eta_0}; \beta_0)\mathrm{e}^{-\zeta_{\eta_0}\sigma_3}\left(A_{0l-}^{\eta_0}\right)^{-1},
\end{gather}
with $A_{0l-}^{\eta_0}=\left(\mathrm{e}^{\pi\mathrm{i}/4+\x p_{0-}(\eta_0)}/f_0d_l^{\eta_0}\right)^{\sigma_3}\sigma_2$;
for $\lambda\in \left(\mathrm{D}_{1, r}^-\cup\mathrm{D}_{\mathrm{o}, 1, r}^-\right)\cap B(\eta_0)$, $P^{\eta_0}$ is expressed as
\begin{gather}
P^{\eta_0}(\lambda; \x, 0)=P^\infty(\lambda; \x, 0) A_{0r-}^{\eta_0} \mathrm{e}^{-\beta_0\pi\mathrm{i}\sigma_3/4}M^{\mathrm{mb}}(\zeta_{\eta_0}; \beta_0)\mathrm{e}^{-\zeta_{\eta_0}\sigma_3}\left(A_{0r-}^{\eta_0}\right)^{-1},
\end{gather}
with $A_{0r-}^{\eta_0}=\left(\mathrm{e}^{\pi\mathrm{i}/4+\x p_{0-}(\eta_0)}/f_0d_r^{\eta_0}\right)^{\sigma_3}\sigma_2$,
where $\zeta_{\eta_0}=-\x\mathcal{F}^{\eta_0}_0$.
The functions $d_l^{\eta_0}$ and $d_r^{\eta_0}$ are defined as follows:
$d_l^{\eta_0}=
(\lambda-\eta_1)^{\beta_1/2}(\eta_2-\lambda)^{\beta_2/2}(\lambda-\eta_0)^{\beta_0/2}\gamma(\lambda)^{1/2}$
and
$d_r^{\eta_0}=
(\lambda-\eta_1)^{\beta_1/2}(\eta_2-\lambda)^{\beta_2/2}(\eta_0-\lambda)^{\beta_0/2}\gamma(\lambda)^{1/2}$.

To address the second generalized reflection coefficient \( r = r_c \), we define the conformal mapping as follows: \( \mathcal{F}_0^{\eta_0} = \mp 2(p_0 - p_{0\pm}(\eta_0)) \) for \( \lambda \in B(\eta_0) \cap \mathbb{C}^\pm \). This mapping transforms regions as follows: it maps \( \mathrm{D}_{1, r}^+ \cap B(\eta_0) \) to \( \mathrm{D}_1^\zeta \cap B^\zeta(0) \), \( \mathrm{D}_\mathrm{o}^+ \cap B(\eta_0) \) to \( \mathrm{D}_2^\zeta \cap B^\zeta(0) \), \( \mathrm{D}_{1, l}^+ \cap B(\eta_0) \) to \( \mathrm{D}_3^\zeta \cap B^\zeta(0) \), \( \mathrm{D}_{1, l}^- \cap B(\eta_0) \) to \( \mathrm{D}_4^\zeta \cap B^\zeta(0) \), \( \mathrm{D}_\mathrm{o}^- \cap B(\eta_0) \) to \( \mathrm{D}_5^\zeta \cap B^\zeta(0) \), and \( \mathrm{D}_{1, r}^- \cap B(\eta_0) \) to \( \mathrm{D}_6^\zeta \cap B^\zeta(0) \).
In the vicinity of \( \lambda = \eta_0 \), the local parametrix is constructed as follows: for \( \lambda \in \left(\mathrm{D}_1^+ \cup \mathrm{D}_\mathrm{o}^+\right) \cap B(\eta_0) \), we define the local parametrix as
\begin{gather}
P^{\eta_0}(\lambda; \x, 0) = P^\infty (\lambda; \x, 0) A^{\eta_0}_{0+} \left(\zeta_{\eta_0}^{\kappa_0 \sigma_3} \mathrm{i} \sigma_2 \mathrm{e}^{\kappa_0 \pi \mathrm{i} \sigma_3}\right)^{-1} M^{\mathrm{CH}}(\zeta_{\eta_0}; \kappa_0) \mathrm{e}^{\zeta_{\eta_0} \sigma_3 / 2} \left(A^{\eta_0}_{0+}\right)^{-1};
\end{gather}
for \( \lambda \in \left(\mathrm{D}_1^- \cup \mathrm{D}_\mathrm{o}^-\right) \cap B(\eta_0) \), it is expressed as
\begin{gather}
P^{\eta_0}(\lambda; \x, 0) = P^\infty (\lambda; \x, 0) A^{\eta_0}_{0-} \mathrm{e}^{-\kappa_0 \pi \mathrm{i} \sigma_3} M^{\mathrm{CH}}(\zeta_{\eta_0}; \kappa_0) \mathrm{e}^{-\zeta_{\eta_0} \sigma_3 / 2} \left(A^{\eta_0}_{0-}\right)^{-1},
\end{gather}
where \( A^{\eta_0}_{0\pm} = \left(\mathrm{e}^{\pi \mathrm{i}/4 + \x p_{0\pm}(\eta_0)}/f_0 d^{\eta_0}\right)^{\sigma_3} \sigma_2 \), \( \zeta_{\eta_0} = -\x \mathcal{F}_0^{\eta_0} \), \( \kappa_0 = \left(\mathrm{i}/\pi\right) \log c \in \mathrm{i} \mathbb{R} \), and \( d^{\eta_0} = \left(\lambda - \eta_1\right)^{\beta_1/2} \left(\eta_2 - \lambda\right)^{\beta_2/2} c^{1/2} \gamma(\lambda)^{1/2} \).

\begin{figure}[!t]
\centering
\includegraphics[scale=0.32]{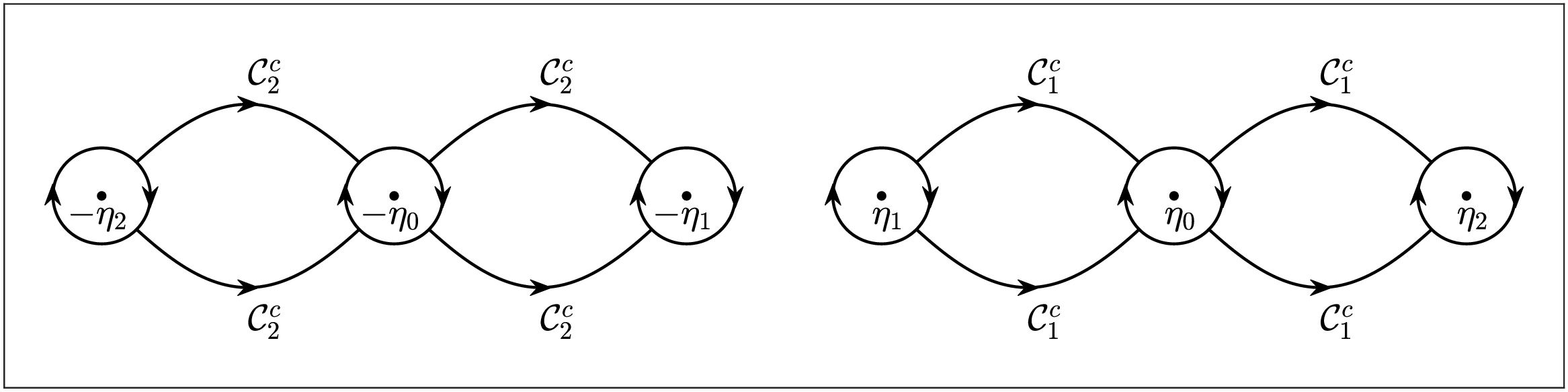}
\caption{Jump contours of the error matrix $E(\lambda; \x, 0)$ for $r=r_0, r_c$ with $\beta_0\ne 0$.}
\label{ErrorE0}
\end{figure}

For the general reflection coefficients \( r = r_0, r_c \) with \( \beta_0 \neq 0 \), the error matrix \( E(\lambda; \x, 0) \) is governed by a Riemann-Hilbert problem, as illustrated in Figure \ref{ErrorE0}. Specifically, \( E(\lambda; \x, 0) \) is analytic for \( \lambda \in \mathbb{C} \setminus \left(B(\pm\eta_2, \pm\eta_0, \pm\eta_1) \cup \mathcal{C}_1^c \cup \mathcal{C}_2^c\right) \) and approaches the identity matrix \( \mathbb{I}_2 \) at infinity. For \( \lambda \) in the boundaries \( B(\pm\eta_2, \pm\eta_0, \pm\eta_1) \cup \mathcal{C}_1^c \cup \mathcal{C}_2^c \), \( E(\lambda; \x, 0) \) has continuous boundary values related by the jump conditions given by
\begin{gather}\label{E-jump}
E_+(\lambda; \x, 0) = E_-(\lambda; \x, 0) V_0^E,
\end{gather}
with the jump matrices defined as follows:
\begin{gather}
V_0^E =
\begin{cases}
P^\infty (\lambda; \x, 0) \mathcal{U}^{\x p_0}_{f_0}\left[-ir^{-1}\right] \left(P^\infty (\lambda; \x, 0)\right)^{-1}, & \mathrm{for }\,\,\, \lambda \in \mathcal{C}^c_1,  \\[0.5em]
P^\infty (\lambda; \x, 0) \mathcal{L}^{\x p_0}_{f_0}\left[-ir^{-1}\right] \left(P^\infty (\lambda; \x, 0)\right)^{-1}, & \mathrm{for }\,\,\, \lambda \in \mathcal{C}^c_2,  \\[0.5em]
P^{\eta_2}(\lambda; \x, 0) \left(P^\infty(\lambda; \x, 0)\right)^{-1}, &\mathrm{for }\,\,\, \lambda \in \partial B(\eta_2),  \\[0.5em]
P^{\eta_0}(\lambda; \x, 0) \left(P^\infty(\lambda; \x, 0)\right)^{-1}, & \mathrm{for }\,\,\, \lambda \in \partial B(\eta_0),  \\[0.5em]
P^{\eta_1}(\lambda; \x, 0) \left(P^\infty(\lambda; \x, 0)\right)^{-1}, & \mathrm{for }\,\,\, \lambda \in \partial B(\eta_1),   \\[0.5em]
\sigma_2 P^{\eta_2}(-\lambda; \x, 0) \left(P^\infty\left(-\lambda; \x, 0\right)\right)^{-1} \sigma_2, & \mathrm{for }\,\,\,\lambda \in \partial B(-\eta_2), \\[0.5em]
\sigma_2 P^{\eta_0}(-\lambda; \x, 0) \left(P^\infty\left(-\lambda; \x, 0\right)\right)^{-1} \sigma_2, & \mathrm{for }\,\,\,\lambda \in \partial B(-\eta_0), \\[0.5em]
\sigma_2 P^{\eta_1}(-\lambda; \x, 0) \left(P^\infty\left(-\lambda; \x, 0\right)\right)^{-1} \sigma_2, & \mathrm{for }\,\,\, \lambda \in \partial B(-\eta_1).
\end{cases}
\end{gather}

In the special case of the first type of generalized reflection coefficient \( r = r_0 \) with \( \beta_0 = 0 \), a modified Riemann-Hilbert problem is presented, as shown in Figure \ref{ErrorE01}. Here, \( E(\lambda; \x, 0) \) remains analytic for \( \lambda \in \mathbb{C} \setminus \left(B(\pm\eta_2, \pm\eta_1) \cup \mathcal{C}_1^c \cup \mathcal{C}_2^c\right) \) and normalizes to \( \mathbb{I}_2 \) at infinity. For \( \lambda \in B(\pm\eta_2, \pm\eta_1) \cup \mathcal{C}_1^c \cup \mathcal{C}_2^c \), \( E(\lambda; \x, 0) \) exhibits continuous boundary values governed by the same jump conditions in \eqref{E-jump}, but with the following jump matrices:
\begin{gather}
V_0^E =
\begin{cases}
P^\infty (\lambda; \x, 0) \mathcal{U}^{\x p_0}_{f_0}\left[-ir^{-1}\right] \left(P^\infty (\lambda; \x, 0)\right)^{-1}, & \mathrm{for }\,\,\, \lambda \in \mathcal{C}^c_1,  \\[0.5em]
P^\infty (\lambda; \x, 0) \mathcal{L}^{\x p_0}_{f_0}\left[-ir^{-1}\right] \left(P^\infty (\lambda; \x, 0)\right)^{-1}, & \mathrm{for }\,\,\, \lambda \in \mathcal{C}^c_2,  \\[0.5em]
P^{\eta_2}(\lambda; \x, 0) \left(P^\infty(\lambda; \x, 0)\right)^{-1}, & \mathrm{for }\,\,\, \lambda \in \partial B(\eta_2),  \\[0.5em]
P^{\eta_1}(\lambda; \x, 0) \left(P^\infty(\lambda; \x, 0)\right)^{-1}, & \mathrm{for }\,\,\, \lambda \in \partial B(\eta_1),   \\[0.5em]
\sigma_2 P^{\eta_2}(-\lambda; \x, 0) \left(P^\infty\left(-\lambda; \x, 0\right)\right)^{-1} \sigma_2, & \mathrm{for }\,\,\, \lambda \in \partial B(-\eta_2), \\[0.5em]
\sigma_2 P^{\eta_0}(-\lambda; \x, 0) \left(P^\infty\left(-\lambda; \x, 0\right)\right)^{-1} \sigma_2, & \mathrm{for }\,\,\, \lambda \in \partial B(-\eta_0).
\end{cases}
\end{gather}

\begin{figure}[!t]
\centering
\includegraphics[scale=0.32]{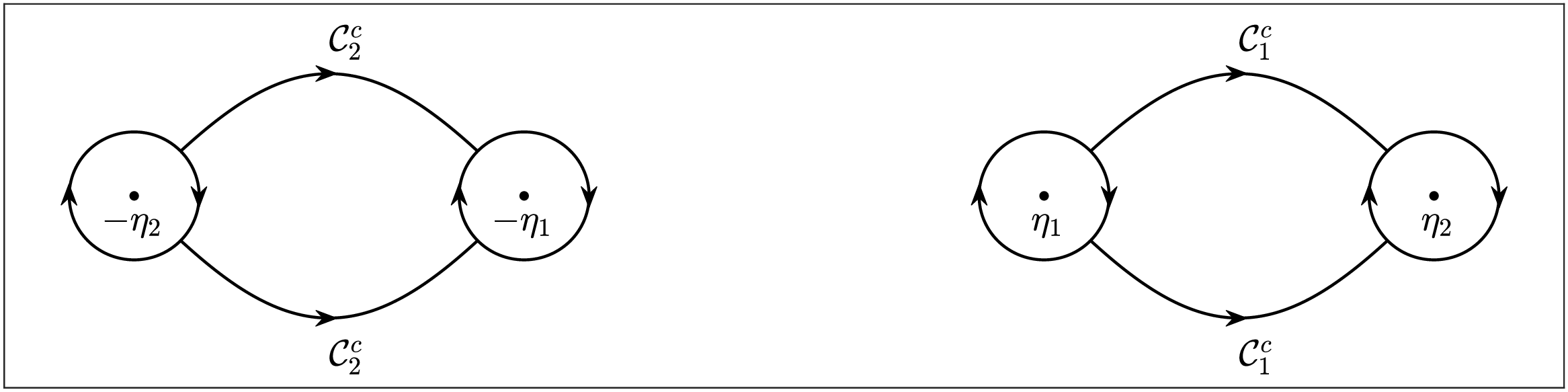}
\caption{Jump contours of the error matrix $E(\lambda; \x, 0)$ for $r=r_0$ with $\beta_0=0$}
\label{ErrorE01}
\end{figure}

Near each self-intersection point, the matrix \( E \) exhibits the local behavior:
\begin{gather}
E(\lambda; \x, 0) = \mathcal{O}
\begin{pmatrix}
1 & 1  \\
1 & 1
\end{pmatrix}
\end{gather}
as \( \lambda \) approaches each self-intersection point.

\begin{proposition}[Small norm estimate]
Let \( \beta_0, \beta_1, \beta_2 > -1 \). In the limit as \( \x \to -\infty \), the jump matrices \( V_0^E \) exhibit the following small norm estimates:
\begin{gather}
\begin{gathered}
\left\| V_0^E - \mathbb{I}_2 \right\|_{L^1 \cap L^2 \cap L^\infty \left( \mathcal{C}_1^c \cup \mathcal{C}_2^c \right)} = \mathcal{O}\left( \mathrm{e}^{-\mu_0 |\x|} \right), \\[0.5em]
\left\| V_0^E - \mathbb{I}_2 \right\|_{L^1 \cap L^2 \cap L^\infty \left( B\left( \pm \eta_2, \pm \eta_1, \pm \eta_0 \right) \right)} = \mathcal{O}\left( |\x|^{-1} \right),
\end{gathered}
\end{gather}
which implies
\begin{gather}\label{E0-Estimate}
E(0; \x, 0) = \mathbb{I}_2 + \mathcal{O}\left( |\x|^{-1} \right), \quad E^{[1]}(\x, 0) = \mathcal{O}\left( |\x|^{-1} \right), \quad \mathrm{as }\,\,\,\x \to -\infty,
\end{gather}
where \( E^{[1]}(\x, 0) = \lim_{\lambda \to 0} \lambda^{-1} \left( E(\lambda; \x, 0) - \mathbb{I}_2 \right) \).
\end{proposition}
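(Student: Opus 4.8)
The plan is to run a standard small-norm (Beals--Coifman) analysis of the residual Riemann--Hilbert problem \eqref{E-jump} on its jump contour $\Sigma^E=\left(\mathcal{C}_1^c\cup\mathcal{C}_2^c\right)\cup\partial B(\pm\eta_2,\pm\eta_1,\pm\eta_0)$, estimating $V_0^E-\mathbb{I}_2$ separately on the lens arcs and on the small circles. A preliminary observation is that the constant jump $\mathrm{e}^{\Delta_1^0\sigma_3}$ on $(-\eta_1,\eta_1)$ has been absorbed into the outer parametrix $P^\infty$, so $(-\eta_1,\eta_1)$ does not appear in $\Sigma^E$ and the whole contour stays a fixed positive distance from the origin. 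In particular $E(\cdot;\x,0)$ is analytic in a neighborhood of $\lambda=0$, and on the compact arcs and circles the $L^1$, $L^2$, $L^\infty$ norms are mutually comparable; thus once a uniform bound on $V_0^E-\mathbb{I}_2$ is in hand, the pointwise estimates \eqref{E0-Estimate} will follow from the Cauchy representation of $E$.

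On the lens boundaries (taken outside the disks) the jump is a $P^\infty$-conjugation of a single off-diagonal triangular factor whose nonzero entry is $-\mathrm{i}r^{-1}f_0^{2}\,\mathrm{e}^{2\x p_0}$ on $\mathcal{C}_1^c$, and its lower-triangular counterpart on $\mathcal{C}_2^c$. Away from the endpoints and from $\pm\eta_0$, the factors $r^{-1}$, $f_0^{\pm1}$ and $P^{\infty,\pm1}$ are uniformly bounded, so the size of $V_0^E-\mathbb{I}_2$ is controlled entirely by $\mathrm{e}^{2\x\,\Re(p_0)}$. The $g_0$-function and the lens opening are designed so that $\Re(p_0)$ has the favorable sign on each lens (the $p_0$-analogue of the sign statements in Propositions \ref{p-sign} and \ref{p-sign-1}), so there is $\mu_0>0$ with $2\x\,\Re(p_0)\le-\mu_0|\x|$ uniformly as $\x\to-\infty$, giving
\begin{gather}
\left\|V_0^E-\mathbb{I}_2\right\|_{L^1\cap L^2\cap L^\infty\left(\mathcal{C}_1^c\cup\mathcal{C}_2^c\right)}=\mathcal{O}\left(\mathrm{e}^{-\mu_0|\x|}\right).
\end{gather}

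On each circle the estimate comes from the matching of local and outer parametrices: by construction $V_0^E=P^{\eta_j}\left(P^\infty\right)^{-1}$, and inserting the large-$\zeta$ expansion of the relevant model solution ($M^{\mathrm{mB}}$ at $\eta_1,\eta_2$; $M^{\mathrm{mb}}$ or $M^{\mathrm{CH}}$ at $\eta_0$) shows that the conjugating factors cancel to leading order, leaving the subleading term of that expansion. Since $\zeta_{\eta_{1,2}}=\x^2\mathcal{F}_0^{\eta_{1,2}}$ scales like $\x^2$ while the Bessel expansion enters at order $\zeta^{-1/2}$, and $\zeta_{\eta_0}=-\x\mathcal{F}_0^{\eta_0}$ scales like $|\x|$ while the $M^{\mathrm{mb}}/M^{\mathrm{CH}}$ expansions enter at order $\zeta^{-1}$, each circle contributes at order $|\x|^{-1}$ uniformly in $\lambda$, whence
\begin{gather}
\left\|V_0^E-\mathbb{I}_2\right\|_{L^1\cap L^2\cap L^\infty\left(B(\pm\eta_2,\pm\eta_1,\pm\eta_0)\right)}=\mathcal{O}\left(|\x|^{-1}\right).
\end{gather}

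With the total norm of $V_0^E-\mathbb{I}_2$ thus bounded by $\mathcal{O}(|\x|^{-1})$, the small-norm theory applies: the singular integral equation for $E$ is solvable by a Neumann series for $\x$ sufficiently negative, yielding $\|E-\mathbb{I}_2\|=\mathcal{O}(\|V_0^E-\mathbb{I}_2\|)$. Evaluating the Cauchy representation at $\lambda=0$ and extracting $E^{[1]}(\x,0)=\lim_{\lambda\to0}\lambda^{-1}(E(\lambda;\x,0)-\mathbb{I}_2)$---both legitimate since $0$ lies at positive distance from $\Sigma^E$---gives \eqref{E0-Estimate}. I expect the main obstacle to be the circle $\partial B(\eta_0)$: one must verify that the modified-Bessel/confluent-hypergeometric matching is genuinely $\mathbb{I}_2+\mathcal{O}(\zeta^{-1})$, with the singular prefactors $d_l^{\eta_0},d_r^{\eta_0}$ (respectively $d^{\eta_0}$) absorbed correctly by the $A^{\eta_0}_{0\pm}$, so that the weaker scaling $\zeta_{\eta_0}\sim|\x|$---rather than the $\x^2$ scaling enjoyed at the regular endpoints---still delivers the uniform $\mathcal{O}(|\x|^{-1})$ rate without the power or logarithmic local behavior near $\eta_0$ spoiling the bound.
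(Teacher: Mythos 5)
Your proposal is essentially the argument the paper has in mind: the paper itself gives no proof, stating only that it "follows standard procedures," and your sketch — exponential decay of the conjugated triangular factors on $\mathcal{C}_1^c\cup\mathcal{C}_2^c$ via the sign of $\Re(p_0)$, the $\mathcal{O}(\zeta^{-1/2})$ Bessel matching with $\zeta\sim\x^2$ at $\pm\eta_1,\pm\eta_2$ versus the $\mathcal{O}(\zeta^{-1})$ matching with $\zeta\sim|\x|$ at $\pm\eta_0$, followed by the Neumann-series/Cauchy-representation step at $\lambda=0$ — is exactly that standard small-norm analysis, correctly adapted to this setting. The only points you would still need to pin down in a full write-up are the ones you already flag (the $p_0$-analogue of Propositions~\ref{p-sign}--\ref{p-sign-1} giving the right sign on the opened lenses, and the exact cancellation of the prefactors $d^{\eta_0}_{l},d^{\eta_0}_{r}$ in the $\eta_0$-matching), so the proposal is consistent with the paper.
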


\begin{proof}
The proof follows standard procedures and is therefore omitted here.
\end{proof}

\subsection{Large-\( \x \) Asymptotic of the Initial Values \( u(x, 0) \) as \( \x \to -\infty \)}
From \eqref{potential-formula}, the initial value \( u(x, 0) \) is derived from \( Y(\lambda; \x, 0) \):
\begin{gather}
\begin{array}{rl}
u(x, 0) =& \d \lim_{\lambda \to 0} \lambda^{-1} \left( Y(0; \x, 0)^{-1} Y(\lambda; \x, 0) \right)_{1, 2}, \\[1em]
x = & \x + \d\lim_{\lambda \to 0} \lambda^{-1} \left( \left( Y(0; \x, 0)^{-1} Y(\lambda; \x, 0) \right)_{1, 1} - 1 \right).
\end{array}
\end{gather}
Utilizing the transformations \( Y(\lambda; \x, 0) \mapsto T(\lambda; \x, 0) \) from \eqref{Conjugation-0}, \( T(\lambda; \x, 0) \mapsto S(\lambda; \x, 0) \) from \eqref{T-To-S}, and \( S(\lambda; \x, 0) \mapsto E(\lambda; \x, 0) \) from \eqref{E0}, along with the estimate in \eqref{E0-Estimate}, we find:
\begin{gather}
\begin{array}{rl}
u(x, 0) =& f_{0\pm}^{-2}(0) \mathrm{e}^{2\x g_{0\pm}(0)} \left( P_{\pm}^\infty(0; \x, 0)^{-1} P_{\pm}^{\infty, [1]}(\x, 0) \right)_{1, 2} + \mathcal{O}\left( |\x|^{-1} \right), \\[1em]
x =& \x + \left( P_{\pm}^\infty(0; \x, 0)^{-1} P_{\pm}^{\infty, [1]}(\x, 0) \right)_{1, 1} + f_{0\pm}^{-1}(0) f_{0\pm}^{[1]} - \x g_{0\pm}^{[1]} + \mathcal{O}\left( |\x|^{-1} \right),
\end{array}
\end{gather}
where
\begin{gather}
\begin{array}{rl}
P_{\pm}^{\infty, [1]}(\x, 0) =& \d\lim_{\lambda \in \mathbb{C}^\pm \to 0} \lambda^{-1} \left( P^\infty(\lambda; \x, 0) - P_\pm^\infty(0; \x, 0) \right), \\[1em]
f_{0\pm}^{[1]} =& \d\lim_{\lambda \in \mathbb{C}^\pm \to 0} \lambda^{-1} \left( f_0(\lambda) - f_{0\pm}(0) \right), \\[1em]
g_{0\pm}^{[1]} =& \d\lim_{\lambda \in \mathbb{C}^\pm \to 0} \lambda^{-1} \left( g_0(\lambda) - g_{0\pm}(0) \right).
\end{array}
\end{gather}
A straightforward calculation yields \eqref{large-x-left} in Theorem \ref{large-x}.

\section{Long-time asymptotic for the soliton gas $u(x, t)$}

In this section, we derive the long-time asymptotics of the soliton gas in the limit as \( t \to +\infty \).
For the case where \( \xi > -\eta_2^{-2} \), a standard small norm argument leads  to \eqref{large-right}. When \( \xi < -\eta_2^{-2} \), we can similarly consider two scenarios: one with \( r = r_0, r_c \) and \( \beta_0 \neq 0 \), and the other with \( r = r_0 \) and \( \beta_0 = 0 \). The derivation of long-time asymptotics for the latter case is more straightforward than for the former, and thus we shall omit the details for simplicity, leaving this case for the reader to explore.

We focus on deriving the long-time asymptotics for the soliton gas $u(x, t)$ in the regions \( \xi \in \left(-\infty, \xi_\mathrm{crit}\right) \cup \left(\xi_\mathrm{crit}, \xi_0\right) \cup \left(\xi_0, -\eta_2^{-2}\right) \).
To ensure that \( E(\lambda; \x, t) \) normalizes to the identity matrix \( \mathbb{I}_2 \) at infinity, and that its jump matrices decay uniformly and exponentially to \( \mathbb{I}_2 \), we will apply a series of transformations: \( Y(\lambda; \x, t) \mapsto T(\lambda; \x, t) \mapsto S(\lambda; \x, t) \mapsto E(\lambda; \x, t) \).
Following the Deift-Zhou steepest descent method, we proceed to perform triangular decompositions to facilitate contour deformation:
\begin{gather}
\begin{gathered}
\mathcal{L}^{t\theta}\left[\mathrm{i}r\right] = \mathcal{U}^{t\theta}\left[-\mathrm{i}r^{-1}\right] \left(\mathcal{L}^{t\theta}\left[\mathrm{i}r\right] + \mathcal{U}^{t\theta}\left[\mathrm{i}r^{-1}\right] - 2\,\mathbb{I}_2\right) \mathcal{U}^{t\theta}\left[-\mathrm{i}r^{-1}\right], \\[0.5em]
\mathcal{U}^{t\theta}\left[\mathrm{i}r\right] = \mathcal{L}^{t\theta}\left[-\mathrm{i}r^{-1}\right] \left(\mathcal{L}^{t\theta}\left[\mathrm{i}r^{-1}\right] + \mathcal{U}^{t\theta}\left[\mathrm{i}r\right] - 2\,\mathbb{I}_2\right) \mathcal{L}^{t\theta}\left[-\mathrm{i}r^{-1}\right].
\end{gathered}
\end{gather}
However, based on the sign of \( \Re\left(\theta\right) \) illustrated in Figure \ref{Sign}, we observe that exponential decay fails on the corresponding lenses. Consequently, prior to contour deformation, a conjugation operation is required, achieved by introducing an appropriate \( g \)-function. Additionally, a scalar \( f \)-function is introduced to ensure constant jump matrices.

Local behaviors play a crucial role in the construction of local parametrices, as discussed in the previous section. In this section, for the sake of clarity and simplicity, we will omit the detailed examination of the local behaviors of \( T \), \( S \), and \( E \).

\begin{figure}[!t]
\centering
\includegraphics[scale=0.32]{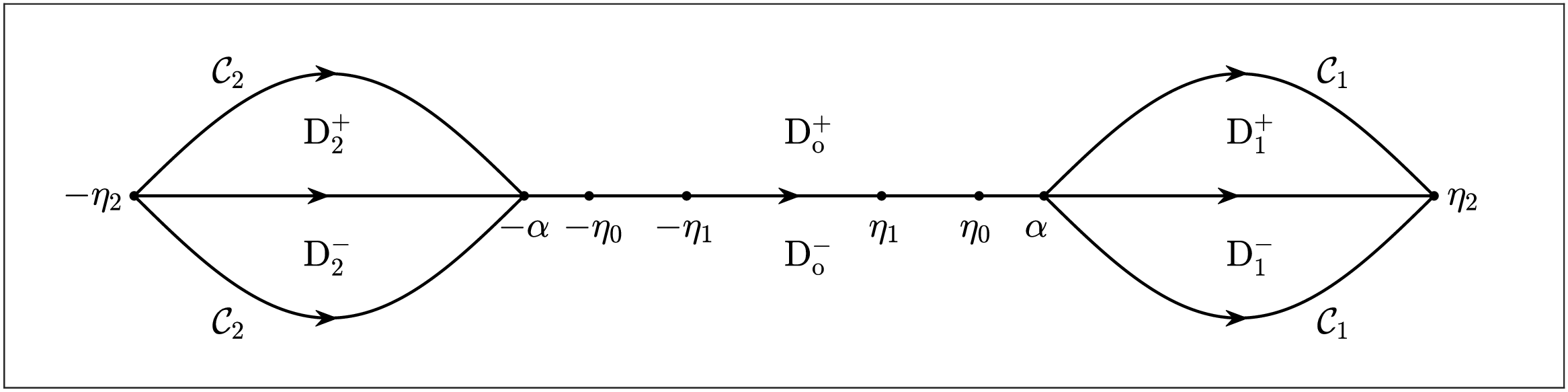}
\caption{Contour deformation in the region $\xi\in\left(\xi_0, -\eta_2^{-2}\right)$.}
\label{Deformation-2}
\end{figure}

\subsection{Riemann-Hilbert problem for $T(\lambda; \x, t)$}

The function \( f = f(\lambda; \xi) \) is analytic in the variable \( \lambda \) for \( \lambda \in \mathbb{C} \setminus [-\eta_2, \eta_2] \) and approaches the limit 1 as \( \lambda \to \infty \). The continuous boundary values, denoted as \( f_\pm \), exhibit the following relationships:
\begin{gather}
\begin{aligned}
& f_+(\lambda; \xi) f_-(\lambda; \xi) = r, && \mathrm{for}\,\,\, \lambda \in (\alpha, \eta_2), \\[0.5em]
& f_+(\lambda; \xi) f_-(\lambda; \xi) = r^{-1}, &&\mathrm{for}\,\,\, \lambda \in (-\eta_2, -\alpha), \\[0.5em]
& f_+^{-1}(\lambda; \xi) f_-(\lambda; \xi) = \mathrm{e}^{\Omega^\alpha \phi^\alpha}, && \mathrm{for}\,\,\, \lambda \in (-\alpha, \alpha),
\end{aligned}
\end{gather}
when \( \xi \in (\xi_0, -\eta_2^{-2}) \). Conversely, for \( \xi < \xi_0 \), the boundary values relate as follows:
\begin{gather}
\begin{aligned}
&f_+(\lambda; \xi)f_-(\lambda; \xi) =r, && \mathrm{for}\,\,\, \lambda\in\left(\alpha, \eta_0\right)\cup\left(\eta_0, \eta_2\right),   \\[0.5em]
&f_+(\lambda; \xi)f_-(\lambda; \xi) =r^{-1}&& \mathrm{for}\,\,\, \lambda\in\left(-\eta_2, -\eta_0\right)\cup\left(-\eta_0, -\alpha\right),  \\[0.5em]
&f_+^{-1}(\lambda; \xi)f_-(\lambda; \xi)=\mathrm{e}^{\Omega^\alpha\phi^\alpha}, &&\mathrm{for}\,\,\, \lambda\in\left(-\alpha, \alpha\right),
\end{aligned}
\end{gather}
The \( f \)-function can be expressed through the logarithm of \( R(\lambda; \xi) \) utilizing Plemelj's formula as follows:
\begin{gather}
f(\lambda; \xi)=\exp\left\{\frac{R(\lambda; \xi)}{\pi \mathrm{i}}\left(\int_{\alpha}^{\eta_2}\frac{\log r(s)}{R_{+}\left(s; \xi\right)}\frac{\lambda}{s^2-\lambda^2}\mathrm{d}s
-\int_{0}^{\alpha}\frac{\Omega^\alpha\phi^\alpha}{R(s; \xi)}\frac{\lambda}{s^2-\lambda^2}\mathrm{d}s\right)\right\}.
\end{gather}
With this groundwork established, we define the following conjugation:
\begin{gather}\label{Conjugation}
T(\lambda; \x, t)=Y(\lambda; \x, t)\mathrm{e}^{tg(\lambda; \xi)\sigma_3}f(\lambda; \xi)^{-\sigma_3},
\end{gather}	

\begin{RH} \( T(\lambda; \xi, t) \) is a \( 2 \times 2 \) matrix-valued function that fulfills a Riemann-Hilbert problem:

 \begin{itemize}

 \item{} The function \( T(\lambda; \xi, t) \) is analytic for \( \lambda \in \mathbb{C} \setminus [-\eta_2, \eta_2] \);

 \item{} It normalizes to the identity matrix \( \mathbb{I}_2 \) as \( \lambda \to \infty \);

 \item{} For  $\lambda\in\left(-\eta_2, \eta_2\right)\setminus\left\{\pm\eta_1, \pm\eta_0, \pm\alpha\right\}$, $T(\lambda; \x, t)$ admits continuous boundary values, and they are related by the following jump conditions
\begin{gather}
T_+=T_-
\begin{cases}
\mathcal{U}^{tp_-}_{f_-}\left[-\mathrm{i}r^{-1}\right]\left(\mathrm{i}\sigma_1\right)\mathcal{U}^{tp_+}_{f_+}\left[-\mathrm{i}r^{-1}\right], &\mathrm{for}\,\,\, \lambda\in\left(\alpha, \eta_2\right),\\[0.5em]
\mathcal{L}^{tp_-}_{f_-}\left[-\mathrm{i}r^{-1}\right]\left(\mathrm{i}\sigma_1\right)\mathcal{L}^{tp_+}_{f_+}\left[-\mathrm{i}r^{-1}\right], &\mathrm{for}\,\,\, \lambda\in\left(-\eta_2, -\alpha\right),\\[0.5em]
f_-^{\sigma_3}\mathrm{e}^{tp_-\sigma_3}\mathcal{L}\left[\mathrm{i}r\right]\mathrm{e}^{-tp_+\sigma_3}f_+^{-\sigma_3}, &\mathrm{for}\,\,\, \lambda\in\left(\eta_1, \eta_0\right)\cup\left(\eta_0, \alpha\right),\\[0.5em]
f_-^{\sigma_3}\mathrm{e}^{tp_-\sigma_3}\mathcal{U}\left[\mathrm{i}r\right]\mathrm{e}^{-tp_+\sigma_3}f_+^{-\sigma_3}, &\mathrm{for}\,\,\, \lambda\in\left(-\alpha, -\eta_0\right)\cup\left(-\eta_0, -\eta_1\right),\\[0.5em]
\mathrm{e}^{\Delta^\alpha\sigma_3}, &\mathrm{for}\,\,\, \lambda\in\left(-\eta_1, \eta_1\right).
\end{cases}
\end{gather}
if $\xi\in\left(\xi_0, -\eta_2^{-2}\right)$; if $\xi\in\left(\xi_\mathrm{crit}, \xi_0\right)$, the continuous boundary values are related by
\begin{gather}
T_+=T_-
\begin{cases}
\mathcal{U}^{tp_-}_{f_-}\left[-\mathrm{i}r^{-1}\right]\left(\mathrm{i}\sigma_1\right)\mathcal{U}^{tp_+}_{f_+}\left[-\mathrm{i}r^{-1}\right], &\mathrm{for}\,\,\, \lambda\in\left(\alpha, \eta_0\right)\cup\left(\eta_0, \eta_2\right),\\[0.5em]
\mathcal{L}^{tp_-}_{f_-}\left[-\mathrm{i}r^{-1}\right]\left(\mathrm{i}\sigma_1\right)\mathcal{L}^{tp_+}_{f_+}\left[-\mathrm{i}r^{-1}\right], &\mathrm{for}\,\,\, \lambda\in\left(-\eta_2, -\eta_0\right)\cup\left(-\eta_0, -\alpha\right),\\[0.5em]
f_-^{\sigma_3}\mathrm{e}^{tp_-\sigma_3}\mathcal{L}\left[\mathrm{i}r\right]\mathrm{e}^{-tp_+\sigma_3}f_+^{-\sigma_3}, &\mathrm{for}\,\,\, \lambda\in\left(\eta_1, \alpha\right),\\[0.5em]
f_-^{\sigma_3}\mathrm{e}^{tp_-\sigma_3}\mathcal{U}\left[\mathrm{i}r\right]\mathrm{e}^{-tp_+\sigma_3}f_+^{-\sigma_3}, &\mathrm{for}\,\,\, \lambda\in\left(-\alpha,  -\eta_1\right),\\[0.5em]
\mathrm{e}^{\Delta^\alpha\sigma_3}, &\mathrm{for}\,\,\, \lambda\in\left(-\eta_1, \eta_1\right);
\end{cases}
\end{gather}
and if $\xi<\xi_\mathrm{crit}$, the continuous boundary values are related by
\begin{gather}
T_+=T_-
\begin{cases}
\mathcal{U}^{tp_-}_{f_-}\left[-\mathrm{i}r^{-1}\right]\left(\mathrm{i}\sigma_1\right)\mathcal{U}^{tp_+}_{f_+}\left[-\mathrm{i}r^{-1}\right], &\mathrm{for}\,\,\, \lambda\in\left(\eta_1, \eta_0\right)\cup\left(\eta_0, \eta_2\right),\\[0.5em]
\mathcal{L}^{tp_-}_{f_-}\left[-\mathrm{i}r^{-1}\right]\left(\mathrm{i}\sigma_1\right)\mathcal{L}^{tp_+}_{f_+}\left[-\mathrm{i}r^{-1}\right], &\mathrm{for}\,\,\, \lambda\in\left(-\eta_2, -\eta_0\right)\cup\left(-\eta_0, -\eta_1\right),\\[0.5em]
\mathrm{e}^{\Delta_1\sigma_3}, &\mathrm{for}\,\,\, \lambda\in\left(-\eta_1, \eta_1\right);
\end{cases}
\end{gather}
\end{itemize}
\end{RH}

\subsection{Riemann-Hilbert problem for $S(\lambda; \x, t)$}
The subsequent step involves deforming the Riemann-Hilbert problem \( Y(\lambda; \xi, t) \) to derive a new Riemann-Hilbert problem \( S(\lambda; \xi, t) \).
We begin by considering the case where \( \xi \in (\xi_0, -\eta_2^{-2}) \). The contour deformation is achieved by the introduction of lens-shaped regions, as illustrated in Figure \ref{Deformation-2}. In this configuration, the domains \( \mathrm{D}^+_1 \) and \( \mathrm{D}^-_1 \) represent lenses located above and below the interval \( (\alpha, \eta_2) \), respectively. Similarly, the domains \( \mathrm{D}^+_2 \) and \( \mathrm{D}^-_2 \) correspond to lenses situated above and below the interval \( (-\eta_2, -\alpha) \).
We define the combined regions as \( \mathrm{D}_1 = \mathrm{D}^+_1 \cup \mathrm{D}^-_1 \) and \( \mathrm{D}_2 = \mathrm{D}^+_2 \cup \mathrm{D}^-_2 \).

For the case \( \xi \in (\xi_\mathrm{crit}, \xi_0) \), the contour deformation of the Riemann-Hilbert problem is depicted in Figure \ref{Deformation-3}. Here, we identify the domains \( \mathrm{D}^+_{1, l} \) and \( \mathrm{D}^-_{1, l} \) as lenses above and below the interval \( (\alpha, \eta_0) \), respectively, while \( \mathrm{D}^+_{1, r} \) and \( \mathrm{D}^-_{1, r} \) are lenses above and below \( (\eta_0, \eta_2) \). The domains \( \mathrm{D}^+_{2, l} \) and \( \mathrm{D}^-_{2, l} \) correspond to lenses above and below \( (-\eta_2, -\eta_0) \), whereas \( \mathrm{D}^+_{2, r} \) and \( \mathrm{D}^-_{2, r} \) are positioned above and below \( (-\eta_0, -\alpha) \).
For convenience, we introduce the following definitions: 
\bee
\mathrm{D}_{j, l}=\mathrm{D}_{j, l}^+\cup \mathrm{D}_{j, l}^-,\quad
\mathrm{D}_{j, r}=\mathrm{D}_{j, r}^+\cup \mathrm{D}_{j, r}^-,\quad
\mathrm{D}_j^{\pm}=\mathrm{D}_{j, l}^{\pm}\cup \mathrm{D}_{j, r}^{\pm},\quad 
\mathrm{D}_j=\mathrm{D}_{j, l}\cup \mathrm{D}_{j, r},\quad j=1,2.
\ene
In the scenario where \( \xi < \xi_\mathrm{crit} \), the contour deformation follows the same configuration as depicted in Figure \ref{Deformation-0}.

For the aforementioned three regions, we denote the domain outside the lenses as \( \mathrm{D}_\mathrm{o} \), defined by
$\mathrm{D}_\mathrm{o} = \mathbb{C} \setminus \overline{\mathrm{D}_1 \cup \mathrm{D}_2 \cup (-\eta_1, \eta_1)}$,
which can be expressed as the union of regions:
$\mathrm{D}_\mathrm{o} = \mathrm{D}_\mathrm{o}^+ \cup \mathrm{D}_\mathrm{o}^- \cup (\eta_2, +\infty) \cup (-\infty, -\eta_2),$
where \( \mathrm{D}_\mathrm{o}^+ \) and \( \mathrm{D}_\mathrm{o}^- \) represent the portions in the upper and lower half-planes, respectively.

The \( 2 \times 2 \) matrix-valued function \( S(\lambda; \xi, t) \) is defined as follows:
\begin{gather}\label{T-To-S-t}
S(\lambda; \x, t)=T(\lambda; \x, t)\begin{cases}
\mathcal{U}^{tp}_{f}\left[-\mathrm{i}r^{-1}\right]^{-1}, & \mathrm{for}\,\,\, \lambda\in\mathrm{D}_1^+,  \\[0.5em]
\mathcal{U}^{tp}_{f}\left[-\mathrm{i}r^{-1}\right], & \mathrm{for}\,\,\, \lambda\in\mathrm{D}_1^-,   \\[0.5em]
\mathcal{L}^{tp}_{f}\left[-\mathrm{i}r^{-1}\right]^{-1}, & \mathrm{for}\,\,\,  \lambda\in\mathrm{D}_2^+,  \\[0.5em]
\mathcal{L}^{tp}_{f}\left[-\mathrm{i}r^{-1}\right], & \mathrm{for}\,\,\,  \lambda\in\mathrm{D}_2^-,  \\[0.5em]
I, & \mathrm{for}\,\,\,  \lambda\in\mathrm{D}_\mathrm{o}.
\end{cases}
\end{gather}

\begin{RH}  \( S(\lambda; \xi, t) \) solves the following Riemann-Hilbert problem:

 \begin{itemize}

 \item{} It is analytic in \( \lambda \) for \( \lambda \in \mathbb{C} \setminus \left( \left[-\eta_2, \eta_2\right] \cup \mathcal{C}_1 \cup \mathcal{C}_2 \right) \);

 \item{}It normalizes to the identity matrix \( \mathbb{I}_2 \) as \( \lambda \to \infty \);

 \item{} For \( \lambda \in \left(-\eta_2, \eta_2\right) \cup \mathcal{C}_1 \cup \mathcal{C}_2 \setminus \left\{ \pm\eta_1, \pm\eta_0 \right\} \), \( S(\lambda; \xi, 0) \) admits continuous boundary values denoted by \( S_+(\lambda; \xi, t) \) and \( S_-(\lambda; \xi, t) \). These boundary values are related by the following jump conditions:
If \( \xi \in \left(\xi_0, -\eta_2^{-2}\right) \), the jump condition is:
\begin{gather}
S_+=S_-
\begin{cases}
\mathcal{U}^{tp}_{f}\left[-\mathrm{i}r^{-1}\right], &\mathrm{for}\,\,\, \lambda\in\mathcal{C}_1\setminus\left\{\alpha, \eta_2\right\},\\[0.5em]
\mathcal{L}^{tp}_{f}\left[-\mathrm{i}r^{-1}\right], &\mathrm{for}\,\,\, \lambda\in\mathcal{C}_2\setminus\left\{-\alpha, -\eta_2\right\},\\[0.5em]
\mathrm{i}\sigma_1, &\mathrm{for}\,\,\, \lambda\in\left(\alpha, \eta_2\right)\cup\left(-\eta_2, -\alpha\right),\\[0.5em]
f_-^{\sigma_3}\mathrm{e}^{tp_-\sigma_3}\mathcal{L}\left[-\mathrm{i}r\right]\mathrm{e}^{-tp_+\sigma_3}f_+^{-\sigma_3}, &\mathrm{for}\,\,\, \lambda\in\left(\eta_1, \eta_0\right)\cup\left(\eta_0, \alpha\right),\\[0.5em]
f_-^{\sigma_3}\mathrm{e}^{tp_-\sigma_3}\mathcal{U}\left[\mathrm{i}r\right]\mathrm{e}^{-tp_+\sigma_3}f_+^{-\sigma_3}, &\mathrm{for}\,\,\, \lambda\in\left(-\alpha, -\eta_0\right)\cup\left(-\eta_0, -\eta_1\right),\\[0.5em]
\mathrm{e}^{\Delta^\alpha\sigma_3}, &\mathrm{for}\,\,\, \lambda\in\left(-\eta_1, \eta_1\right);
\end{cases}
\end{gather}
If \( \xi \in \left(\xi_\mathrm{crit}, \xi_0\right) \), the jump condition is:
\begin{gather}
S_+=S_-
\begin{cases}
\mathcal{U}^{tp}_{f}\left[-\mathrm{i}r^{-1}\right], &\mathrm{for}\,\,\, \lambda\in\mathcal{C}_1\setminus\left\{\eta_0, \alpha, \eta_2\right\},\\[0.5em]
\mathcal{L}^{tp}_{f}\left[-\mathrm{i}r^{-1}\right], &\mathrm{for}\,\,\, \lambda\in\mathcal{C}_2\setminus\left\{-\eta_0, -\alpha, -\eta_2\right\},\\[0.5em]
\mathrm{i}\sigma_1, &\mathrm{for}\,\,\, \lambda\in\left(\alpha, \eta_2\right)\cup\left(-\eta_2, -\alpha \right)\setminus \{\pm\eta_0\},\\[0.5em]
f_-^{\sigma_3}\mathrm{e}^{tp_-\sigma_3}\mathcal{L}\left[-\mathrm{i}r\right]\mathrm{e}^{-tp_+\sigma_3}f_+^{-\sigma_3}, &\mathrm{for}\,\,\, \lambda\in\left(\eta_1, \alpha\right),\\[0.5em]
f_-^{\sigma_3}\mathrm{e}^{tp_-\sigma_3}\mathcal{U}\left[\mathrm{i}r\right]\mathrm{e}^{-tp_+\sigma_3}f_+^{-\sigma_3}, &\mathrm{for}\,\,\, \lambda\in\left(-\alpha, -\eta_1\right),\\[0.5em]
\mathrm{e}^{\Delta^\alpha\sigma_3}, &\mathrm{for}\,\,\, \lambda\in\left(-\eta_1, \eta_1\right);
\end{cases}
\end{gather}
If \( \xi < \xi_\mathrm{crit} \), the jump condition is:
\begin{gather}
S_+=S_-
\begin{cases}
\mathcal{U}^{tp}_{f}\left[-\mathrm{i}r^{-1}\right], &\mathrm{for}\,\,\, \lambda\in\mathcal{C}_1\setminus\left\{\eta_0, \eta_1, \eta_2\right\},\\[0.5em]
\mathcal{L}^{tp}_{f}\left[-\mathrm{i}r^{-1}\right], &\mathrm{for}\,\,\, \lambda\in\mathcal{C}_2\setminus\left\{-\eta_0, -\eta_1, -\eta_2\right\},\\[0.5em]
\mathrm{i}\sigma_1, &\mathrm{for}\,\,\, \lambda\in\left(\eta_1, \eta_2\right)\cup\left(-\eta_2, -\eta_1\right)\setminus \{\pm\eta_0\},\\[0.5em]
\mathrm{e}^{\Delta_1\sigma_3}, &\mathrm{for}\,\,\, \lambda\in\left(-\eta_1, \eta_1\right).
\end{cases}
\end{gather}
\end{itemize}
\end{RH}

\subsection{Riemann-Hilbert problem for $E(\lambda; \x, t)$}

We define the error matrix \( E(\lambda; \x, t) \) as follows:
\begin{gather}\label{E}
E(\lambda; \x, t)=S(\lambda; \x, t)P(\lambda; \x, t)^{-1},
\end{gather}
where \( P(\lambda; \x, t) \) represents the global parametrix, which varies across distinct regions. In the interval \( \xi \in (\xi_0, -\eta_2^{-2}) \), the formulation of the global parametrix \( P(\lambda; \x, t) \) is given by:
\begin{gather}
P(\lambda; \x, t)=
\begin{cases}
P^\infty\left(\lambda; \x, t\right), &\mathrm{for}\,\,\,\lambda\in\mathbb{C}\setminus\overline{B\left(\pm\eta_2, \pm\alpha\right)},  \\
P^{\eta_2}\left(\lambda; \x, t\right), & \mathrm{for}\,\,\,\lambda\in B\left(\eta_2\right),  \\
P^{\alpha}\left(\lambda; \x, t\right), & \mathrm{for}\,\,\,\lambda\in B\left(\alpha\right),  \\
\sigma_2P^{\eta_2}\left(-\lambda; \x, t\right)\sigma_2, & \mathrm{for}\,\,\,\lambda\in B\left(-\eta_2\right),  \\
\sigma_2P^{\alpha}\left(-\lambda; \x, t\right)\sigma_2, & \mathrm{for}\,\,\,\lambda\in B\left(-\alpha\right).
\end{cases}
\end{gather}
The outer parametrix, denoted \( P^\infty(\lambda; \x, t) = \left( P_{i,j}^\infty(\lambda; \x, t) \right)_{2 \times 2} \), is described as follows:
\begin{align}
&P^\infty_{1, 1}(\lambda; \x, t)=\frac{\delta_\alpha+\delta_\alpha^{-1}}{2}
\frac{\vartheta_3\left(w_\alpha+\frac{1}{4}+\frac{\Delta^\alpha}{2\pi\mathrm{i}}; \tau_\alpha\right)}{\vartheta_3\left(w_\alpha+\frac{1}{4}; \tau_\alpha\right)}
\frac{\vartheta_3\left(0; \tau_\alpha\right)}{\vartheta_3\left(\frac{\Delta^\alpha}{2\pi\mathrm{i}}; \tau_\alpha\right)}, \no\\
&P^\infty_{1, 2}(\lambda; \x, t)=\frac{\delta_\alpha-\delta_\alpha^{-1}}{2}
\frac{\vartheta_3\left(w_\alpha-\frac{1}{4}-\frac{\Delta^\alpha}{2\pi\mathrm{i}}; \tau_\alpha\right)}{\vartheta_3\left(w_\alpha-\frac{1}{4}; \tau_\alpha\right)}
\frac{\vartheta_3\left(0; \tau_\alpha\right)}{\vartheta_3\left(\frac{\Delta^\alpha}{2\pi\mathrm{i}}; \tau_\alpha\right)},\no\\
&P^\infty_{2, 1}(\lambda; \x, t)=\frac{\delta_\alpha-\delta_\alpha^{-1}}{2}
\frac{\vartheta_3\left(w_\alpha-\frac{1}{4}+\frac{\Delta^\alpha}{2\pi\mathrm{i}}; \tau_\alpha\right)}{\vartheta_3\left(w_\alpha-\frac{1}{4}; \tau_\alpha\right)}
\frac{\vartheta_3\left(0; \tau_\alpha\right)}{\vartheta_3\left(\frac{\Delta^\alpha}{2\pi\mathrm{i}}; \tau_\alpha\right)}, \no\\
&P^\infty_{2, 2}(\lambda; \x, t)=\frac{\delta_\alpha+\delta_\alpha^{-1}}{2}
\frac{\vartheta_3\left(w_\alpha+\frac{1}{4}-\frac{\Delta^\alpha}{2\pi\mathrm{i}}; \tau_\alpha\right)}{\vartheta_3\left(w_\alpha+\frac{1}{4}; \tau_\alpha\right)}
\frac{\vartheta_3\left(0; \tau_\alpha\right)}{\vartheta_3\left(\frac{\Delta^\alpha}{2\pi\mathrm{i}}; \tau_\alpha\right)},
\label{out-alpha}
\end{align}
where \( \delta_\alpha = \delta_\alpha(\lambda) \) is defined as a branch of
\[
\left( \frac{(\lambda + \alpha)(\lambda - \eta_2)}{(\lambda + \eta_2)(\lambda - \alpha)} \right)^{1/4}
\]
with branch cuts along \( [\alpha, \eta_2] \cup [-\eta_2, -\alpha] \) and normalized to \( \delta_\alpha = 1 + \mathcal{O}(\lambda^{-1}) \) as \( \lambda \to \infty \). Additionally, \( w_\alpha = w_\alpha(\lambda) \) is defined by
\begin{gather}
w_\alpha=-\frac{\eta_2}{4eK\left(m_\alpha\right)}\int_{\eta_2}^\lambda \frac{\mathrm{d}\zeta}{R(\zeta; \xi)}.
\end{gather}
For the local parametrix \( P^{\eta_2}(\lambda; \x, t) \), valid for \( \lambda \in (\mathrm{D}_1 \cup \mathrm{D}_\mathrm{o}) \cap B(\eta_2) \), we have:
\begin{gather} \label{parametrix-eta2}
P^{\eta_2}(\lambda; \x, t) = P^\infty(\lambda; \x, t) A^{\eta_2} C \zeta_{\eta_2}^{-\sigma_3/4} M^{\mathrm{mB}}(\zeta_{\eta_2}; \beta_2) e^{-\sqrt{\zeta_{\eta_2}} \sigma_3} (A^{\eta_2})^{-1},
\end{gather}
where $\zeta_{\eta_2}=t^2\mathcal{F}^{\eta_2}$, $A^{\eta_2}=\left(\mathrm{e}^{\pi\mathrm{i}/4/}/fd^{\eta_2}\right)^{\sigma_3}\sigma_2$, and $\mathcal{F}^{\eta_2}=p(\lambda; \xi)^2$ serves as the conformal mapping.
Additionally, for \( \lambda \in (\mathrm{D}_\mathrm{o}^\pm \cup \mathrm{D}_1^\pm) \cap B(\alpha) \), the local parametrix \( P^{\alpha}(\lambda; \x, t) \) can be expressed as:
\begin{gather} \label{parametrix-alpha}
P^{\alpha}(\lambda; \x, t) = P^\infty(\lambda; \x, t) A_{\pm}^{\alpha} C \zeta_{\alpha}^{-\sigma_3/4} M^{\mathrm{Ai}}(\zeta_{\alpha}) e^{\zeta_\alpha^{3/2} \sigma_3} (A_{\pm}^{\alpha})^{-1},
\end{gather}
with \( \zeta_{\alpha} = t^{2/3} \mathcal{F}_\pm^{\alpha} \), where \( \mathcal{F}^\alpha = \left( p \pm \frac{\Omega_\xi^\alpha}{2} \right)^{2/3} \), \( \Omega_\xi^\alpha = \frac{\Omega^\alpha(\xi + \frac{1}{\alpha \eta_2})}{4} \), and \( A_{\pm}^{\alpha} = \left( \frac{e^{\pi i/4 \mp t \Omega_\xi^\alpha /2}}{f \sqrt{r}} \right)^{\sigma_3} \sigma_1 \).

\begin{figure}[!t]
\centering
\includegraphics[scale=0.32]{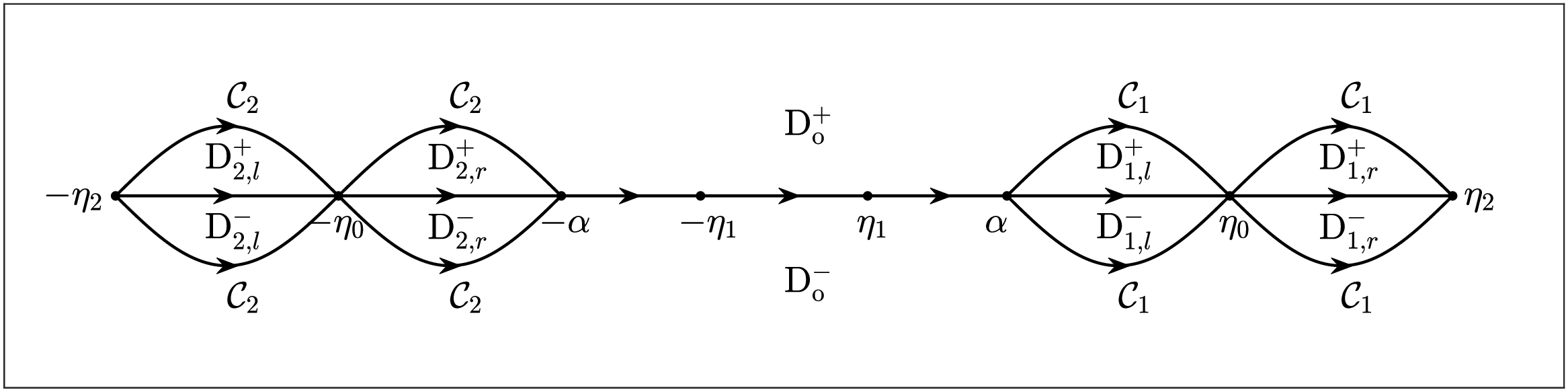}
\caption{Contour deformation in the region $\xi\in\left(\xi_\mathrm{crit}, \xi_0\right)$. }
\label{Deformation-3}
\end{figure}

As illustrated in Figure \ref{ErrorE1}, the error matrix \( E(\lambda; \x, t) \) satisfies the following Riemann-Hilbert problem. Specifically, \( E(\lambda; \x, t) \) is analytic in the region \( \lambda \in \mathbb{C} \setminus \mathcal{C}_1^c \cup \mathcal{C}_2^c \cup \partial B(\pm\alpha, \pm\eta_2) \cup [\eta_1, \alpha] \cup [-\alpha, -\eta_1] \) and is normalized to the identity matrix \( \mathbb{I}_2 \). The jump condition is expressed as:
\begin{gather}\label{E-Jump}
E_+(\lambda; \x, t) = E_-(\lambda; \x, t) V^E,
\end{gather}
where the jump matrix \( V^E \) is defined piecewise:
\begin{gather}
V^E=
\begin{cases}
P^\infty (\lambda)\mathcal{U}^{tp}_{f}\left[-ir^{-1}\right] \left(P^\infty (\lambda)\right)^{-1}, &\mathrm{for} \,\,\, \lambda\in\mathcal{C}^c_1,  \\
P^\infty (\lambda)\mathcal{L}^{tp}_{f}\left[-ir^{-1}\right] \left(P^\infty (\lambda)\right)^{-1}, &\mathrm{for} \,\,\, \lambda\in\mathcal{C}^c_2,  \\
P^{\eta_2}(\lambda)\left(P^\infty(\lambda)\right)^{-1}, &\mathrm{for} \,\,\, \lambda\in\partial B\left(\eta_2\right),  \\
P^{\alpha}(\lambda)\left(P^\infty(\lambda)\right)^{-1}, &\mathrm{for} \,\,\, \lambda\in\partial B\left(\alpha\right),   \\
\sigma_2P^{\eta_2}(-\lambda)\left(P^\infty\left(-\lambda\right)\right)^{-1}\sigma_2, &\mathrm{for} \,\,\, \lambda\in\partial B\left(-\eta_2\right), \\
\sigma_2P^{\alpha}\left(-\lambda\right)\left(P^\infty\left(-\lambda\right)\right)^{-1}\sigma_2, &\mathrm{for} \,\,\, \lambda\in\partial B\left(-\alpha\right),\\
\mathbb{I}_2+V_1^E, &\mathrm{for} \,\,\, \lambda\in\left(\eta_1, \eta_0\right)\cup\left(\eta_0, \alpha\right)\setminus \overline{B(\alpha)}\\
\mathbb{I}_2+V_2^E, &\mathrm{for} \,\,\, \lambda\in\left(-\alpha, -\eta_0\right)\cup\left(-\eta_0, -\eta_1\right)\setminus \overline{B(-\alpha)},
\end{cases}
\end{gather}
Here, \( V_1^E \) and \( V_2^E \) are given by:
\begin{gather}
\begin{aligned}
V^E_1 &= P^\infty_- f_-^{\sigma_3} \mathrm{e}^{tp_-\sigma_3} \left(\mathcal{L} \left[\mathrm{i}r\right] - \mathbb{I}_2\right) \mathrm{e}^{-tp_+\sigma_3} f_+^{-\sigma_3} \left(P^\infty_+\right)^{-1}, \\
V^E_2 &= P^\infty_- f_-^{\sigma_3} \mathrm{e}^{tp_-\sigma_3} \left(\mathcal{U} \left[\mathrm{i}r\right] - \mathbb{I}_2\right) \mathrm{e}^{-tp_+\sigma_3} f_+^{-\sigma_3} \left(P^\infty_+\right)^{-1}.
\end{aligned}
\end{gather}

\begin{proposition}[Small Norm Estimate in the Region \( \xi \in (\xi_0, -\eta_2^{-2}) \)]
For parameters \( \beta_2 > -1 \), \( \beta_1 \ge 0 \), and \( \beta_0 \ge 0 \), the jump matrices \( V^E \) exhibit the following small norm estimates:
\begin{gather}
\begin{aligned}
&\left\|V^E-\mathbb{I}_2\right\|_{L^1\cap L^2\cap L^\infty \left(\mathcal{C}^c\right)}=\mathcal{O}\left(\mathrm{e}^{-\mu t}\right), && \mathrm{as}\,\,\, t\to +\infty,  \\
&\left\|V^E-\mathbb{I}_2\right\|_{L^1\cap L^2\cap L^\infty \left(B\left(\pm\eta_2, \pm\alpha\right)\right)}=\mathcal{O}\left(t^{-1}\right), && \mathrm{as}\,\,\, t\to +\infty,
\end{aligned}
\end{gather}
These estimates yield the following result:
\begin{gather}
E(0; \x, t) = \mathbb{I}_2 + \mathcal{O}\left(t^{-1}\right), \quad E^{[1]}(\x, t) = \mathcal{O}\left(t^{-1}\right), \quad \mathrm{as}\,\,\, t \to +\infty,
\end{gather}
where \( E^{[1]}(\x, t) = \lim_{\lambda \to 0} \lambda^{-1} \left(E(\lambda; \x, 0) - \mathbb{I}_2\right) \) and \( \mathcal{C}^c = \mathcal{C}_1^c \cup \mathcal{C}_2^c \cup [\eta_1, \alpha] \cup [-\alpha, -\eta_1] \setminus B(\pm\alpha) \).
\end{proposition}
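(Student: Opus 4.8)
The plan is to run the standard small-norm analysis of the Deift--Zhou method, splitting the support $\Sigma_E$ of $V^E$ into three parts and estimating $V^E-\mathbb{I}_2$ on each: the lens boundaries $\mathcal{C}_1^c\cup\mathcal{C}_2^c$, the residual gap segments $[\eta_1,\alpha]\cup[-\alpha,-\eta_1]$, and the parametrix circles $\partial B(\pm\eta_2,\pm\alpha)$. On the first two we want exponential decay, on the third only $\mathcal{O}(t^{-1})$, and the $\mathcal{O}(t^{-1})$ for $E(0)$ and $E^{[1]}$ then follows from the resolvent bound for the associated Cauchy operator.

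First I would prove the exponential estimates. On $\mathcal{C}_1^c$ the deviation $V^E-\mathbb{I}_2$ is the $P^\infty$-conjugate of the single off-diagonal entry of $\mathcal{U}^{tp}_f[-\mathrm{i}r^{-1}]$, whose only $t$-dependent factor is $\mathrm{e}^{2tp}$. By Proposition \ref{p-sign} one has $\Re(p_+)=0$ on $(\alpha,\eta_2)$ together with $\lim_{\Im\lambda\to0^+}\partial_{\Im\lambda}\Re(p)<0$, so on the opened upper lens $\Re(p)<0$ strictly (and symmetrically $\Re(p)>0$ below), whence $|\mathrm{e}^{2tp}|\le \mathrm{e}^{-ct}$ uniformly once we stay away from $\{\pm\eta_2,\pm\alpha\}$; since $\mathcal{C}_1^c$ excises the disks, $r^{-1}$ and $P^\infty$ remain bounded there. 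On $[\eta_1,\alpha]\setminus B(\alpha)$ the relevant factor in $V_1^E$ is $\mathrm{e}^{-t(p_++p_-)}$, and Proposition \ref{p-sign} gives $p_++p_->0$. Here the constraints $\beta_1\ge0$ and $\beta_0\ge0$ are used: because $\eta_1,\eta_0\in(0,\alpha)$ lie in the gap for $\xi\in(\xi_0,-\eta_2^{-2})$, these bounds keep $r$ and the conjugating factors $f_\pm,P^\infty_\pm$ bounded, so no endpoint or interior singularity competes with the decay. Excising $B(\alpha)$ keeps $p_++p_-$ bounded below (it vanishes only at $\alpha$), yielding $\|V^E-\mathbb{I}_2\|_{L^1\cap L^2\cap L^\infty(\mathcal{C}^c)}=\mathcal{O}(\mathrm{e}^{-\mu t})$.

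Next I would carry out the parametrix matching on the circles. On $\partial B(\eta_2)$ I substitute the large-$\zeta$ normalization $M^{\mathrm{mB}}(\zeta;\beta_2)=\zeta^{\sigma_3/4}C^{-1}(\mathbb{I}_2+\mathcal{O}(\zeta^{-1/2}))\mathrm{e}^{\sqrt{\zeta}\sigma_3}$ into $P^{\eta_2}(P^\infty)^{-1}$; the factors $\zeta_{\eta_2}^{\mp\sigma_3/4}$, $C^{\pm1}$ and $\mathrm{e}^{\pm\sqrt{\zeta_{\eta_2}}\sigma_3}$ telescope, leaving $P^\infty A^{\eta_2}(\mathbb{I}_2+\mathcal{O}(\zeta_{\eta_2}^{-1/2}))(A^{\eta_2})^{-1}(P^\infty)^{-1}$. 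Since $\zeta_{\eta_2}=t^2\mathcal{F}^{\eta_2}$ is of order $t^2$ on the fixed circle and $A^{\eta_2},P^\infty$ are uniformly bounded and invertible there (using $\gamma>0$, $\beta_2>-1$, and that $\lambda$ stays off the theta-zeros), this gives $V^E-\mathbb{I}_2=\mathcal{O}(t^{-1})$. The same scheme on $\partial B(\alpha)$ uses the Airy normalization $M^{\mathrm{Ai}}(\zeta)=\zeta^{\sigma_3/4}C^{-1}(\mathbb{I}_2+\mathcal{O}(\zeta^{-3/2}))\mathrm{e}^{-\zeta^{3/2}\sigma_3}$ with $\zeta_\alpha=t^{2/3}\mathcal{F}^\alpha\sim t^{2/3}$, so $\mathcal{O}(\zeta_\alpha^{-3/2})=\mathcal{O}(t^{-1})$; since $\alpha$ is an interior soft edge away from $\eta_0,\eta_1,\eta_2$, the weight $r$ is smooth and positive there and the Airy construction applies. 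Symmetry gives the reflected circles the same bound, so $\|V^E-\mathbb{I}_2\|_{L^1\cap L^2\cap L^\infty(B(\pm\eta_2,\pm\alpha))}=\mathcal{O}(t^{-1})$.

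Finally, with $\|V^E-\mathbb{I}_2\|_{L^\infty}\to0$ the Cauchy operator $C_{V^E}$ has small norm, so $\mathbb{I}-C_{V^E}$ is invertible on $L^2(\Sigma_E)$ for large $t$ with uniformly bounded inverse, and $E(\lambda)=\mathbb{I}_2+\frac{1}{2\pi\mathrm{i}}\int_{\Sigma_E}\frac{\rho(s)}{s-\lambda}\,\mathrm{d}s$ with $\|\rho\|_{L^2}\lesssim\|V^E-\mathbb{I}_2\|_{L^2}$. As $\lambda=0\in(-\eta_1,\eta_1)$ is bounded away from $\Sigma_E$, both $E(0)-\mathbb{I}_2$ and $E^{[1]}=\frac{1}{2\pi\mathrm{i}}\int_{\Sigma_E}\rho(s)s^{-2}\,\mathrm{d}s$ are controlled by $\|V^E-\mathbb{I}_2\|_{L^1}$, producing the stated $\mathcal{O}(t^{-1})$ bounds. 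I expect the \emph{main obstacle} to be the matching step: confirming that the prefactors $A^{\eta_2},A^\alpha_\pm,P^\infty$ stay uniformly bounded and invertible on the circles in spite of the weight's endpoint exponents, and that conjugating the residual $\mathcal{O}(\zeta^{-1/2})$ (resp. $\mathcal{O}(\zeta^{-3/2})$) term does not degrade its order. This is precisely where the index-$\beta_2$ modified Bessel parametrix, the soft-edge Airy parametrix, and the placement of $\eta_0,\eta_1$ inside the gap (forcing $\beta_1,\beta_0\ge0$) enter, and everything else is routine.
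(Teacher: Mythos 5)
Your proposal follows essentially the same route as the paper: Proposition \ref{p-sign} together with the symmetries $p(\lambda^*;\xi)=p(\lambda;\xi)^*$ and $p(-\lambda;\xi)=-p(\lambda;\xi)$ yield the exponential decay on $\mathcal{C}^c$ (with $\beta_1,\beta_0\ge 0$ keeping $r$ bounded on the gap $[\eta_1,\alpha]$, which you correctly identify as the reason for those hypotheses), the large-$\zeta$ normalizations of $M^{\mathrm{mB}}$ and $M^{\mathrm{Ai}}$ give the $\mathcal{O}(t^{-1})$ matching on the circles, and the standard small-norm/Cauchy-operator argument finishes. One sign slip in your parenthetical: on the lower half of $\mathcal{C}_1$ the Schwarz symmetry gives $\Re(p)<0$ again (not $\Re(p)>0$), which is exactly what the decay of the off-diagonal entry $\propto \mathrm{e}^{2tp}$ requires there, since the jump formula is the same on both halves of $\mathcal{C}_1$; the positivity $\Re(p)>0$ pertains to $\mathcal{C}_2$ via the odd symmetry, where the jump is lower-triangular with entry $\propto \mathrm{e}^{-2tp}$.
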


\begin{proof}
The proof follows from Proposition \ref{p-sign} along with the symmetry relation:
\begin{gather}
p(\lambda; \xi) = p(\lambda^*; \xi)^* = -p(-\lambda; \xi), \quad \mathrm{for}\,\,\,  \lambda \in \mathbb{C} \setminus [-\eta_2, \eta_2],
\end{gather}
which implies \( \Re(p) < 0 \) for \( \lambda \in \mathcal{C}_1 \setminus \{\alpha, \eta_2\} \), \( \Re(p) > 0 \) for \( \lambda \in \mathcal{C}_2 \setminus \{-\alpha, -\eta_2\} \), \( p_+ + p_- > 0 \) for \( \lambda \in [\eta_1, \alpha) \), and \( p_+ + p_- < 0 \) for \( \lambda \in (-\alpha, -\eta_1] \). These conditions lead to estimates for the jump matrix along \( \mathcal{C}^c \).
The estimate in \( B\left(\pm\eta_2, \pm\alpha\right) \) follows from:
\begin{gather}
\begin{aligned}
&P^{\eta_2}(\lambda)\left(P^\infty(\lambda)\right)^{-1} = \mathbb{I}_2 + \mathcal{O}\left(t^{-1}\right), && \mathrm{for}\,\,\, \lambda \in \partial B(\eta_2), \\
&P^{\alpha}(\lambda)\left(P^\infty(\lambda)\right)^{-1} = \mathbb{I}_2 + \mathcal{O}\left(t^{-1}\right), && \mathrm{for}\,\,\, \lambda \in \partial B(\alpha).
\end{aligned}
\end{gather}
This completes the proof.
\end{proof}

\begin{figure}[!t]
\centering
\includegraphics[scale=0.35]{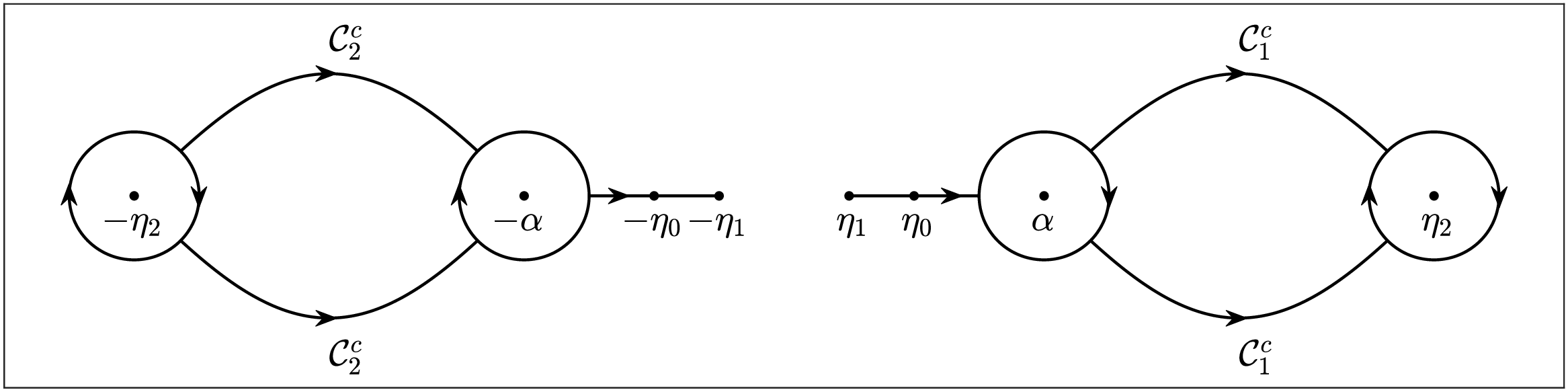}
\caption{Jump contour of the error matrix $E(\lambda; \x, t)$ in the region $\xi_0<\xi<-\eta_2^{-2}$}
\label{ErrorE1}
\end{figure}

In the interval \(\xi \in (\xi_\mathrm{crit}, \xi_0)\), we define the global parametrix \(P(\lambda; \x, t)\) as follows:
\begin{gather}
P(\lambda; \x, t)=
\begin{cases}
P^\infty\left(\lambda; \x, t\right), &\mathrm{for}\,\,\,\lambda\in\mathbb{C}\setminus\overline{B\left(\pm\eta_2, \pm\alpha, \pm \eta_0\right)},  \\
P^{\eta_2}\left(\lambda; \x, t\right), & \mathrm{for}\,\,\,\lambda\in B\left(\eta_2\right),  \\
P^{\alpha}\left(\lambda; \x, t\right), & \mathrm{for}\,\,\,\lambda\in B\left(\alpha\right),  \\
P^{\eta_0}\left(\lambda; \x, t\right), & \mathrm{for}\,\,\,\lambda\in B\left(\eta_0\right),  \\
\sigma_2P^{\eta_2}\left(-\lambda; \x, t\right)\sigma_2, & \mathrm{for}\,\,\,\lambda\in B\left(-\eta_2\right),  \\
\sigma_2P^{\alpha}\left(-\lambda; \x, t\right)\sigma_2, & \mathrm{for}\,\,\,\lambda\in B\left(-\alpha\right),\\
\sigma_2P^{\eta_0}\left(-\lambda; \x, t\right)\sigma_2, & \mathrm{for}\,\,\,\lambda\in B\left(-\eta_0\right),
\end{cases}
\end{gather}
The outer parametrix \(P^\infty(\lambda; \x, t)\) is specified in \eqref{out-alpha}, while the local parametrices \(P^{\eta_2}(\lambda; \x, t)\) and \(P^{\alpha}(\lambda; \x, t)\) are detailed in \eqref{parametrix-eta2} and \eqref{parametrix-alpha}, respectively.

We now present the local parametrix \(P^{\eta_0}(\lambda; \x, t)\). For the first type of generalized reflection coefficient \(r = r_0\) with \(\beta_0 \neq 0\), the conformal mapping is defined as \(\mathcal{F}^{\eta_0} = \pm(p - p_\pm(\eta_0))\) for \(\lambda \in B(\eta_0) \cap \mathbb{C}^\pm\). Following the asymptotic behavior of the initial value at large \(\x\), we establish the domains:
$\mathrm{D}_{\mathrm{o}, 1, r}^+=\left(\mathcal{F}^{\eta_0}\right)^{-1} \left(\mathrm{D}^\zeta_2\cap B^\zeta(0)\right)$,
 $\mathrm{D}_{\mathrm{o}, 1, l}^+=\left(\mathcal{F}^{\eta_0}\right)^{-1} \left(\mathrm{D}^\zeta_3\cap B^\zeta(0)\right)$,
 $\mathrm{D}_{\mathrm{o}, 1, l}^-=\left(\mathcal{F}^{\eta_0}\right)^{-1} \left(\mathrm{D}^\zeta_6\cap B^\zeta(0)\right)$,
and
 $\mathrm{D}_{\mathrm{o}, 1, r}^-=\left(\mathcal{F}^{\eta_0}\right)^{-1} \left(\mathrm{D}^\zeta_7\cap B^\zeta(0)\right)$.
In the vicinity of \(\lambda = \eta_0\), the local parametrix \(P^{\eta_0}(\lambda; \x, t)\) is constructed as follows:
 \begin{itemize}

 \item{} For \(\lambda \in \left(\mathrm{D}_{1, r}^+ \cup \mathrm{D}_{\mathrm{o}, 1, r}^+\right) \cap B(\eta_0)\),
\begin{gather}\label{eta0-1}
P^{\eta_0}(\lambda; \x, t) = P^\infty(\lambda; \x, t) A_{r+}^{\eta_0} \mathrm{e}^{\beta_0\pi \mathrm{i} \sigma_3/4} \left(-\mathrm{i} \sigma_2\right) M^{\mathrm{mb}}(\zeta_{\eta_0}; \beta_0) \mathrm{e}^{\zeta_{\eta_0} \sigma_3} \left(A_{r+}^{\eta_0}\right)^{-1},
\end{gather}
with \(A_{r+}^{\eta_0} = \left(\mathrm{e}^{\pi \mathrm{i}/4 + tp_{+}(\eta_0)}/fd_r^{\eta_0}\right)^{\sigma_3} \sigma_2\).

 \item{}  For \(\lambda \in \left(\mathrm{D}_{1, l}^+ \cup \mathrm{D}_{\mathrm{o}, 1, l}^+\right) \cap B(\eta_0)\), we have:
\begin{gather}
P^{\eta_0}(\lambda; \x, t) = P^\infty(\lambda; \x, t) A_{l+}^{\eta_0} \mathrm{e}^{-\beta_0 \pi \mathrm{i} \sigma_3/4} \left(-\mathrm{i} \sigma_2\right) M^{\mathrm{mb}}(\zeta_{\eta_0}; \beta_0) \mathrm{e}^{\zeta_{\eta_0} \sigma_3} \left(A_{l+}^{\eta_0}\right)^{-1},
\end{gather}
with \(A_{l+}^{\eta_0} = \left(\mathrm{e}^{\pi \mathrm{i}/4 + tp_{+}(\eta_0)}/fd_l^{\eta_0}\right)^{\sigma_3} \sigma_2\).

 \item{} For \(\lambda \in \left(\mathrm{D}_{1, l}^- \cup \mathrm{D}_{\mathrm{o}, 1, l}^-\right) \cap B(\eta_0)\), we express \(P^{\eta_0}\) as:
\begin{gather}
P^{\eta_0}(\lambda; \x, t) = P^\infty(\lambda; \x, t) A_{l-}^{\eta_0} \mathrm{e}^{\beta_0 \pi \mathrm{i} \sigma_3/4} M^{\mathrm{mb}}(\zeta_{\eta_0}; \beta_0) \mathrm{e}^{-\zeta_{\eta_0} \sigma_3} \left(A_{l-}^{\eta_0}\right)^{-1},
\end{gather}
with \(A_{l-}^{\eta_0} = \left(\mathrm{e}^{\pi \mathrm{i}/4 + tp_{-}(\eta_0)}/fd_l^{\eta_0}\right)^{\sigma_3} \sigma_2\).

 \item{} For \(\lambda \in \left(\mathrm{D}_{1, r}^- \cup \mathrm{D}_{\mathrm{o}, 1, r}^-\right) \cap B(\eta_0)\), we obtain:
\begin{gather}
P^{\eta_0}(\lambda; \x, t) = P^\infty(\lambda; \x, t) A_{r-}^{\eta_0} \mathrm{e}^{-\beta_0 \pi \mathrm{i} \sigma_3/4} M^{\mathrm{mb}}(\zeta_{\eta_0}; \beta_0) \mathrm{e}^{-\zeta_{\eta_0} \sigma_3} \left(A_{r-}^{\eta_0}\right)^{-1},
\end{gather}
with \(A_{r-}^{\eta_0} = \left(\mathrm{e}^{\pi \mathrm{i}/4 + tp_{-}(\eta_0)}/fd_r^{\eta_0}\right)^{\sigma_3} \sigma_2\) and \(\zeta_{\eta_0} = t \mathcal{F}^{\eta_0}\).

\end{itemize}

For the second type of generalized reflection coefficient \(r = r_c\), we define the conformal mapping as \(\mathcal{F}_0^{\eta_0} = \pm 2(p - p_{\pm}(\eta_0))\) for \(\lambda \in B(\eta_0) \cap \mathbb{C}^\pm\). The local parametrix in the vicinity of \(\lambda = \eta_0\) is constructed as follows:
 \begin{itemize}

 \item{} For \(\lambda \in \left(\mathrm{D}_1^+ \cup \mathrm{D}_\mathrm{o}^+\right) \cap B(\eta_0)\), it takes the form
\begin{gather}
P^{\eta_0}(\lambda; \x, t) = P^\infty(\lambda; \x, t) A^{\eta_0}_{+} \left(\zeta_{\eta_0}^{\kappa_0 \sigma_3} \mathrm{i} \sigma_2 \mathrm{e}^{\kappa_0 \pi \mathrm{i} \sigma_3}\right)^{-1} M^{\mathrm{CH}}(\zeta_{\eta_0}; \kappa_0) \mathrm{e}^{\zeta_{\eta_0} \sigma_3/2} \left(A^{\eta_0}_{+}\right)^{-1}.
\end{gather}

\item{} For \(\lambda \in \left(\mathrm{D}_1^- \cup \mathrm{D}_\mathrm{o}^-\right) \cap B(\eta_0)\), the local parametrix is expressed as
\begin{gather}\label{eta0-2}
P^{\eta_0}(\lambda; \x, t) = P^\infty(\lambda; \x, t) A^{\eta_0}_{-} \mathrm{e}^{-\kappa_0 \pi \mathrm{i} \sigma_3} M^{\mathrm{CH}}(\zeta_{\eta_0}; \kappa_0) \mathrm{e}^{-\zeta_{\eta_0} \sigma_3/2} \left(A^{\eta_0}_{-}\right)^{-1}.
\end{gather}
In the above two formulas, \(A^{\eta_0}_{\pm} = \left(\mathrm{e}^{\pi \mathrm{i}/4 + tp_{\pm}(\eta_0)}/fd^{\eta_0}\right)^{\sigma_3} \sigma_2\) and \(\zeta_{\eta_0} = t \mathcal{F}^{\eta_0}\).

\end{itemize}

\begin{RH} As illustrated in Figure \ref{ErrorE2}, the error matrix \(E(\lambda; \x, t)\) is governed by the following Riemann-Hilbert problem:

\begin{itemize}

\item{} \(E(\lambda; \x, t)\) is analytic for \(\lambda \in \mathbb{C} \setminus \mathcal{C}_1^c \cup \mathcal{C}_2^c \cup \partial B(\pm\alpha, \pm\eta_2) \cup [\eta_1, \alpha] \cup [-\alpha, -\eta_1]\);

\item{} It normalizes to the identity matrix \(\mathbb{I}_2\);

 \item{} The jump condition is detailed in \eqref{E-Jump}, with the jump matrix given by
\begin{gather}
V^E=
\begin{cases}
P^\infty (\lambda)\mathcal{U}^{tp}_{f}\left[-ir^{-1}\right] \left(P^\infty (\lambda)\right)^{-1}, &\mathrm{for} \,\,\, \lambda\in\mathcal{C}^c_1,  \\[0.5em]
P^\infty (\lambda)\mathcal{L}^{tp}_{f}\left[-ir^{-1}\right] \left(P^\infty (\lambda)\right)^{-1}, &\mathrm{for} \,\,\, \lambda\in\mathcal{C}^c_2,  \\[0.5em]
P^{\eta_j}(\lambda)\left(P^\infty(\lambda)\right)^{-1}, &\mathrm{for} \,\,\, \lambda\in\partial B\left(\eta_2\right), \,\, j=0,2, \\[0.5em]
P^{\alpha}(\lambda)\left(P^\infty(\lambda)\right)^{-1}, &\mathrm{for} \,\,\, \lambda\in\partial B\left(\alpha\right),   \\[0.5em]
\sigma_2P^{\eta_j}(-\lambda)\left(P^\infty\left(-\lambda\right)\right)^{-1}\sigma_2, &\mathrm{for} \,\,\, \lambda\in\partial B\left(-\eta_2\right), \,\, j=0,2, \\[0.5em]
\sigma_2P^{\alpha}\left(-\lambda\right)\left(P^\infty\left(-\lambda\right)\right)^{-1}\sigma_2, &\mathrm{for} \,\,\, \lambda\in\partial B\left(-\alpha\right),\\[0.5em]
\mathbb{I}_2+V_1^E, &\mathrm{for} \,\,\, \lambda\in\left(\eta_1, \alpha\right)\setminus \overline{B(\alpha)},\\[0.5em]
\mathbb{I}_2+V_2^E, &\mathrm{for} \,\,\, \lambda\in\left(-\alpha,  -\eta_1\right)\setminus \overline{B(-\alpha)}.
\end{cases}
\end{gather}
\end{itemize}
\end{RH}

\begin{proposition}[Small Norm Estimate in the Region \(\xi \in (\xi_\mathrm{crit}, \xi_0)\)]
For parameters satisfying \(\beta_2 > -1\), \(\beta_0 > -1\), and \(\beta_1 \geq 0\), the jump matrices \(V^E\) exhibit the following small norm estimates:
\begin{gather}
\begin{aligned}
&\left\|V^E-\mathbb{I}_2\right\|_{L^1\cap L^2\cap L^\infty \left(\mathcal{C}^c\right)}=\mathcal{O}\left(\mathrm{e}^{-\mu t}\right), && \mathrm{as}\,\,\, t\to +\infty  \\[1em]
&\left\|V^E-\mathbb{I}_2\right\|_{L^1\cap L^2\cap L^\infty \left(B\left(\pm\eta_2, \pm\eta_0, \pm\alpha\right)\right)}=\mathcal{O}\left(t^{-1}\right), && \mathrm{as}\,\,\, t\to +\infty
\end{aligned}
\end{gather}
This leads to the estimates
\begin{gather}
E(0; \x, t)=\mathbb{I}_2+\mathcal{O}\left(t^{-1}\right), \quad E^{[1]}(\x, t)=\mathcal{O}\left(t^{-1}\right), \quad \mathrm{as}\,\,\, t\to +\infty,
\end{gather}
where \(E^{[1]}(\x, t) = \lim_{\lambda \to 0} \lambda^{-1} \left(E(\lambda; \x, t) - \mathbb{I}_2\right)\), and \(\mathcal{C}^c = \mathcal{C}_1^c \cup \mathcal{C}_2^c \cup [\eta_1, \alpha] \cup [-\alpha, -\eta_1] \setminus B(\pm\alpha)\).
\end{proposition}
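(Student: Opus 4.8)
The plan is to follow the same template used for the preceding small-norm proposition in the region $\xi \in (\xi_0, -\eta_2^{-2})$, splitting the contour into the part $\mathcal{C}^c$ on which $V^E - \mathbb{I}_2$ decays exponentially and the union of circles $\partial B(\pm\eta_2, \pm\eta_0, \pm\alpha)$ on which it is $\mathcal{O}(t^{-1})$. The only genuinely new feature in the present region is that the Whitham band is now $[\alpha, \eta_2]$ with $\alpha \in (\eta_1, \eta_0)$, so the singular point $\eta_0$ lies \emph{inside} the band and must be resolved by the extra local parametrix $P^{\eta_0}$; accordingly a third family of boundary circles enters the estimate.

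First, on $\mathcal{C}_1^c$ and $\mathcal{C}_2^c$ I would invoke Proposition \ref{p-sign} together with the symmetry $p(\lambda; \xi) = p(\lambda^*; \xi)^* = -p(-\lambda; \xi)$. The last inequality of Proposition \ref{p-sign} forces $\Re(p) < 0$ on $\mathcal{C}_1 \setminus \{\alpha, \eta_2\}$ and, by symmetry, $\Re(p) > 0$ on $\mathcal{C}_2 \setminus \{-\alpha, -\eta_2\}$; conjugation by the bounded matrix $P^\infty$ then gives $\|V^E - \mathbb{I}_2\| = \mathcal{O}(\mathrm{e}^{-\mu t})$ on these lens boundaries. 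On the segments $[\eta_1, \alpha) \setminus \overline{B(\alpha)}$ and $[-\alpha, -\eta_1] \setminus \overline{B(-\alpha)}$ the nontrivial entry of $V_1^E$ (resp. $V_2^E$) carries the factor $\mathrm{e}^{-t(p_+ + p_-)}$, and the first line of Proposition \ref{p-sign} gives $p_+ + p_- > 0$ there, so this factor decays exponentially once $B(\alpha)$ is excised. Here the hypothesis $\beta_1 \ge 0$ is exactly what is required: in this regime $\eta_1$ is \emph{not} a band endpoint and carries no parametrix, so $r$, $f_\pm$, and hence the whole integrand must stay bounded up to $\eta_1$, which $\beta_1 \ge 0$ guarantees, while $\eta_0 \notin [\eta_1, \alpha]$ keeps the factor $|\lambda - \eta_0|^{\beta_0}$ bounded on this segment.

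Next I would establish the $\mathcal{O}(t^{-1})$ matching on the circles. For $\partial B(\pm\eta_2)$ and $\partial B(\pm\alpha)$ this is identical to the preceding proposition: with $\zeta_{\eta_2} = t^2 \mathcal{F}^{\eta_2}$ and $\zeta_\alpha = t^{2/3} \mathcal{F}^{\alpha}$, the large-$\zeta$ expansions of $M^{\mathrm{mB}}$ and $M^{\mathrm{Ai}}$ yield $P^{\eta_2}(P^\infty)^{-1} = \mathbb{I}_2 + \mathcal{O}(t^{-1})$ and $P^{\alpha}(P^\infty)^{-1} = \mathbb{I}_2 + \mathcal{O}(t^{-1})$. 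The new estimate is on $\partial B(\pm\eta_0)$. Since the branch points of $R(\cdot; \xi)$ are $\pm\alpha, \pm\eta_2$ and not $\pm\eta_0$, the outer parametrix $P^\infty$ is analytic and bounded in a neighbourhood of $\eta_0$, apart from its $\mathrm{i}\sigma_1$ jump across the band, which the construction of $P^{\eta_0}$ is designed to absorb. On $\partial B(\eta_0)$ the conformal coordinate $\zeta_{\eta_0} = t \mathcal{F}^{\eta_0}$ has magnitude of order $t$ and is bounded away from the origin, so substituting the large-argument expansions $M^{\mathrm{mb}}(\zeta; \beta_0) = (\mathbb{I}_2 + \mathcal{O}(\zeta^{-1}))\, [\cdots]$ (for $r = r_0$, $\beta_0 \neq 0$) or $M^{\mathrm{CH}}(\zeta; \kappa_0) = (\mathbb{I}_2 + \mathcal{O}(\zeta^{-1}))\, [\cdots]$ (for $r = r_c$) into the definitions \eqref{eta0-1}--\eqref{eta0-2} produces $P^{\eta_0}(P^\infty)^{-1} = \mathbb{I}_2 + \mathcal{O}(t^{-1})$. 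I expect this $\eta_0$-matching to be the main obstacle, since one must verify that the prefactors $A^{\eta_0}_{r\pm}, A^{\eta_0}_{l\pm}$ (or $A^{\eta_0}_\pm$), together with the $\mathrm{e}^{\pm\zeta_{\eta_0}\sigma_3}$ and $d_{l, r}^{\eta_0}$ factors, conspire so that the leading term of the product collapses to $P^\infty (P^\infty)^{-1} = \mathbb{I}_2$; equivalently, that the model problem has been normalized consistently with the definition of $f$ and with the jump of $P^\infty$ across $(\alpha, \eta_2)$. This is precisely the computation that legitimizes the piecewise parametrix, and it requires $\beta_2, \beta_0 > -1$ only (no positivity), because the singularities at $\eta_2$ and $\eta_0$ are resolved by genuine local models.

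Finally, collecting these bounds gives $\|V^E - \mathbb{I}_2\|_{L^1 \cap L^2 \cap L^\infty} = \mathcal{O}(t^{-1})$ over the entire contour, so for $t$ large the associated Beals--Coifman operator $\mathbb{I} - C_{V^E}$ is invertible and $E$ admits a convergent Neumann series, with the Cauchy-integral representation $E(\lambda; \x, t) = \mathbb{I}_2 + \mathcal{O}(t^{-1})$ uniformly for $\lambda$ bounded away from the contour. Evaluating at $\lambda = 0$ and extracting the coefficient of $\lambda$ then yields $E(0; \x, t) = \mathbb{I}_2 + \mathcal{O}(t^{-1})$ and $E^{[1]}(\x, t) = \mathcal{O}(t^{-1})$ as $t \to +\infty$, which is the assertion of the proposition.
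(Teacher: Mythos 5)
Your proposal follows essentially the same route as the paper's proof: exponential decay on $\mathcal{C}^c$ from Proposition \ref{p-sign} plus the symmetry $p(\lambda;\xi)=p(\lambda^*;\xi)^*=-p(-\lambda;\xi)$ (giving $\Re(p)<0$ on $\mathcal{C}_1$, $\Re(p)>0$ on $\mathcal{C}_2$, and $p_++p_->0$ on $[\eta_1,\alpha)$), combined with the $\mathbb{I}_2+\mathcal{O}(t^{-1})$ matching of $P^{\eta_2},P^{\alpha},P^{\eta_0}$ against $P^\infty$ on the boundary circles, followed by standard small-norm theory. Your additional remarks on the role of $\beta_1\ge 0$ and on the $\eta_0$-matching are consistent with (and more explicit than) the paper's terse argument; there is no gap.
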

\begin{proof}
By invoking Proposition \ref{p-sign} alongside the symmetry relation
\begin{gather}
p(\lambda; \xi)=p(\lambda^*; \xi)^*=-p(-\lambda; \xi), \quad \mathrm{for}\,\,\, \lambda\in\mathbb{C}\setminus\left[-\eta_2, \eta_2\right],
\end{gather}
we derive that \(\Re(p) < 0\) for \(\lambda \in \mathcal{C}_1 \setminus \{\alpha, \eta_0, \eta_2\}\), \(\Re(p) > 0\) for \(\lambda \in \mathcal{C}_2 \setminus \{-\alpha, -\eta_0, -\eta_2\}\), \(p_+ + p_- > 0\) for \(\lambda \in [\eta_1, \alpha)\), and \(p_+ + p_- < 0\) for \(\lambda \in (-\alpha, -\eta_1]\). These conditions yield estimates for the jump matrix along \(\mathcal{C}^c\). The estimates on \(B(\pm\eta_2, \pm\eta_0, \pm\alpha)\) arise from the relations:
\begin{gather}
\begin{aligned}
&P^{\eta_2}(\lambda)\left(P^\infty(\lambda)\right)^{-1}=\mathbb{I}_2+\mathcal{O}\left(t^{-1}\right), && \mathrm{for}\,\,\, \lambda\in \partial B(\eta_2),\\[0.5em]
&P^{\alpha}(\lambda)\left(P^\infty(\lambda)\right)^{-1}=\mathbb{I}_2+\mathcal{O}\left(t^{-1}\right), && \mathrm{for}\,\,\, \lambda\in \partial B(\alpha), \\[0.5em]
&P^{\eta_0}(\lambda)\left(P^\infty(\lambda)\right)^{-1}=\mathbb{I}_2+\mathcal{O}\left(t^{-1}\right), && \mathrm{for}\,\,\, \lambda\in \partial B(\eta_0),
\end{aligned}
\end{gather}
Thus, we complete the proof.
\end{proof}

\begin{figure}[!t]
\centering
\includegraphics[scale=0.35]{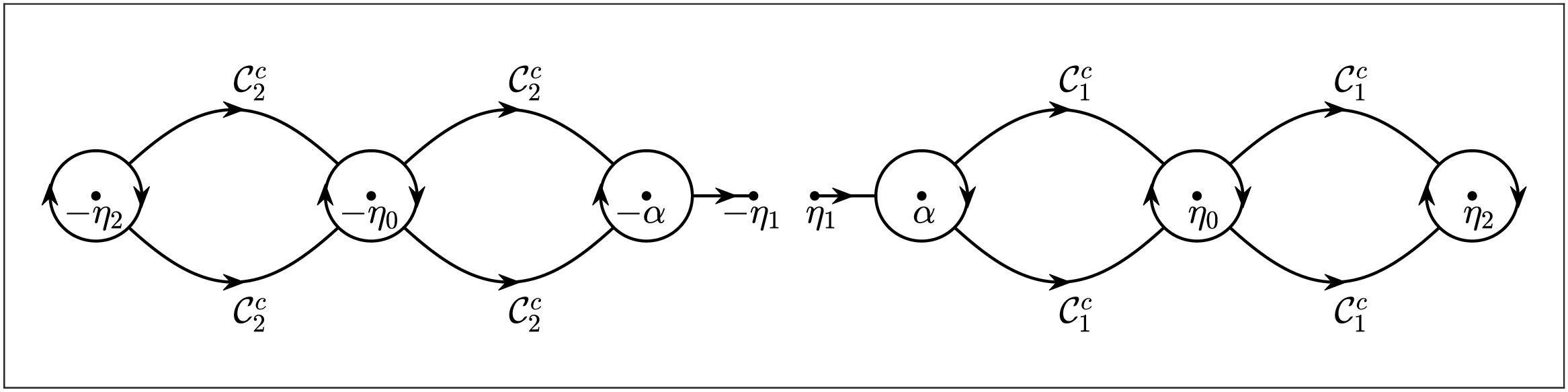}
\caption{Jump contour of the error matrix $E(\lambda; \x, t)$ in the region $\xi_\mathrm{crit}<\xi<\xi_0$}
\label{ErrorE2}
\end{figure}

In the region \(\xi < \xi_\mathrm{crit}\), the global parametrix \(P(\lambda; \x, t)\) is constructed as follows:
\begin{gather}
P(\lambda; \x, t)=
\begin{cases}
P^\infty\left(\lambda; \x, t\right), &\mathrm{for}\,\,\,\lambda\in\mathbb{C}\setminus\overline{B\left(\pm\eta_2, \pm\eta_1, \pm \eta_0\right)},  \\[0.5em]
P^{\eta_j}\left(\lambda; \x, t\right), & \mathrm{for}\,\,\,\lambda\in B\left(\eta_j\right),\,\, j=0,1,2, \\[0.5em]
\sigma_2P^{\eta_j}\left(-\lambda; \x, t\right)\sigma_2, & \mathrm{for}\,\,\,\lambda\in B\left(-\eta_j\right), \,\, j=0,1,2. 
\end{cases}
\end{gather}
The outer parametrix \(P^\infty(\lambda; \x, t) = \left(P_{i,j}^\infty(\lambda; \x, t)\right)_{2\times 2}\) is given by:
\begin{gather}
\begin{aligned}
P^\infty_{1, 1}(\lambda; \x, t)&=\frac{\delta_1+\delta_1^{-1}}{2}
\frac{\vartheta_3\left(w_1+\frac{1}{4}+\frac{\Delta_1}{2\pi\mathrm{i}}; \tau_1\right)}{\vartheta_3\left(w_1+\frac{1}{4}; \tau_1\right)}
\frac{\vartheta_3\left(0; \tau_1\right)}{\vartheta_3\left(\frac{\Delta_1}{2\pi\mathrm{i}}; \tau_1\right)}, \\[0.5em]
P^\infty_{1, 2}(\lambda; \x, t)&=\frac{\delta_1-\delta_1^{-1}}{2}
\frac{\vartheta_3\left(w_1-\frac{1}{4}-\frac{\Delta_1}{2\pi\mathrm{i}}; \tau_1\right)}{\vartheta_3\left(w_1-\frac{1}{4}; \tau_1\right)}
\frac{\vartheta_3\left(0; \tau_1\right)}{\vartheta_3\left(\frac{\Delta_1}{2\pi\mathrm{i}}; \tau_1\right)},\\[0.5em]
P^\infty_{2, 1}(\lambda; \x, t)&=\frac{\delta_1-\delta_1^{-1}}{2}
\frac{\vartheta_3\left(w_1-\frac{1}{4}+\frac{\Delta_1}{2\pi\mathrm{i}}; \tau_1\right)}{\vartheta_3\left(w_1-\frac{1}{4}; \tau_1\right)}
\frac{\vartheta_3\left(0; \tau_1\right)}{\vartheta_3\left(\frac{\Delta_1}{2\pi\mathrm{i}}; \tau_1\right)}, \\[0.5em]
P^\infty_{2, 2}(\lambda; \x, t)&=\frac{\delta_1+\delta_1^{-1}}{2}
\frac{\vartheta_3\left(w_1+\frac{1}{4}-\frac{\Delta_1}{2\pi\mathrm{i}}; \tau_1\right)}{\vartheta_3\left(w_1+\frac{1}{4}; \tau_1\right)}
\frac{\vartheta_3\left(0; \tau_1\right)}{\vartheta_3\left(\frac{\Delta_1}{2\pi\mathrm{i}}; \tau_1\right)}.
\end{aligned}
\end{gather}
For \(\lambda \in (\mathrm{D}_{1}^\pm \cup \mathrm{D}_\mathrm{o}^\pm) \cap B(\eta_1)\), the local parametrix \(P^{\eta_1}(\lambda; \x, t)\) is defined as:
\begin{gather}
P^{\eta_1}(\lambda; \x, t)=P^\infty(\lambda; \x, t)A_{\pm}^{\eta_1}C\zeta_{\eta_1}^{-\sigma_3/4}M^{\mathrm{mB}}(\zeta_{\eta_1}; \beta_1)\mathrm{e}^{-\sqrt{\zeta_{\eta_1}}\sigma_3}\left(A_{\pm}^{\eta_1}\right)^{-1},
\end{gather}
where $\zeta_{\eta_1}=t^2\mathcal{F}^{\eta_1}$, $A_{\pm}^{\eta_1}=\left(\mathrm{e}^{\pi\mathrm{i}/4\mp t\hat{\Omega}_1 /2}/fd^{\eta_1}\right)^{\sigma_3}\sigma_1$,
$\mathcal{F}^{\eta_1}=\left(p\pm\hat{\Omega}_1/2\right)^2$ for $\lambda\in B(\eta_1)\cap\mathbb{C}^\pm$, and $\hat{\Omega}_1=\Omega_1\left(\xi+1/\eta_1\eta_2\right)/4$
The local parametrices \(P^{\eta_2}(\lambda; \x, t)\) and \(P^{\eta_0}(\lambda; \x, t)\) are defined in the respective equations \eqref{parametrix-eta2} and (\ref{eta0-1})-(\ref{eta0-2}).

The Riemann-Hilbert problem for the error matrix \(E(\lambda; \x, t)\) shares the same jump contour as shown in Figure \ref{ErrorE0}. The matrix \(E(\lambda; \x, t)\) is analytic for \(\lambda \in \mathbb{C} \setminus \mathcal{C}_1^c \cup \mathcal{C}_2^c \cup \partial B(\pm\eta_1, \pm\eta_2, \pm\eta_0)\) and normalizes to the identity matrix \(\mathbb{I}_2\). The jump condition is given in \eqref{E-Jump} with the jump matrix defined as follows:
\begin{gather}
V^E=
\begin{cases}
P^\infty (\lambda)\mathcal{U}^{tp}_{f}\left[-ir^{-1}\right] \left(P^\infty (\lambda)\right)^{-1}, &\mathrm{for} \,\,\, \lambda\in\mathcal{C}^c_1,  \\[0.5em]
P^\infty (\lambda)\mathcal{L}^{tp}_{f}\left[-ir^{-1}\right] \left(P^\infty (\lambda)\right)^{-1}, &\mathrm{for} \,\,\, \lambda\in\mathcal{C}^c_2,  \\[0.5em]
P^{\eta_j}(\lambda)\left(P^\infty(\lambda)\right)^{-1}, &\mathrm{for} \,\,\, \lambda\in\partial B\left(\eta_j\right), \,\, j=0,1,2,  \\[0.5em]
\sigma_2P^{\eta_j}(-\lambda)\left(P^\infty\left(-\lambda\right)\right)^{-1}\sigma_2, &\mathrm{for} \,\,\, \lambda\in\partial B\left(-\eta_j\right), \,\, j=0,1,2.
\end{cases}
\end{gather}

\begin{proposition}[Estimate of Small Norm in the Region $\xi < \xi_{\mathrm{crit}}$]
Let $\beta_2, \beta_1, \beta_0 > -1$. The jump matrices $V^E$ satisfy the following small norm estimates:
\begin{gather}
\begin{aligned}
&\left\|V^E-\mathbb{I}_2\right\|_{L^1\cap L^2\cap L^\infty \left(\mathcal{C}_1^c\cup\mathcal{C}_2^c\right)}=\mathcal{O}\left(\mathrm{e}^{-\mu t}\right), && \mathrm{as}\,\,\, t\to +\infty  \\[1em]
&\left\|V^E-\mathbb{I}_2\right\|_{L^1\cap L^2\cap L^\infty \left(B\left(\pm\eta_2, \pm\eta_0, \pm\eta_1 \right)\right)}=\mathcal{O}\left(t^{-1}\right), && \mathrm{as}\,\,\, t\to +\infty
\end{aligned}
\end{gather}
Consequently, we obtain:
\begin{gather}
E(0; \x, t)=\mathbb{I}_2+\mathcal{O}\left(t^{-1}\right), \quad E^{[1]}(\x, t)=\mathcal{O}\left(t^{-1}\right), \quad \text{as}\,\,\, t\to +\infty,
\end{gather}
\end{proposition}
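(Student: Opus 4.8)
The plan is to reproduce, with the appropriate modifications, the template of the two preceding small-norm propositions. The structural novelty of this regime is that $\alpha=\eta_1$, so the left band endpoint coincides with the branch point $\eta_1$ and is equipped with the modified Bessel parametrix $M^{\mathrm{mB}}(\,\cdot\,;\beta_1)$ rather than with an Airy parametrix at an interior point $\alpha$. First I would invoke Proposition \ref{p-sign-1}, which provides $\Re(p_+)=0$ on $[\eta_1,\eta_2]$ and $\lim_{\Im(\lambda)\to 0^+}\partial\Re(p)/\partial\Im(\lambda)<0$ for $\Re(\lambda)\in(\eta_1,\eta_2)$. Combining these with the symmetry $p(\lambda;\xi)=p(\lambda^*;\xi)^*=-p(-\lambda;\xi)$ valid for $\lambda\in\mathbb{C}\setminus[-\eta_2,\eta_2]$ yields the strict sign pattern $\Re(p)<0$ on $\mathcal{C}_1^c$ and $\Re(p)>0$ on $\mathcal{C}_2^c$. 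Since on the lens boundaries the jump matrices carry the factors $\mathrm{e}^{\pm tp}$ conjugated by the bounded $P^\infty$ and the analytic continuation of $r^{-1}$, the off-diagonal entries are controlled by $\mathrm{e}^{-\mu t}$ for some $\mu=\mu(\xi)>0$ away from the disks, which delivers the first estimate simultaneously in $L^1\cap L^2\cap L^\infty(\mathcal{C}_1^c\cup\mathcal{C}_2^c)$.

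Second, on the disk boundaries $\partial B(\pm\eta_j)$ with $j=0,1,2$, I would check that $P^{\eta_j}(\lambda)\left(P^\infty(\lambda)\right)^{-1}=\mathbb{I}_2+\mathcal{O}(t^{-1})$. At $\eta_1$ and $\eta_2$ the Bessel parametrices are matched through the conformal maps $\zeta_{\eta_j}=t^2\mathcal{F}^{\eta_j}$; the large-$\zeta$ error in the normalization of $M^{\mathrm{mB}}$ is $\mathcal{O}(\zeta^{-1/2})$, so on $\partial B(\eta_j)$, where $|\zeta_{\eta_j}|\sim t^2$, this is $\mathcal{O}(t^{-1})$, uniformly for every $\beta_j>-1$. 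At $\eta_0$ the parametrix—$M^{\mathrm{mb}}(\,\cdot\,;\beta_0)$ for $r_0$ or $M^{\mathrm{CH}}(\,\cdot\,;\kappa_0)$ for $r_c$—is matched through $\zeta_{\eta_0}=t\mathcal{F}^{\eta_0}$, whose large-$\zeta$ error is $\mathcal{O}(\zeta^{-1})$, again giving $\mathcal{O}(t^{-1})$ on $\partial B(\eta_0)$. Assembling these two families of contributions produces the second estimate on $B(\pm\eta_2,\pm\eta_0,\pm\eta_1)$.

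With $\|V^E-\mathbb{I}_2\|_{L^1\cap L^2\cap L^\infty}=\mathcal{O}(t^{-1})$ on the full contour, the standard small-norm theory of the associated singular integral equation applies: the Cauchy operator $\mathbb{I}-C_{V^E}$ is invertible for large $t$, the error matrix $E$ exists, is unique, and satisfies $E=\mathbb{I}_2+\mathcal{O}(t^{-1})$ uniformly. Reading off the Cauchy-integral representation of $E$ at $\lambda=0$ and its first coefficient then gives $E(0;\x,t)=\mathbb{I}_2+\mathcal{O}(t^{-1})$ and $E^{[1]}(\x,t)=\mathcal{O}(t^{-1})$, which is the stated consequence.

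The main obstacle is the role of $\eta_1$. In the two earlier regions the restriction $\beta_1\ge 0$ was forced because $[\eta_1,\alpha)$ was a gap carrying the jump $\mathbb{I}_2+V_1^E$ with no local parametrix at $\eta_1$, so the $(\lambda-\eta_1)^{\beta_1}$ singularity of $r$ had to be integrable and bounded. Here $\eta_1=\alpha$ is a genuine band endpoint equipped with $M^{\mathrm{mB}}(\,\cdot\,;\beta_1)$, so the delicate point is to confirm that this parametrix absorbs the singularity for the entire range $\beta_1\in(-1,0)$ while still matching $P^\infty$ to order $t^{-1}$ on $\partial B(\eta_1)$. Concretely, I would verify that the local behavior of $S$ near $\eta_1$ recorded in the $T\mapsto S$ step is exactly the one reproduced by $P^{\eta_1}$, so that $E$ is bounded (indeed $\mathcal{O}(1)$) there and the disk-boundary estimate is not degraded by the $\beta_1$ singularity; this is what relaxes the hypothesis to $\beta_1>-1$ in this regime.
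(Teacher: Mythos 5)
Your proposal is correct and follows essentially the same route as the paper's (very terse) proof: exponential decay on $\mathcal{C}_1^c\cup\mathcal{C}_2^c$ from Proposition \ref{p-sign-1} plus the symmetry $p(\lambda;\xi)=p(\lambda^*;\xi)^*=-p(-\lambda;\xi)$, the $\mathcal{O}(t^{-1})$ matching of $P^{\eta_j}(P^\infty)^{-1}$ on the six disk boundaries, and the standard small-norm inversion. The extra detail you supply — the $\mathcal{O}(\zeta^{-1/2})$ versus $\mathcal{O}(\zeta^{-1})$ error rates against the scalings $\zeta_{\eta_j}\sim t^2$ and $\zeta_{\eta_0}\sim t$, and the observation that the Bessel parametrix at the band endpoint $\eta_1=\alpha$ is what relaxes the hypothesis to $\beta_1>-1$ — is consistent with the paper and merely fills in what the authors leave implicit.
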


\begin{proof}
This result follows from Proposition \ref{p-sign-1} and the symmetry relation:
\begin{gather}
p(\lambda; \xi) = p(\lambda^*; \xi)^* = -p(-\lambda; \xi), \quad \mathrm{for}\,\,\, \lambda \in \mathbb{C} \setminus \left[-\eta_2, \eta_2\right].
\end{gather}
Here, we observe that $\Re(p) < 0$ for $\lambda \in \mathcal{C}_1 \setminus \left\{\eta_1, \eta_0, \eta_2\right\}$ and $\Re(p) > 0$ for $\lambda \in \mathcal{C}_2 \setminus \left\{-\eta_1, -\alpha, -\eta_2\right\}$, leading to the estimates for the jump matrix on $\mathcal{C}_1^c \cup \mathcal{C}_2^c$. The estimates on the boundary $B\left(\pm\eta_2, \pm\eta_0, \pm\eta_1\right)$ are derived from:
\begin{gather}
\begin{aligned}
&P^{\eta_2}(\lambda)\left(P^\infty(\lambda)\right)^{-1} = \mathbb{I}_2 + \mathcal{O}\left(t^{-1}\right), && \mathrm{for}\,\,\, \lambda \in \partial B(\eta_2), \\[1em]
&P^{\eta_0}(\lambda)\left(P^\infty(\lambda)\right)^{-1} = \mathbb{I}_2 + \mathcal{O}\left(t^{-1}\right), && \mathrm{for}\,\,\, \lambda \in \partial B(\eta_0), \\[1em]
&P^{\eta_1}(\lambda)\left(P^\infty(\lambda)\right)^{-1} = \mathbb{I}_2 + \mathcal{O}\left(t^{-1}\right), && \mathrm{for}\,\,\, \lambda \in \partial B(\eta_1).
\end{aligned}
\end{gather}
Thus, the proof is completed.
\end{proof}

By recalling the transformations \( Y(\lambda; \x, t) \mapsto T(\lambda; \x, t) \mapsto S(\lambda; \x, t) \mapsto E(\lambda; \x, t) \) along with the propositions regarding small norm estimates, we conclude:
\begin{gather}
\begin{array}{rl}
u(x, t) =& f_{\pm}^{-2}(0; \xi) \mathrm{e}^{2t g_{\pm}(0; \xi)} \left(P_{\pm}^\infty(0; \x, t)^{-1} P_{\pm}^{\infty, [1]}(\x, t)\right)_{1, 2} + \mathcal{O}\left(|\x|^{-1}\right), \\[1em]
x =& \x + \left(P_{\pm}^\infty(0; \x, t)^{-1} P_{\pm}^{\infty, [1]}(\x, t)\right)_{1, 1} + f_{\pm}^{-1}(0; \xi) f_{\pm}^{[1]}(\xi) - t g_{\pm}^{[1]}(\xi) + \mathcal{O}\left(|\x|^{-1}\right),
\end{array}
\end{gather}
where
\begin{gather}
\begin{array}{rl}
P_{\pm}^{\infty, [1]}(\x, t) = & \d\lim_{\lambda \in \mathbb{C}^\pm \to 0} \lambda^{-1}\left(P^\infty(\lambda; \x, t) - P_\pm^\infty(0; \x, t)\right), \\[1em]
f_{\pm}^{[1]}(\xi) =& \d\lim_{\lambda \in \mathbb{C}^\pm \to 0} \lambda^{-1}\left(f(\lambda; \xi) - f_{\pm}(0; \xi)\right), \\[1em]
g_{\pm}^{[1]}(\xi) =& \d\lim_{\lambda \in \mathbb{C}^\pm \to 0} \lambda^{-1}\left(g(\lambda; \xi) - g_{\pm}(0; \xi)\right).
\end{array}
\end{gather}
A straightforward computation yields \eqref{large-middle} and \eqref{large-left}.

\section{Conclusions and discussions}

In conclusion, we have investigated  the asymptotic characteristics of short-pulse soliton gases, framed within the Riemann-Hilbert problem through the lens of two distinct generalized reflection coefficients. Specifically, we analyze the first reflection coefficient $r_0(\lambda)$
and the second reflection coefficient $r_c(\lambda)$.
A pivotal aspect of our investigation is the construction of the \(g\)-function. The short pulse can be described as a negative flow of the WKI system, exhibiting a phase \(\theta\) that significantly diverges from those observed in the KdV and modified KdV equations. To address the singularity present at the origin, we must develop an appropriate \(g\)-function.

Additionally, we focus on the construction of local parametrices. For our analysis, we employ the Airy parametrix alongside the first type of modified Bessel parametrix to appropriately capture the local behavior near the endpoints. Notably, the local parametrix around the singularity \(\eta_0\) varies depending on the type of reflection coefficient employed. For the first type of generalized reflection coefficient, we utilize the second type of modified Bessel parametrix, while for the second type, we implement the confluent hypergeometric parametrix.

The long-time asymptotic behavior of SP soliton gases defined by these generalized forms, \(r_0\) and \(r_c\), can be determined by adhering to the methodologies presented in this paper, while also incorporating local parametrices around each singularity \(\eta_{0, j}\). Specifically, we employ the second type of modified Bessel parametrix for the first case \(r_0\) and the confluent hypergeometric parametrix for the second case \(r_c\).

Nevertheless, several challenges persist in this research domain. As highlighted by Girotti {\it et al}~\cite{21}, establishing rigorous asymptotics for soliton gases influenced by two nontrivial reflection coefficients remains a complex undertaking. Moreover, the limit process when discrete spectra cluster in disconnected segments of the imaginary axis introduces further complications. For additional significant open problems, we refer to the review by Suret {\it et al}~\cite{20}, which encapsulates various fundamental inquiries prompted by intriguing theoretical and experimental challenges.

\v


\addcontentsline{toc}{section}{Founding}

\v\v\noindent{\bf Founding} \v

\noindent This work of G.Z. was supported by the National Natural Science Foundation of China (Grant No. 12201615).
This work of Z.Y. was supported by the National Natural Science Foundation of China (Grant No.11925108).










\addcontentsline{toc}{section}{References}

\end{document}